\newtheorem{proposition}{Proposition}%[section]
\newtheorem{appendixproposition}{Proposition}[section]
\newtheorem{theorem}{Theorem}
\newtheorem{appendixtheorem}{Theorem}[section]
\newtheorem{appendixlemma}{Lemma}[section]
\newtheorem{corollary}{Corollary}
\theoremstyle{definition}
\newtheorem{definition}{Definition}
\newcommand{\hi}{\mathcal{H}} %Hilbert space H
\newcommand{\ki}{\mathcal{K}} %Hilbert space K
\newcommand{\ket}[1]{|#1\rangle} %ket
\newcommand{\kb}[2]{|#1\rangle\langle#2|} %ketbra
\newcommand{\tr}[1]{\textrm{tr}\left[#1\right]} %trace
\newcommand{\id}{\mathbbm{1}} %identity operator
\newcommand{\A}{\mathsf{A}}%generic observable
\newcommand{\B}{\mathsf{B}}%generic observable
\newcommand{\F}{\mathsf{F}}%generic observable
\newcommand{\G}{\mathsf{G}}%generic joint observable
\newcommand{\Q}{\mathsf{Q}}%sharp observable
\renewcommand{\P}{\mathsf{P}}%sharp observable
\newcommand{\M}{\mathsf{M}}%observable
\newcommand{\I}{\mathsf{I}}%observable
\newcommand{\coh}{{\rm coh}}
\newcommand{\inv}{\mathfrak S}
\begin{document}
\title[]{Amount of quantum coherence needed for measurement incompatibility}
\author{Jukka Kiukas}
\affiliation{Department of Mathematics, Aberystwyth University, Aberystwyth, SY23 3BZ, U.K.}
\author{Daniel McNulty}
\affiliation{Department of Mathematics, Aberystwyth University, Aberystwyth, SY23 3BZ, U.K.}
\author{Juha-Pekka Pellonp\"a\"a}
\affiliation{Department of Physics and Astronomy, University of Turku, FI-20014 Turun yliopisto, Finland}
\begin{abstract} A pair of quantum observables diagonal in the same ``incoherent'' basis can be measured jointly, so \emph{some} coherence is obviously required for measurement incompatibility. Here we first observe that coherence in a single observable is linked to the diagonal elements of any observable jointly measurable with it, leading to a general criterion for the coherence needed for incompatibility. Specialising to the case where the second observable is incoherent (diagonal), we develop a concrete method for solving incompatibility problems, tractable even in large systems by analytical bounds, without resorting to numerical optimisation.
We verify the consistency of our method by a quick proof of the known noise bound for mutually unbiased bases, and apply it to study emergent classicality in the spin-boson model of an $N$-qubit open quantum system. Finally, we formulate our theory in an operational resource-theoretic setting involving ``genuinely incoherent operations'' used previously in the literature, and show that if the coherence is insufficient to sustain incompatibility, the associated joint measurements have sequential implementations via incoherent instruments. 
\end{abstract}
\maketitle

\section{Introduction}
Coherence typically refers to nonzero off-diagonal elements in a quantum state, and is an essential resource for quantum information tasks \cite{baumgratz14,streltsov17,winter16,devicente17, chitambar19}. Coherence in \emph{measurements} (observables) \cite{oszmaniec19, baek20} is equally fundamental, with an obvious relation to the non-commutativity of projective measurements, which has recently been refined \cite{styliaris19, cimini19,bishof19}. It is therefore natural to ask how coherence is related to incompatibility of general observables -- positive operator valued measures (POVMs). Incompatibility is a resource as well \cite{heinosaari15a,chitambar19}, specifically for steering \cite{wolf09, uola15,quintino14,heinosaari15b,kiukas17, kiukas16} and state discrimination \cite{skrzypczyk19,carmeli18, carmeli19a}, and clearly requires non-commutativity, hence coherence.

As usual \cite{streltsov17}, we define coherence relative to a fixed ``incoherent'' basis (there is also a basis-independent approach \cite{designolle20}). Our key observation is the following: while incompatibility of POVMs is not linked to the overall coherence in their matrices, there is an asymmetric entry-wise relation: coherences in \emph{one} POVM are linked to the corresponding diagonal probabilities of any POVM jointly measurable with it. Heuristically, \emph{an observable that sharply distinguishes a pair of basis elements is incompatible with observables detecting coherence between that pair}. An extreme case is any basis observable mutually unbiased \cite{durt10} to the incoherent basis -- it is both complementary and maximally coherent.

We warm up in section \ref{basic} by formalising the above observation into a simple but completely general inequality, the violation of which witnesses incompatibility. Combined with a sufficient condition for incompatibility, this leads to an analytical method for tackling the incompatibility problem, generalising the usual robustness idea \cite{designolle19b,bluhm20}, and easily reproducing the known noise bound for incompatible mutually unbiased bases (MUB) \cite{carmeli12,uola16,designolle19a,carmeli19a,carmeli19b}. In section \ref{cohmatrix} we specialise to the physically motivated setting where measurement coherence is given by a fixed ``pattern matrix'' describing decoherence \cite{breuer02, buscemi05,kayser15}, and subsequently use it to study emergent classicality in the spin-boson model \cite{breuer02,unruh95,palma96}, including the role of decoherence-free subspaces \cite{palma96,lidar14, lidar98, bacon00}.
Unlike existing results on incompatibility in open systems \cite{addis16}, our method works for arbitrary system size. Finally, in Section \ref{operational}, we formulate the idea in general operational terms motivated by resource theory, including \emph{genuinely incoherent operations} \cite{devicente17,helm09,yao17} and introducing \emph{incoherent instruments}, which turn out to provide sequential implementations for any joint measurement in an instance of channel-observable compatibility \cite{heinosaari13,heinosaari14,heinosaari18}. 

\section{General formulation}\label{basic}
Let $\mathcal H$ be a Hilbert space of $\dim \mathcal H=d<\infty$, and $\{|n\rangle\}_{n=1}^d$ its \emph{incoherent basis} \cite{streltsov17}. An \emph{observable} (POVM) $\M$ with a finite outcome set $\Omega$ consists of positive semidefinite (PSD) matrices $\M(i)\geq 0$, for which $\sum_{i\in \Omega} \M(i)=\id$ (the identity matrix). For any POVM $\M$ we define the \emph{entry-wise coherence} $$\coh_{nm}(\M) := \sum_{i\in \Omega} |\langle n|\M(i)|m\rangle|,\text{ for each }n,m.$$ We note that $0\leq \coh_{nm}(\M)\leq 1$, and call $\M$ \emph{maximally coherent} if $\coh_{nm}(\M)= 1$ for all $n\neq m$. We now observe (Appendix \ref{A:coh}) that any maximally coherent $\M$ with $d$ outcomes is mutually unbiased to the incoherent basis $\{|n\rangle\}$, i.e. $\M(i) =|\psi_i\rangle\langle\psi_i|$ with $|\langle\psi_i|n\rangle|^2=d^{-1}$ for all $n,i$. This reflects the importance of MUBs in the context of measurement coherence.

For any $\M$ we let $p^\M_{n}(j) := \langle n|\M(j)|n\rangle$ be the outcome distribution in state $|n\rangle$. The ability of $\M$ to distinguish $|n\rangle$ from $|m\rangle$ can be quantified by $f$-divergences \cite{csiszar04} between $p^\M_{n}$ and $p^\M_{m}$; we use the \emph{Hellinger distances} \cite{pollard02}
$$d_{nm}^2(\M) :=1-\sum_{j} \sqrt{p^\M_{n}(j)p^\M_{m}(j)}.$$
Finally, an observable $\M$ is \emph{jointly measurable} with an observable $\F$, if there is a \emph{joint observable} $\G=(\G(i,j))_{(i,j)}$ with $\sum_j \G(i,j)=\M(i)$ for all $i$, and $\sum_i \G(i,j)=\F(j)$ for all $j$; otherwise $\M$ and $\F$ are \emph{incompatible} \cite{QM}.

\subsection{Joint measurability criteria}
The following observation provides a simple tradeoff between distinguishability and coherence, under the assumption of joint measurability:
\begin{proposition} \label{basicprop}
If $\M$ and $\F$ are jointly measurable, then
\begin{equation}\label{tradeoff}
\coh_{nm}(\M) +d_{nm}^2(\F)\leq 1 \text{ for all } n,m.
\end{equation}
\end{proposition}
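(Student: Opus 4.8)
The plan is to unfold the definitions and reduce the claimed tradeoff \eqref{tradeoff} to the single inequality
\begin{equation*}
\coh_{nm}(\M)\leq \sum_j \sqrt{p^\F_n(j)\,p^\F_m(j)},
\end{equation*}
since by definition $d_{nm}^2(\F)=1-\sum_j \sqrt{p^\F_n(j)p^\F_m(j)}$, so the right-hand side is exactly $1-d_{nm}^2(\F)$. To get at this, I would use the joint observable $\G$ with marginals $\sum_j\G(i,j)=\M(i)$ and $\sum_i\G(i,j)=\F(j)$, write $\langle n|\M(i)|m\rangle=\sum_j\langle n|\G(i,j)|m\rangle$, and apply the triangle inequality to obtain $|\langle n|\M(i)|m\rangle|\leq\sum_j|\langle n|\G(i,j)|m\rangle|$, hence $\coh_{nm}(\M)\leq\sum_{i,j}|\langle n|\G(i,j)|m\rangle|$.

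Next I would exploit positivity of each $\G(i,j)$. For any PSD matrix $A$ one has the Cauchy--Schwarz-type bound $|\langle n|A|m\rangle|\leq\sqrt{\langle n|A|n\rangle}\sqrt{\langle m|A|m\rangle}$ (write $A=A^{1/2}A^{1/2}$ and apply Cauchy--Schwarz to the inner product $\langle A^{1/2}n|A^{1/2}m\rangle$). Applying this to $A=\G(i,j)$ termwise gives $\coh_{nm}(\M)\leq\sum_{i,j}\sqrt{\langle n|\G(i,j)|n\rangle}\,\sqrt{\langle m|\G(i,j)|m\rangle}$. Now comes the step that carries the argument: for each fixed $j$, apply the ordinary (scalar) Cauchy--Schwarz inequality to the sum over $i$, which yields
\begin{equation*}
\sum_i \sqrt{\langle n|\G(i,j)|n\rangle}\,\sqrt{\langle m|\G(i,j)|m\rangle}\leq \sqrt{\textstyle\sum_i\langle n|\G(i,j)|n\rangle}\;\sqrt{\textstyle\sum_i\langle m|\G(i,j)|m\rangle}.
\end{equation*}
Using the marginal condition $\sum_i\G(i,j)=\F(j)$, the two factors on the right are precisely $\sqrt{p^\F_n(j)}$ and $\sqrt{p^\F_m(j)}$; summing over $j$ then gives the desired bound, completing the proof.

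The argument is essentially a two-step Cauchy--Schwarz: an operator version applied to each $\G(i,j)$, followed by a scalar version applied to the marginalisation over $i$. I do not expect any real obstacle here — the only thing to be careful about is keeping track of which index is being summed at each stage (the coherences of $\M$ involve a sum over $i$, whereas the Hellinger affinity of $\F$ involves a sum over $j$), and making sure the positivity of the joint POVM elements is used, since the operator Cauchy--Schwarz step fails without it. It is also worth noting that the bound is saturated in the trivial case where $\M$ is itself diagonal in the incoherent basis, which is a useful sanity check.
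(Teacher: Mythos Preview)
Your proof is correct and follows essentially the same approach as the paper's: triangle inequality to bound $\coh_{nm}(\M)$ by $\sum_{i,j}|\langle n|\G(i,j)|m\rangle|$, the operator Cauchy--Schwarz inequality on each PSD $\G(i,j)$, and then the scalar Cauchy--Schwarz inequality on the sum over $i$ to collapse to $\sqrt{p^\F_n(j)p^\F_m(j)}$. Your closing ``sanity check'' about saturation when $\M$ is diagonal is not quite right (for $n\neq m$ the left-hand side is then $d_{nm}^2(\F)$, which need not equal $1$), but this does not affect the argument.
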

\begin{proof}
We have ${\rm coh}_{nm}(\M)\leq \sum_{i,j} |\langle n|\G(i,j)|m \rangle|$ for any joint POVM $\G$ of $\M$ and $\F$. But $\G(i,j)$ is PSD, and hence $|\langle n|\G(i,j)|m \rangle|\leq \sqrt{p_n^\G(i,j)p_m^\G(i,j)}$ so $\sum_i |\langle n|\G(i,j)|m \rangle|\leq \sqrt{\sum_i p_n^\G(i,j)\sum_i p_m^\G(i,j)}=\sqrt{p_n^\F(j)p_m^\F(j)}$ by the Schwarz inequality.
\end{proof}
Hence, $\M$ and $\F$ are incompatible if \eqref{tradeoff} is violated for at least one pair $n,m$ -- the result is an \emph{upper} bound for the coherence needed for incompatibility. The interpretation is that coherence between $|n\rangle$ and $|m\rangle$ cannot be precisely detected by a measurement capable of distinguishing these states. In particular, if $\coh_{nm}(\M)=1$ then $\M$ is incompatible with any $\F$ having $p_n^\F\neq p_m^\F$.

Necessary conditions for incompatibility require finding joint observables, equivalent to hidden variable models for quantum steering \cite{wolf09, uola15,quintino14,heinosaari15b,kiukas17, kiukas16}. This is hard to tackle analytically, and often restricted to single qubits or highly symmetric cases. Surprisingly, we now obtain a very general result using the Schur product theorem \cite{paulsen02}, which states that the entry-wise (Hadamard / Schur) product $A*B$ of PSD matrices $A$ and $B$ is also PSD. We call an observable $\P$ \emph{incoherent} if $\P(i)=\sum_{n=1}^d p^{\P}_n(i)|n\rangle\langle n|$ for all $i\in \Omega_\P$, and define a matrix $\inv(\P)$ by $\inv_{nm}(\P) =(1-d^2_{nm}(\P))^{-1}$ if $d_{nm}^2(\mathsf P)<1$ for all $n,m$.
\begin{proposition}\label{basicprop2} If $\P$ is incoherent and $\inv(\P) * \M(i)\geq 0$ for all $i\in \Omega_\M$, then $\M$ and $\P$ are jointly measurable.
\end{proposition}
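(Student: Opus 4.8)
The plan is to exhibit an explicit joint observable built from the hypothesis together with the Schur product theorem. First I would record the elementary reformulation of $\inv(\P)$: since $1-d_{nm}^2(\P)=\sum_j\sqrt{p^\P_n(j)p^\P_m(j)}$, the number $\inv_{nm}(\P)$ is the reciprocal of the Hellinger affinity between $p^\P_n$ and $p^\P_m$; on the diagonal $d_{nn}^2(\P)=1-\sum_j p^\P_n(j)=0$, so $\inv_{nn}(\P)=1$. Then, for each outcome $j\in\Omega_\P$, introduce the vector $\ket{v_j}=\sum_{n=1}^d\sqrt{p^\P_n(j)}\,\ket{n}$ and the rank-one PSD matrix $D_j:=\kb{v_j}{v_j}$, with entries $(D_j)_{nm}=\sqrt{p^\P_n(j)p^\P_m(j)}$; note $\sum_j D_j$ has entries $\sum_j\sqrt{p^\P_n(j)p^\P_m(j)}=1-d_{nm}^2(\P)=\inv_{nm}(\P)^{-1}$, while $\inv(\P)*\id=\id$ by the diagonal remark above.

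Next I would define $\G(i,j):=\bigl(\inv(\P)*\M(i)\bigr)*D_j$ for $(i,j)\in\Omega_\M\times\Omega_\P$ and verify it is a joint observable for $\M$ and $\P$. Positivity is immediate: $\inv(\P)*\M(i)\geq 0$ by assumption and $D_j\geq 0$, so the Schur product theorem gives $\G(i,j)\geq 0$. Bilinearity of the Schur product then yields, entrywise, $\sum_j\G(i,j)_{nm}=\inv_{nm}(\P)\,\M(i)_{nm}\,(1-d_{nm}^2(\P))=\M(i)_{nm}$, i.e.\ $\sum_j\G(i,j)=\M(i)$; and $\sum_i\G(i,j)=(\inv(\P)*\id)*D_j=\id*D_j=\sum_n p^\P_n(j)\kb{n}{n}=\P(j)$, using $\sum_i\M(i)=\id$ and that $\P$ is incoherent. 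Finally $\sum_{i,j}\G(i,j)=\sum_j\P(j)=\id$, so $\G$ is a genuine POVM and the marginals match, proving joint measurability.

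The only real step of substance is guessing the ansatz. The "naive" correlating matrices $D_j$ by themselves give the correct $\P$-marginal but reproduce $\M(i)$ only after entrywise multiplication by the affinity matrix with entries $1-d_{nm}^2(\P)$; the device is to pre-multiply (entrywise) by $\inv(\P)$ to cancel precisely those weights. The point, and the place where the structure of the problem is used, is that this correction is invisible to the $\P$-marginal because $\inv(\P)$ has unit diagonal, and harmless for positivity because $\inv(\P)*\M(i)$ is PSD by hypothesis and Schur-multiplying by the PSD matrices $D_j$ preserves positivity. Everything else is routine bookkeeping with Hadamard products, so I do not expect any technical obstacle once the ansatz is in place.
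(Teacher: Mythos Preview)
Your proof is correct and is essentially identical to the paper's: the matrices $D_j$ with entries $\sqrt{p^\P_n(j)p^\P_m(j)}$ are exactly the paper's $C^\P(j)$, and the joint observable $\G(i,j)=\inv(\P)*\M(i)*D_j$ together with the Schur product theorem is precisely the paper's construction. The only differences are cosmetic (you make the rank-one structure of $D_j$ explicit via $\ket{v_j}$ and add a paragraph of motivation), but the mathematical content is the same.
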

\begin{proof} Define $C^\P(j)\geq 0$ by $c_{nm}^\P(j) =\sqrt{p_n^{\P}(j)p^{\P}_m(j)}$. Then $\G(i,j) := \inv(\P) *\M(i) * C^\P(j)\geq 0$ by the assumption and the Schur product theorem. But $\sum_j \G(i,j) = \M(i)$, and $\sum_i \G(i,j) = \inv(\P)* \id*C^\P(j)=\P(j)$ as $\P$ is incoherent. Hence $\G$ is a joint observable for $\M$ and $\P$.
\end{proof}
To appreciate how this result describes the coherence needed for incompatibility, note that the diagonal elements of $\mathsf P$ enter into the matrix $\mathfrak S(\mathsf P)$, while the positivity condition describes the (lack of) coherence in $\mathsf M$. More specifically, when the coherences ${\rm coh}_{nm}(\mathsf \M)$ are small enough relative to $1-d^2_{nm}(\mathsf P)$, then the off-diagonal elements $\langle n|\mathsf M(i)|m\rangle (1-d^2_{nm}(\mathsf P))^{-1}$ of the matrix $\mathfrak S(\mathsf P) *\mathsf M(i)$ are small relative to the unit diagonal, and hence (e.g. by the Sylvester determinant criterion), the positivity condition $\mathfrak S(\mathsf P) *\mathsf M(i)$ of Prop. \ref{basicprop2} will hold. This ensures the existence of a joint observable, showing that the (collective) coherence in $\mathsf M$ is \emph{not} enough for incompatibility. In examples with suitable parametrisation, this then translates into a \emph{lower bound} for the coherence needed for incompatibility.

\subsection{Basic examples} We now link the above results to the noise bounds for incompatibility \cite{designolle19a,designolle19b,chitambar19, bluhm20}: consider 
\begin{equation}\label{whitenoise}
\P_\alpha(j) =  \alpha |j\rangle\langle j| + (1-\alpha) d^{-1} \id, \quad 0\leq \alpha \leq 1.
\end{equation}
Let $\alpha_{\M}$ be the minimal $\alpha$ for which a given observable $\M$ is incompatible with $\P_\alpha$; this is a way of quantifying \emph{incompatibility-robustness} of $\P_1$ relative to $\M$ \cite{designolle19b}. Now $\P_\alpha$ has only one Hellinger distance; $d^2_{nm}(\P_\alpha) = 1-g_d(\alpha)$ and $\inv_{nm}(\P_\alpha) =1/g_d(\alpha)$ for $n\neq m$, with 
\begin{equation*}
g_d(\alpha) = \tfrac 1d \left((d-2)(1-\alpha) + 2\sqrt{1-\alpha}\sqrt{1+(d-1)\alpha}\right)\,.
\end{equation*}
Here $\alpha\mapsto d^2_{nm}(\P_\alpha)$ is monotone increasing, setting up a correspondence between $\alpha_\M$ and the Hellinger distance, the latter providing a link to coherence via Prop. \ref{basicprop} and Prop. \ref{basicprop2}:
\begin{equation*}\label{hellingerbound}
\max_{n, m\neq n} \coh_{nm}(\M)\leq g_d(\alpha_\M) \leq \max_{n,i} \sum_{m\neq n} \frac{|\langle n|\M(i)|m\rangle|}{p_n^\M(i)}.
\end{equation*}
The upper bound follows from Prop. \ref{basicprop2}, as $\inv(\P_\alpha) * \M(i)$ is \emph{diagonally dominant} \cite{horn12}, hence PSD, if $g_d(\alpha)$ exceeds this bound. As a simple example take a qubit with $\sigma_z$-basis as the incoherent basis. Then $\P_\alpha(0)=\frac 12(\id + \alpha \sigma_z)$, so \emph{any} observable $\M$ with $\coh_{01}(\M)>g_2(\alpha) = \sqrt{1-\alpha^2}$ is incompatible with $\P_\alpha$. If $\M$ is binary with $\M(0)=\frac 12(\id + {\bf m} \cdot \sigma)$, we have $$1\leq \frac{g_2(\alpha_\M)^2}{m_1^2+m_2^2}\leq\max \left\{\frac{1}{1+m_3},\frac{1}{1-m_3}\right\}.$$ We can check this using the standard qubit criterion \cite{busch86}, according to which $g_2(\alpha_\M)^2=(m_1^2+m_2^2)/(1-m_3^2)$; hence our bounds are exact iff $m_3=0$.

Next we obtain a quick proof for the known noise bound for the incompatibility of MUBs \cite{carmeli12,uola16,designolle19a,carmeli19a,carmeli19b}: 
\begin{proposition}\label{noisymubs} Let $\M(i) = \lambda \Q_0(i) +(1-\lambda) d^{-1}\id$ where $\Q_0$ is mutually unbiased to the incoherent basis. Then $\alpha_{\M}=g_d(\lambda)$ for any $\lambda\in [0,1]$.
\end{proposition}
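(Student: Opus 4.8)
The plan is to pin down exactly the range of $\alpha$ for which $\M$ and $\P_\alpha$ are jointly measurable, using Proposition~\ref{basicprop} for necessity and Proposition~\ref{basicprop2} for sufficiency, and then to invoke an elementary algebraic property of $g_d$ to convert the answer into the claimed form. Write $\Q_0(i)=\kb{\psi_i}{\psi_i}$, $i=1,\dots,d$, with $|\langle\psi_i|n\rangle|^2=1/d$ for all $i,n$. First I would record the data for $\M$: since the white-noise term is diagonal, $\langle n|\M(i)|m\rangle=\lambda\,\langle n|\psi_i\rangle\langle\psi_i|m\rangle$ when $n\neq m$, so $|\langle n|\M(i)|m\rangle|=\lambda/d$ and, summing the $d$ outcomes, $\coh_{nm}(\M)=\lambda$ for every $n\neq m$. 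Moreover $p^\M_n(i)=\lambda|\langle n|\psi_i\rangle|^2+(1-\lambda)/d=1/d$ for all $n,i$, i.e.\ $\M$ is flat in the incoherent basis and $\kb{\psi_i}{\psi_i}$ has constant diagonal $1/d$.

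Next, the two-sided characterisation. Since $d^2_{nm}(\P_\alpha)=1-g_d(\alpha)$ for $n\neq m$, Proposition~\ref{basicprop} shows that joint measurability of $\M$ and $\P_\alpha$ forces $\lambda+1-g_d(\alpha)\leq1$, i.e.\ $g_d(\alpha)\geq\lambda$. For the converse I would apply Proposition~\ref{basicprop2} with the incoherent observable $\P_\alpha$, for which $\inv_{nn}(\P_\alpha)=1$ and $\inv_{nm}(\P_\alpha)=g_d(\alpha)^{-1}$ $(n\neq m)$. Using the flat diagonal from the previous step, the Schur product is
\begin{equation*}
\inv(\P_\alpha)*\M(i)=\Big(1-\tfrac{\lambda}{g_d(\alpha)}\Big)\tfrac1d\,\id+\tfrac{\lambda}{g_d(\alpha)}\,\kb{\psi_i}{\psi_i},
\end{equation*}
a sum of two PSD matrices as soon as $\lambda\leq g_d(\alpha)$; Proposition~\ref{basicprop2} then yields a joint observable. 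Hence $\M$ and $\P_\alpha$ are incompatible precisely when $g_d(\alpha)<\lambda$.

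It remains to rewrite $\{g_d(\alpha)<\lambda\}$ as $\{\alpha>g_d(\lambda)\}$, for which it suffices that $g_d$ is a continuous strictly decreasing involution of $[0,1]$; monotonicity is already noted in the text, and for $g_d\circ g_d=\mathrm{id}$ I would set $u=1-\alpha$, $w=1+(d-1)\alpha$ (so $(d-1)u+w=d$), check that $d\,(1-g_d(\alpha))=(\sqrt w-\sqrt u)^2$, and then, with $u'=1-g_d(\alpha)$ and $w'=1+(d-1)g_d(\alpha)$, that $\sqrt{u'}=(\sqrt w-\sqrt u)/\sqrt d$ and $\sqrt{w'}=((d-1)\sqrt u+\sqrt w)/\sqrt d$, whence $\sqrt{w'}-\sqrt{u'}=\sqrt d\,\sqrt u$ and $1-g_d(g_d(\alpha))=(\sqrt{w'}-\sqrt{u'})^2/d=1-\alpha$. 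Given this, $g_d(\alpha)<\lambda\iff\alpha>g_d(\lambda)$, so $\M$ and $\P_\alpha$ are incompatible exactly for $\alpha>g_d(\lambda)$, i.e.\ $\alpha_\M=g_d(\lambda)$ (at $\alpha=g_d(\lambda)$ one has $g_d(\alpha)=\lambda$, so they are still jointly measurable, and $g_d(\lambda)$ is the infimum of the incompatible range). The corner cases $\lambda\in\{0,1\}$, where $\M$ is the trivial observable $d^{-1}\id$ or the MUB $\Q_0$, match this via $g_d(0)=1$ and $g_d(1)=0$.

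I expect the only genuinely delicate point to be the involution identity $g_d\circ g_d=\mathrm{id}$, where the nested radicals have to be handled carefully (the key simplifications being $(d-1)u+w=d$ and $d^2-(d-1)(\sqrt w-\sqrt u)^2=((d-1)\sqrt u+\sqrt w)^2$); everything else reduces to direct substitution into Propositions~\ref{basicprop} and~\ref{basicprop2} once one notes that $\kb{\psi_i}{\psi_i}$ has constant diagonal because $\Q_0$ is unbiased.
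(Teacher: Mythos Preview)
Your proof is correct and follows essentially the same approach as the paper: both use Proposition~\ref{basicprop} (with $\coh_{nm}(\M)=\lambda$) for necessity and Proposition~\ref{basicprop2} for sufficiency, then invoke $g_d^{-1}=g_d$ to finish. The paper phrases the sufficiency step via the Hadamard factorisation $\M=C*\Q_0$ and positivity of $\inv(\P_\alpha)*C$, whereas you compute $\inv(\P_\alpha)*\M(i)$ directly as a convex combination of $d^{-1}\id$ and $\kb{\psi_i}{\psi_i}$; these are equivalent, and your explicit verification of the involution identity (which the paper merely asserts) is a welcome addition.
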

\begin{proof}
The crucial observation is that $\M = C*\Q_0$, where $C$ has unit diagonal and $c_{nm} =\lambda$ for $n\neq m$. Hence $\coh_{nm}(\M) =\lambda$, so $\lambda \leq g_d(\alpha)$ by Prop.~\ref{basicprop} if $\M$ and $\P_\alpha$ are jointly measurable. Conversely, if $\lambda \leq g_d(\alpha)$ then $\inv(\P_\alpha) *C\geq 0$, so $\inv(\P_\alpha) * \M(i)= \inv(\P_\alpha) *C*\Q_0(i)\geq 0$, so $\P_\alpha$ and $\M$ are jointly measurable by Prop.~\ref{basicprop2}. Therefore $\alpha_{\M}=g_d(\lambda)$ as $g_d^{-1}=g_d$ on $[0,1]$.
\end{proof}
We note that $g_d(\lambda)$ is precisely the bound obtained in the cited literature by other methods. We will return to this example later.

\section{Incompatibility due to a coherence pattern}\label{cohmatrix}
Here we specialise to the physically relevant class of observables, introducing first their general structure, and then focusing on the spin-boson model.

\subsection{General consideration of coherence matrices}\label{gencoh}

Starting with a brief motivation, we consider again the above qubit example: when $m_3=0$, we have $\M = C * \Q_0$ where $C$ is a PSD matrix with $c_{00}=c_{11}=1$, $c_{01} =m_1-i m_2$, and $\Q_0(0)=\frac 12(\id +\sigma_x)$. Note that $\coh_{01}(\Q_0)=1$ (maximal coherence), and $\M$ is a ``noisy'' version of $\Q_0$ obtained via \emph{pure decoherence} \cite{buscemi05,helm09,breuer02}. The same structure appears in the case of MUBs (proof of Prop. \ref{noisymubs}), in any dimension. Accordingly, we call any PSD matrix $C$ with unit diagonal a \emph{coherence (pattern) matrix}, and consider noisy observables $\M(i) = C*\Q(i)$. In the special case where $\Q_0$ is maximally coherent, we have $\coh_{nm}(\M) =|c_{nm}|$ so the coherence in $\M$ is ``imprinted'' by $C$. This setting captures a remarkable interplay of maximal incompatibility and coherence. Indeed, in Appendix \ref{A:mainthm} we use dilation theory to prove the following result:
\begin{theorem}\label{mubthm}
Let $C$ be a coherence matrix and $\Q_0$ a maximally coherent observable. If an incoherent observable $\P$ is jointly measurable with $C*\Q_0$, it is jointly measurable with $C* \Q$ for \emph{every} observable $\Q$.
\end{theorem}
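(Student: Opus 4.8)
The plan is to factor the Schur multiplication $X\mapsto C*X$ through a dilation and thereby reduce the claim to a statement about $\P$ alone. Since $C$ is positive semidefinite with unit diagonal, write $c_{nm}=\langle u_n|u_m\rangle$ for unit vectors $u_1,\dots,u_d$ spanning a Hilbert space $\hv$ with $\dim\hv=\rank C$, and let $W$ be the isometry from $\hi$ to $\hi\otimes\hv$ given by $W|n\rangle=|n\rangle\otimes|u_n\rangle$. A one-line computation gives $C*X=W^*(X\otimes\id)W$ for every operator $X$ on $\hi$, so in particular $C*\Q(i)=W^*(\Q(i)\otimes\id)W$ for any observable $\Q$: thus $C*\Q$ is the compression to $\ran W$ of the observable $\Q\otimes\id$ on $\hi\otimes\hv$.

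The reduction I would aim for is: \emph{if $\P$ is jointly measurable with $C*\Q_0$, then there is an observable $\R$ on $\hv$ with $p^\P_n(j)=\langle u_n|\R(j)|u_n\rangle$ for all $n,j$} --- equivalently, since $\P$ is incoherent, $\P(j)=W^*(\id\otimes\R(j))W$. Granting this, the theorem is immediate: for an arbitrary observable $\Q$ put $\G(i,j):=W^*(\Q(i)\otimes\R(j))W\ge0$; then $\sum_j\G(i,j)=W^*(\Q(i)\otimes\id)W=C*\Q(i)$ and $\sum_i\G(i,j)=W^*(\id\otimes\R(j))W=\P(j)$, so $\G$ is a joint observable for $C*\Q$ and $\P$.

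To produce $\R$ I would use dilation theory. A preliminary lemma: every maximally coherent $\Q_0$ is rank one of ``MUB type'', $\Q_0(i)=\lambda_i|\phi_i\rangle\langle\phi_i|$ with $|\langle n|\phi_i\rangle|^2=1/d$, $\lambda_i\ge0$ and $\sum_i\lambda_i=d$ (use that a positive semidefinite matrix all of whose $2\times2$ principal minors vanish has rank at most one, then match diagonals). Dilate $\Q_0$ to the canonical projection-valued measure $\hat\Q_0(i)=|e_i\rangle\langle e_i|$ on $\complex^k$, $k=|\Omega_{\Q_0}|$, via the isometry $V_0=\sum_i\sqrt{\lambda_i}|e_i\rangle\langle\phi_i|$, and compose: the isometry $\tilde V:=(V_0\otimes\id)W$ satisfies $C*\Q_0(i)=\tilde V^*(\hat\Q_0(i)\otimes\id)\tilde V$, and because $\langle\phi_i|n\rangle\neq0$ the vectors $(\hat\Q_0(i)\otimes\id)\tilde V|n\rangle$ are nonzero multiples of $|e_i\rangle\otimes|u_n\rangle$ and hence span $\complex^k\otimes\hv$, so $(\complex^k\otimes\hv,\tilde V,\hat\Q_0\otimes\id)$ is the \emph{minimal} Naimark dilation of $C*\Q_0$. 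I then invoke the dilation characterisation of joint measurability: joint measurability of $\P$ with a POVM $\M$ forces the existence of an observable $\hat\P$ on the minimal Naimark space of $\M$ that commutes with the dilating projection-valued measure and satisfies $\P(j)=\tilde V^*\hat\P(j)\tilde V$. Here the dilating projection-valued measure is $|e_i\rangle\langle e_i|\otimes\id$, so commutation gives $\hat\P(j)=\sum_i|e_i\rangle\langle e_i|\otimes R_i(j)$ with each $(R_i(j))_j$ an observable on $\hv$; inserting this into $p^\P_n(j)=\langle n|\tilde V^*\hat\P(j)\tilde V|n\rangle$ and using $|\langle\phi_i|n\rangle|^2=1/d$ collapses the $i$-sum, yielding $p^\P_n(j)=\langle u_n|\R(j)|u_n\rangle$ with $\R(j):=\tfrac1d\sum_i\lambda_i R_i(j)$ a legitimate observable on $\hv$.

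I expect the main obstacle to be the dilation-theoretic lifting statement used above --- that joint measurability with $\M$ is equivalent to a commuting lift to the \emph{minimal} Naimark space of $\M$ --- together with the verification that the explicit $\tilde V$, $\hat\Q_0\otimes\id$ built from $W$ and the MUB frame is indeed minimal (this is exactly where maximal coherence, i.e. $\langle\phi_i|n\rangle\neq0$, is needed; otherwise the compression loses information and no lift need exist). The lifting is proved by an operator Radon--Nikodym argument: for each $i$, $0\le\G(i,j)\le C*\Q_0(i)=\tilde V^*(\hat\Q_0(i)\otimes\id)\tilde V$ with $\hat\Q_0(i)\otimes\id$ a projection, together with density of the image of $\hi$ under $(\hat\Q_0(i)\otimes\id)\tilde V$ in the corresponding spectral block (this is where minimality is used), yields operators $0\le T_{ij}\le\id$ on that block with $\G(i,j)=\tilde V^*(\hat\Q_0(i)\otimes\id)T_{ij}(\hat\Q_0(i)\otimes\id)\tilde V$ and $\sum_j T_{ij}=\id$ there; one then sets $\hat\P(j)=\bigoplus_i T_{ij}$. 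Everything else --- the Gram representation of $C*(\cdot)$, the classification of maximally coherent observables, and the assembly of $\G$ --- is routine; note this route avoids Proposition~\ref{basicprop2} and hence the requirement that $\inv(\P)$ be defined, so it still applies when $C$ has vanishing off-diagonal entries.
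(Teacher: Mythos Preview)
Your proof is correct and follows essentially the same route as the paper's: you build the same minimal Naimark dilation of $C*\Q_0$ (your $\tilde V$ equals the paper's $V$ once one identifies $\lambda_i=p(i)d$), invoke the same commuting-lift characterisation of joint measurability, and extract the same ancilla POVM (your $\R=\tfrac1d\sum_i\lambda_iR_i$ is the paper's $\A=\sum_ip(i)\F_i$), with the final joint observable $W^*(\Q(i)\otimes\R(j))W$ being exactly the paper's $C(j)*\Q(i)$ written in dilated form. The only cosmetic difference is that you sketch the Radon--Nikodym proof of the lifting theorem rather than citing it, and you phrase the conclusion as a factorisation $\P(j)=W^*(\id\otimes\R(j))W$ rather than via the GII matrices $c_{nm}(j)=\langle u_n|\R(j)|u_m\rangle$.
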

Hence, any maximally coherent observable (such as one mutually unbiased to the incoherent basis) is also maximally incompatible in this setting (which is not true in general \cite{designolle19a}). Incompatibility arising from coherence is now described as follows:
\begin{definition}\label{def1}
For a coherence matrix $C$, we denote by $\mathcal C_{C}$ the set of incoherent observables $\P$ jointly measurable with $C*\Q$ for every observable $\Q$. If $\P\notin \mathcal C_C$ we say that $\P$ \emph{has incompatibility due to the coherence (pattern) $C$}.
\end{definition}

Note that to find out whether a given $\P$ lies in $\mathcal C_C$, it suffices (by Thm.~\ref{mubthm}) to check whether $\P$ is jointly measurable with $C*\Q_0$ for some \emph{fixed} maximally coherent observable $\Q_0$. The general results of the preceding section have the following useful corollaries:
\begin{corollary}\label{cohcor}
Let $C$ be a coherence matrix. Any $\P\in \mathcal C_{C}$ has $d_{nm}^2(\P)\leq 1-|c_{nm}|$ for all  $n,m$.
\end{corollary}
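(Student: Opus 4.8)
The plan is to reduce the statement directly to Proposition~\ref{basicprop}. By Definition~\ref{def1}, an observable $\P\in\mathcal C_C$ is jointly measurable with $C*\Q$ for \emph{every} observable $\Q$; in particular we may fix a maximally coherent observable $\Q_0$ (for instance one mutually unbiased to the incoherent basis, which exists in every finite dimension, or indeed any $\Q_0$ with $\coh_{nm}(\Q_0)=1$ for $n\neq m$) and conclude that $\P$ is jointly measurable with $\M:=C*\Q_0$. Alternatively, Theorem~\ref{mubthm} already permits testing membership in $\mathcal C_C$ against a single fixed maximally coherent $\Q_0$, so that same $\Q_0$ can be used throughout.

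The next step is to evaluate $\coh_{nm}(\M)$. Since the Schur product acts entrywise, $\langle n|\M(i)|m\rangle = c_{nm}\langle n|\Q_0(i)|m\rangle$, hence $\coh_{nm}(\M)=\sum_i|c_{nm}|\,|\langle n|\Q_0(i)|m\rangle| = |c_{nm}|\,\coh_{nm}(\Q_0)$. For $n\neq m$, maximal coherence of $\Q_0$ gives $\coh_{nm}(\Q_0)=1$; for $n=m$ both factors equal $1$ (unit diagonal of $C$, and $\sum_i\langle n|\Q_0(i)|n\rangle=1$). Thus $\coh_{nm}(\M)=|c_{nm}|$ for all $n,m$.

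Finally, applying Proposition~\ref{basicprop} to the jointly measurable pair $\M$, $\P$ yields $\coh_{nm}(\M)+d_{nm}^2(\P)\leq 1$, i.e. $d_{nm}^2(\P)\leq 1-|c_{nm}|$, which is the claim.

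I expect no real obstacle here: the corollary is just the specialisation of Proposition~\ref{basicprop} to noisy observables of the form $C*\Q_0$, once one observes that a maximally coherent $\Q_0$ makes $C$ imprint its own entries as the entrywise coherences of $C*\Q_0$. The only point requiring a word of care is the existence of a maximally coherent $\Q_0$ to insert — but this is guaranteed, e.g.\ by the rank-one POVM built from a MUB vector set (equivalently, by the earlier observation that maximally coherent $d$-outcome POVMs are exactly the MUB ones).
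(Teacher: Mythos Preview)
Your proof is correct and follows essentially the same approach as the paper: the paper's one-line proof simply states that the result follows from Proposition~\ref{basicprop} since $\coh_{nm}(C*\Q_0)=|c_{nm}|$, which is precisely what you have spelled out in detail.
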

\begin{proof} Follows by Prop. \ref{basicprop}, as $\coh_{nm}(C*\Q_0) = |c_{nm}|$.
\end{proof}
\begin{corollary}\label{cohcor2} Let $C$ be a coherence matrix and $\P$ an incoherent observable. 
If $C*\inv(\P)\geq 0$ then $\P\in \mathcal C_{C}$.
\end{corollary}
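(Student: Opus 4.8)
The plan is to reduce this to Proposition \ref{basicprop2} combined with Theorem \ref{mubthm}. The statement to prove is: if $C * \inv(\P) \geq 0$ then $\P \in \mathcal C_C$, i.e.\ $\P$ is jointly measurable with $C * \Q$ for every observable $\Q$. By the remark following Definition \ref{def1} (which invokes Thm.~\ref{mubthm}), it suffices to exhibit \emph{one} fixed maximally coherent observable $\Q_0$ and show $\P$ is jointly measurable with $C * \Q_0$. So first I would fix such a $\Q_0$ --- for instance any $d$-outcome observable mutually unbiased to the incoherent basis, which exists in the cases of interest --- and set $\M(i) := C * \Q_0(i)$, so that $\M$ is itself an observable (each $\M(i) \geq 0$ by the Schur product theorem since $C \geq 0$ and $\Q_0(i) \geq 0$, and $\sum_i \M(i) = C * \id = \id$ because $C$ has unit diagonal).

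The key step is then to verify the hypothesis of Proposition \ref{basicprop2} for this $\M$ and the given incoherent $\P$, namely that $\inv(\P) * \M(i) \geq 0$ for all $i$. But $\inv(\P) * \M(i) = \inv(\P) * C * \Q_0(i)$, and this is a Schur product of three PSD matrices: $\inv(\P) * C \geq 0$ by the assumed hypothesis, and $\Q_0(i) \geq 0$, so applying the Schur product theorem once more gives $\inv(\P) * \M(i) \geq 0$. (One should note $\inv(\P)$ is well-defined, i.e.\ $d_{nm}^2(\P) < 1$ for all $n,m$; this is implicit in writing $C * \inv(\P)$, and if some $d_{nm}^2(\P)=1$ the statement is vacuous or requires the convention that the corresponding entry of $C$ vanish, which I would flag in passing.) Proposition \ref{basicprop2} now yields that $\M = C * \Q_0$ and $\P$ are jointly measurable.

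Finally, invoking Theorem \ref{mubthm} with this $C$, this $\Q_0$, and the incoherent $\P$ just shown jointly measurable with $C * \Q_0$, we conclude $\P$ is jointly measurable with $C * \Q$ for every observable $\Q$, i.e.\ $\P \in \mathcal C_C$.

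I do not expect a serious obstacle here: the corollary is essentially an assembly of Prop.~\ref{basicprop2} and Thm.~\ref{mubthm} via two applications of the Schur product theorem. The only point needing a little care is the well-definedness of $\inv(\P)$ and hence of the hypothesis $C * \inv(\P) \geq 0$; everything else is a direct chain of implications. A one-line proof in the paper's style would read: ``By Thm.~\ref{mubthm} it suffices to check joint measurability with $C*\Q_0$ for a fixed maximally coherent $\Q_0$; but $\inv(\P)*(C*\Q_0(i)) = (C*\inv(\P))*\Q_0(i)\geq 0$ by the Schur product theorem, so the claim follows from Prop.~\ref{basicprop2}.''
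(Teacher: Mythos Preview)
Your argument is correct, but it takes an unnecessary detour through Theorem~\ref{mubthm}. The paper's proof is more direct: it simply lets $\Q$ be an \emph{arbitrary} observable from the outset, sets $\M=C*\Q$, and observes that $\inv(\P)*\M(i)=(C*\inv(\P))*\Q(i)\geq 0$ by the Schur product theorem (since $C*\inv(\P)\geq 0$ by hypothesis and $\Q(i)\geq 0$). Proposition~\ref{basicprop2} then gives joint measurability of $\P$ with $C*\Q$ directly, for every $\Q$, so $\P\in\mathcal C_C$ by definition. Your reduction to a single maximally coherent $\Q_0$ via Theorem~\ref{mubthm} works, but it is superfluous here: the positivity step $\inv(\P)*(C*\Q(i))\geq 0$ uses only $\Q(i)\geq 0$ and not maximal coherence, so nothing is gained by restricting to $\Q_0$ first and then extending back to all $\Q$. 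In short, both proofs are valid and use the same core identity; the paper's version just avoids an extra invocation of Theorem~\ref{mubthm}.
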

\begin{proof}
Let $\Q$ be an arbitrary observable, and $\M = C*\Q$. Then $\inv(\P)*\M(i) = (C*\inv(\P))*\Q(i)\geq 0$ by the Schur product theorem, so $\P$ is jointly measurable with $\M$ by Prop.~\ref{basicprop2}, hence $\P\in \mathcal C_C$.
\end{proof}
The coherence matrix model is strongly motivated by open quantum systems. In fact, quantum coherence is notoriously fragile against noise, and one of the basic mechanisms by which it decays is \emph{pure decoherence} (i.e.\ no dissipation), typically arising as subsystem dynamics from a unitary evolution on a larger system which leaves the incoherent basis unchanged \cite[Chapt. 4]{breuer02}. While incompatibility seems rarely tractable under general dynamics (see \cite{addis16} for a qubit case), our theory applies neatly to this type of dynamics. Each incoherent observable represents a conserved quantity, whose incompatibility with all other system observables is lost when the decaying coherence fails to sustain it; this characterises the emergent classicality of the open system in a more operational way than the decoherence itself.

More formally, suppose we have a family of coherence matrices $C[\lambda]$ depending on a parameter $\lambda\in [0,1]$. If $\lambda=\lambda(t)$ depends on a time parameter $0\leq t<\infty$, the map $\Lambda_t(\rho):=C[\lambda(t)]*\rho$ defines a quantum dynamical map, i.e. a family of completely positive trace preserving maps on the set of density matrices. A natural ``Markovianity'' property in this setting is 
\begin{equation}\label{divisibility}
C[\lambda\lambda']=C[\lambda]*C[\lambda'], \quad \lambda,\lambda'\in [0,1],
\end{equation}
which leads to the CP-divisibility of the dynamical map if the function $\lambda(t)$ is monotone decreasing. Then the loss of incompatibility is irreversible, and (as the Heisenberg picture evolution has the same form), the above corollaries can be used to bound the critical time at which a given incoherent observable loses its incompatibility by entering the set $\mathcal C_{\lambda(t)}$. Next we show, by considering a specific model, that this approach is amenable to analytical results even in large systems.

\subsection{Spin-boson model}\label{sb}

We consider the spin-boson model with collective interaction \cite{breuer02,palma96,unruh95} -- $N$ qubits coupled  to bosonic modes $b_k$ via the total spin $S_z = \frac 12\sum_{l=1}^N \sigma^{(l)}_z$. The total Hamiltonian is
$$H = H_S + \sum_k \omega_k b_k^\dagger b_k + \sum_k S_z (g_kb_k^\dagger +\overline g_k b_k),$$
where $H_S = \omega_0 S_z$ is the system Hamiltonian. The incoherent basis is the $\sigma_z$ basis $\{|{\bf m}\rangle\}$, where ${\bf m}=(m_1,\ldots,m_N)$, and we let $|{\bf m}|=\sum_l m_l$. 
With the bath initially in a thermal state, the system state at time $t$ is $\rho(t) = C[\lambda(t)] *\rho_0$, where $C[\lambda]$ is the coherence matrix $c_{{\bf n},{\bf m}}[\lambda]=\lambda^{(|{\bf n}|-|{\bf m}|)^2}$, and $\lambda(t)\in [0,1]$ is given by the bath temperature and spectral density \cite[Sec. 4.2]{breuer02}. In the Heisenberg picture, the dynamics transform the system observables $\Q$ into $C[\lambda(t)] * \Q$, which is precisely of the form considered above, and has the divisibility property \eqref{divisibility}. The task is to characterise the set $\mathcal C_{C[\lambda]}$ for $\lambda\in [0,1]$.
%given by the \emph{decoherence function} $\ln \lambda(t)=-\int_0^\infty J(\omega)\coth(\omega/2k t)\omega^{-2} (1-\cos(\omega t))d\omega$, where $J(\omega)$ is the bath spectral density \cite{breuer02}. (We omit the irrelevant unitary GIO transforming from the Heisenberg to the interaction picture.)

If $t\mapsto \lambda(t)$ is monotone decreasing (as in \cite[p. 230]{breuer02}), the loss of incompatibility is irreversible due to \eqref{divisibility}. However, the model also has \emph{decoherence-free subspaces} (DFS) $\mathcal D_{j}={\rm span}\{|{\bf n}\rangle\mid |{\bf n}| =j\}$ \cite{palma96,lidar14, lidar98, bacon00}; basis elements in the same DFS have $c_{{\bf n,m}}[\lambda]=1$ for all $\lambda$. By Cor. \ref{cohcor}, each DFS ``protects'' the incompatibility of any $\P$ not proportional to $\id$ inside it. Observables $\P$ exhibiting a transition to classicality therefore have $p_{\bf n}^\P =p_{\bf m}^\P$ when $|{\bf n}\rangle, |{\bf m}\rangle$ lie in the same DFS. In Appendix \ref{A:symmetry} we show that for these $\P$ the problem reduces to an $N+1$-dimensional space with incoherent basis $\{|k\rangle\}$ indexed by the DFS labels $k=0,\ldots, N$: we have $\P\in \mathcal C_{C[\lambda]}$ iff $\tilde \P\in \mathcal C_{\tilde C[\lambda]}$ where $p_k^{\tilde \P} = p_{\bf n}^\P$ for $|{\bf n}\rangle\in \mathcal D_k$, and the coherence matrix is
\begin{equation}\label{sbmatrix}
\tilde C[\lambda] =
\begin{pmatrix} 1 & \lambda & \lambda^4 &\cdots & \cdots &\lambda^{N^2}\\
 \lambda &1 & \lambda & \ddots & & \vdots\\
 \lambda^4 & \lambda & \ddots &\ddots &\ddots & \vdots\\
 \vdots & \ddots & \ddots & \ddots & \lambda & \lambda^4\\
 \vdots & & \ddots & \lambda & 1 &\lambda\\
 \lambda^{N^2} & \cdots & \cdots & \lambda^4 &\lambda & 1\end{pmatrix}.
\end{equation}
We further focus on the measurements of the DFS label $j=0,\dots,N$ which are \emph{covariant} for the permutation $j\mapsto N-j$ leaving $C[\lambda]$ invariant, i.e. $p_{\bf n}^\P(j)=p_{\bf m}^\P(N-j)$ when $|{\bf n}| = N-|{\bf m}|$. We denote by $\mathcal C_{C[\lambda]}^{\rm sym}$ the set of covariant $\P\in \mathcal C_{C[\lambda]}$. This set has affine dimension $\frac 12 N(N+1)$, and we find it analytically for $N=2$ in Appendix \ref{A:sbN2};  see Fig.\ \ref{fig:spinboson1}. In Section \ref{symmetry} and Appendix \ref{A:symmetry} we develop a general theory of covariance systems for coherence matrices.

\begin{figure}[t]
\begin{minipage}{\columnwidth}
\includegraphics[scale=0.24]{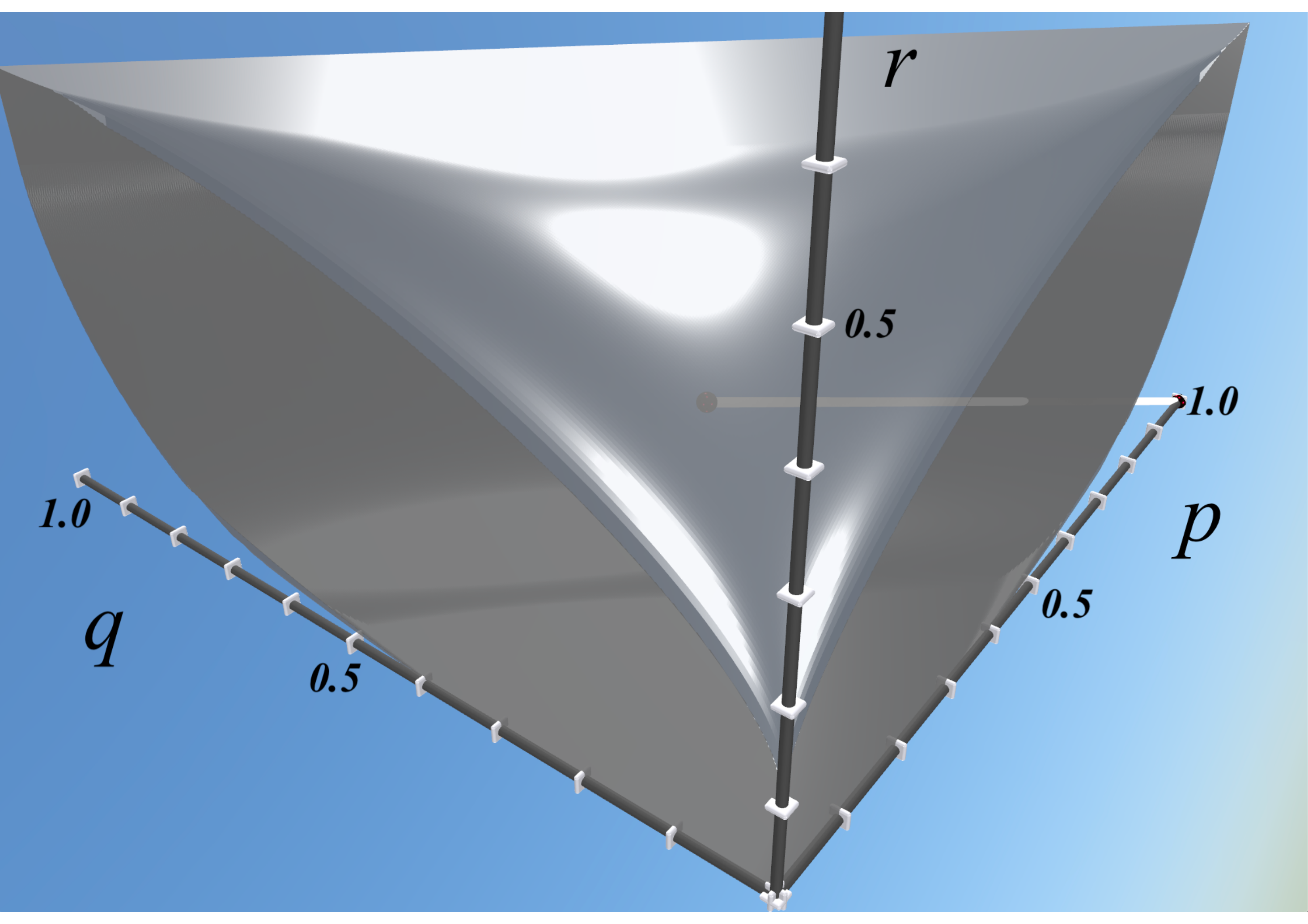}
\end{minipage}
\caption{\label{fig:spinboson1} (Color online) The convex set $\mathcal C^{\rm sym}_{C[\lambda]}$ for $N=2$, $\lambda =0.7$, parametrised by $(p,q,r)$ where $p=p^{\P}_{1}(0)$, $q=p^{\P}_4(0)$, $r=p^{\P}_2(0)=p^{\P}_3(0)$. The line $\{\P_\alpha^{C}\mid \alpha\in [0,1]\}$ is shown connecting $(\tfrac 14,\tfrac 14,\tfrac 14)$ to $(1,0,0)$; it enters the set at $\alpha =\alpha_2(0.7)\approx 0.58$.}
\end{figure}
\begin{figure}[t]
\begin{minipage}{0.465\columnwidth}
\begin{flushleft}
\includegraphics[scale=0.34]{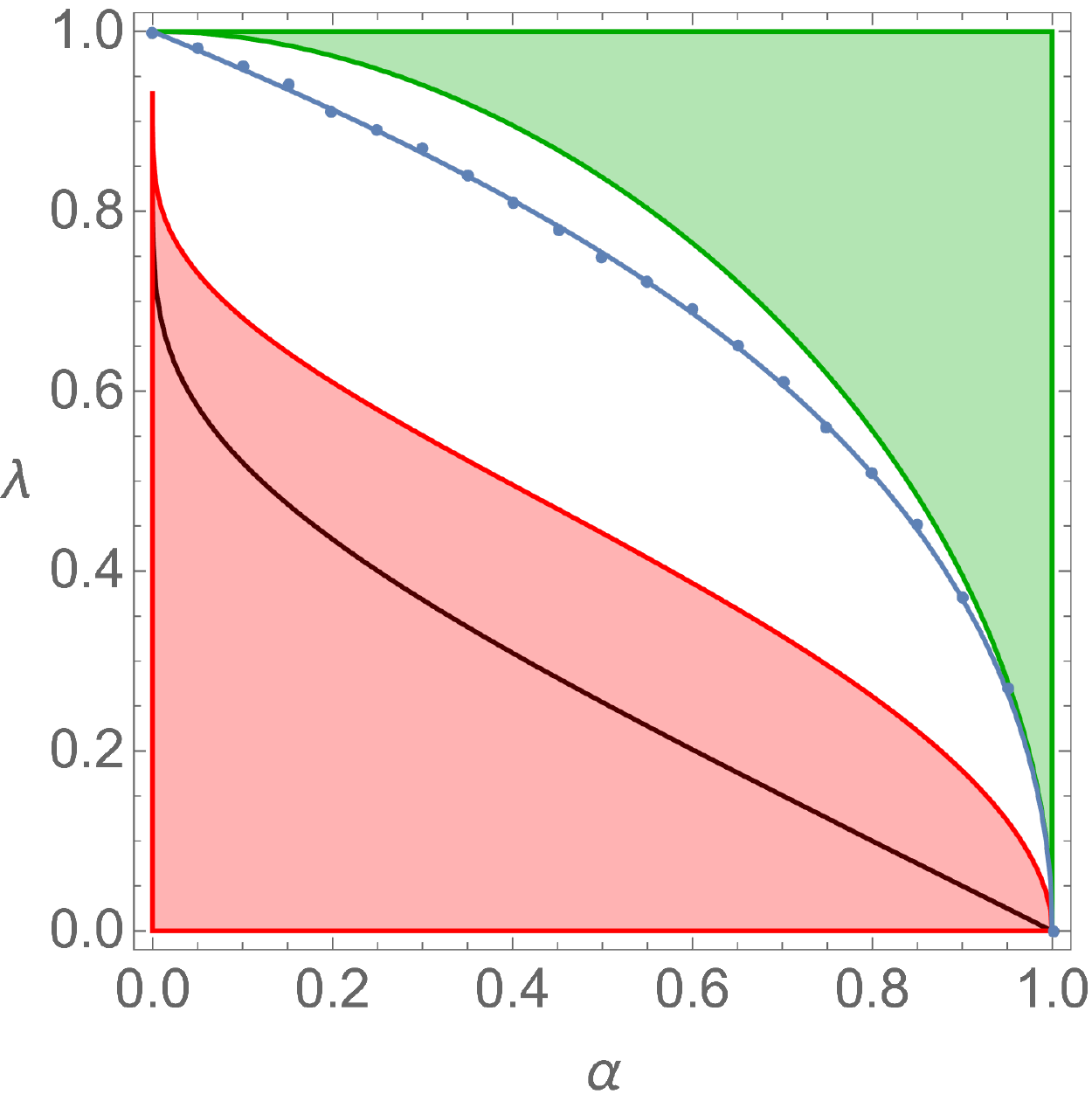}
\end{flushleft}
\end{minipage}
\begin{minipage}{0.52\columnwidth}
\begin{flushleft}
\includegraphics[scale=0.34]{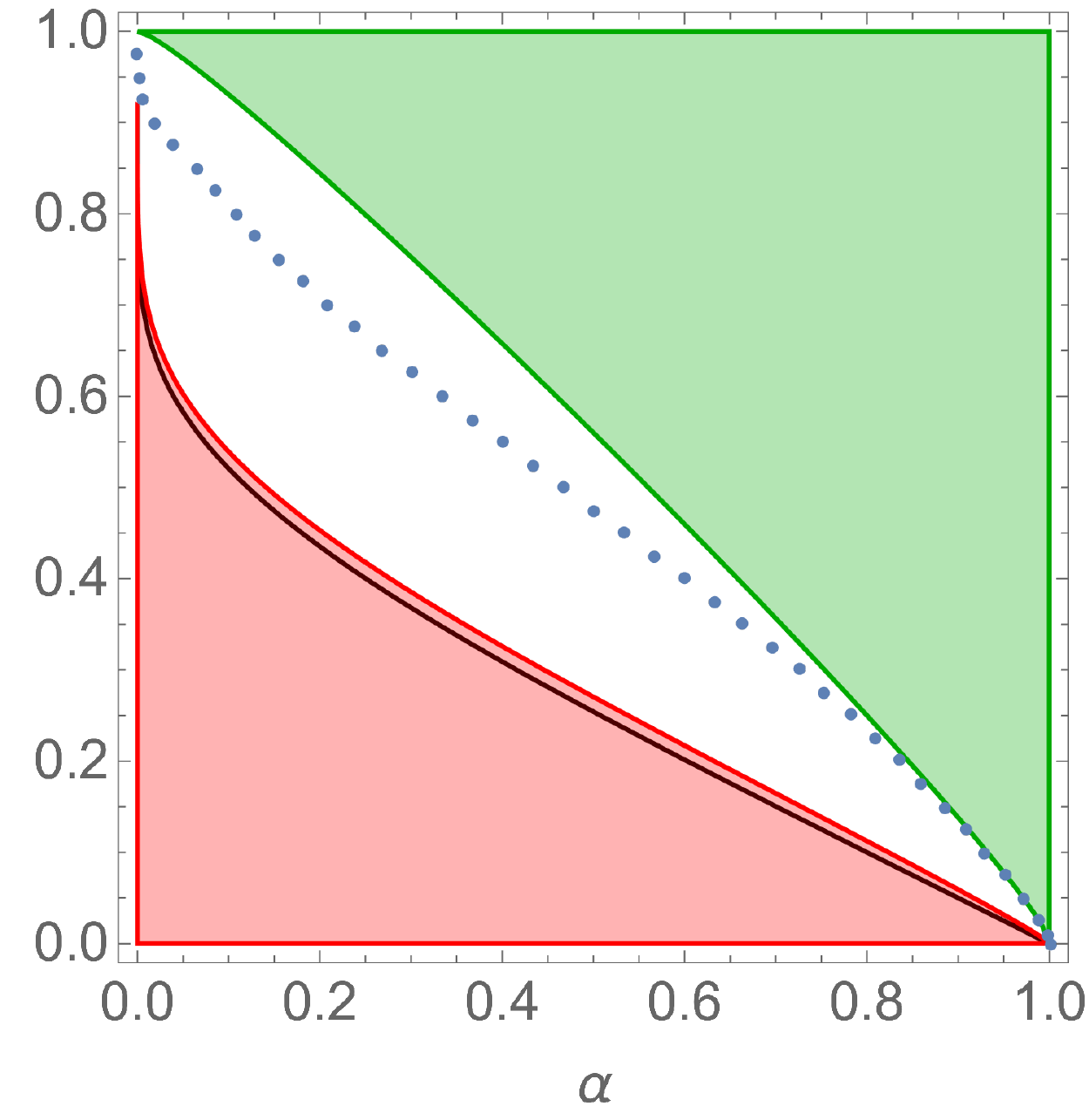}
\end{flushleft}
\end{minipage}
\caption{\label{fig:spinboson2} (Color online). Incompatibility of the incoherent observables arising as $\alpha$-weighted averages of the system Hamiltonian and its depolarisation, in the spin-boson model with coherence parameter $\lambda$, for $N=2$ (left) and $N=10$ (right). The green (resp. red) region has incompatibility (loss thereof) ``certified'' by the \emph{analytical} bound $U_N$ ($L_N$) of Prop.\ \ref{sbbounds}. Below the $\vartheta_3(\frac \pi 2, \lambda)$ bound (black line) incompatibility is lost for \emph{all $N$}. The true boundary $\alpha_N$ is given for comparison by a semidefinite program (SDP) (blue dots), and analytically for $N=2$ (solid blue line).}
\end{figure}

For larger $N$ we focus on covariant observables $\P^C_{\alpha} := \alpha \P^C +(1-\alpha)\P_{\rm dep}^C$, $\alpha\in [0,1]$, where $\P^C$ is the spectral resolution of $H_S$ (so $\P^C(j)$ is the projector onto the DFS $\mathcal D_j$), and $\P^C_{\rm dep}$ its depolarisation $\P_{{\rm dep}}^C(j):=\frac{1}{2^{N}}{\rm tr}[\P^C(j)]\id=c_{N}(j)\id$ into the ``coin toss'' distribution $c_{N}(j)= \binom Nj \tfrac{1}{2^{N}}$. The proportion of $\{\P^C_\alpha \mid \alpha\in [0,1]\}$ having lost incompatibility at time $t$ is $\alpha_N(\lambda(t))$ where $$\alpha_N(\lambda) :=\max \{\alpha>0 \mid \P^C_{\alpha}\in \mathcal C_{C[\lambda]}\}.$$ With the reduction \eqref{sbmatrix} we find
$\alpha_N(\lambda)=\max\{\alpha>0 \mid  \tilde \P_\alpha^C \in \mathcal C_{\tilde C[\lambda]}\}$, where $\tilde \P_\alpha^C(j) = \alpha|j\rangle\langle j| + (1-\alpha) c_N(j) \id$. The task is to find analytical bounds for $\alpha_N(\lambda)$. For $N=2$ we can explicitly solve (Appendix \ref{A:sbN2}) $\alpha_2(\lambda) = 1-\tfrac{4\lambda^2}{3+\lambda^4+2\sqrt2(1-\lambda^2)}$. Crucially, the Hellinger distances are tractable for \emph{any} $N$: $d_{kk'}(\tilde \P^{C}_\alpha)=1-\beta_{kk'}(\alpha)$ when $k\neq k'$, with $\beta_{kk'}(\alpha)=1-\alpha + u_{k}+u_{k'}$, $u_{k} =\sqrt{c_N(k)(1-\alpha)\big(\alpha + c_N(k)(1-\alpha) \big) } -c_N(k)(1-\alpha)$. Each $\alpha\mapsto \beta_{kk'}(\alpha)$ is decreasing in $\alpha$; using the inverse functions $[\beta_{01}]^{-1}$, $[\beta_{00}]^{-1}$ we set $U_N(\lambda)=[\beta_{01}]^{-1}(\lambda)$ and $L_N(\lambda) = [\beta_{00}]^{-1}(1-\vartheta_3(\tfrac \pi 2,\lambda))$, where $\vartheta_3(x,\lambda) = 1+2\sum_{k=1}^\infty \lambda^{k^2}\cos(2kx)$ is the Jacobi Theta function \cite{montgomery88}. The following result holds:

\begin{proposition}\label{sbbounds}  $\vartheta_3(\frac \pi 2, \lambda)\leq L_N(\lambda) \leq \alpha_N(\lambda) \leq U_N(\lambda)$ for all $N=1,2,\ldots$, and $\lambda\in [0,1]$.
\end{proposition}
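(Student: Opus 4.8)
The plan is to establish the chain of inequalities by treating the three gaps separately, using Cor.~\ref{cohcor}, Cor.~\ref{cohcor2}, and elementary properties of the theta function.

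\emph{The bound $\alpha_N(\lambda)\leq U_N(\lambda)$.} Here I would apply Cor.~\ref{cohcor} to the reduced problem: if $\tilde\P^C_\alpha\in\mathcal C_{\tilde C[\lambda]}$, then $d^2_{kk'}(\tilde\P^C_\alpha)\leq 1-|\tilde c_{kk'}[\lambda]|$ for all $k,k'$. Taking neighbouring labels $k'=k+1$ gives $\tilde c_{kk'}[\lambda]=\lambda$, so $1-d^2_{kk'}(\tilde\P^C_\alpha)=\beta_{kk'}(\alpha)\geq \lambda$. The key point, to be checked, is that among all neighbouring pairs the quantity $\beta_{kk'}(\alpha)$ is minimised at the pair achieving the largest $u_k+u_{k'}$ — or rather that $\beta_{01}$ is the relevant extremal one — so that $\beta_{01}(\alpha)\geq\lambda$, and inverting the decreasing function $\beta_{01}$ yields $\alpha\leq[\beta_{01}]^{-1}(\lambda)=U_N(\lambda)$. (One should verify $c_N(0)=c_N(N)=2^{-N}$ is the smallest binomial weight, making $u_0$ extremal in the appropriate direction; I expect $\beta_{01}$ is chosen precisely because $k=0,k'=1$ gives the binding constraint among neighbours.)

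\emph{The bound $L_N(\lambda)\leq\alpha_N(\lambda)$.} This is the sufficiency direction, using Cor.~\ref{cohcor2}: it suffices to show that if $\alpha\leq L_N(\lambda)$ then $\tilde C[\lambda]*\inv(\tilde\P^C_\alpha)\geq 0$. I would bound the entries of $\inv(\tilde\P^C_\alpha)$, whose off-diagonal entries are $[\beta_{kk'}(\alpha)]^{-1}$ (and diagonal entries $1$), and then use a diagonal-dominance argument on $\tilde C[\lambda]*\inv(\tilde\P^C_\alpha)$ as in the discussion after Prop.~\ref{basicprop2}: the matrix has unit diagonal, and off-diagonal entries $\lambda^{(k-k')^2}/\beta_{kk'}(\alpha)$. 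For positivity via diagonal dominance it is enough that $\sum_{k'\neq k}\lambda^{(k-k')^2}/\beta_{kk'}(\alpha)\leq 1$ for each $k$. Bounding $\beta_{kk'}(\alpha)\geq\beta_{00}(\alpha)$ (the worst case, again to be justified by monotonicity of $u_k$ in the binomial weight) reduces this to $\sum_{k'\neq k}\lambda^{(k-k')^2}\leq\beta_{00}(\alpha)$; and $\sum_{k'\neq k}\lambda^{(k-k')^2}\leq\sum_{j\in\integer,j\neq 0}\lambda^{j^2}=\vartheta_3(\tfrac\pi 2,\lambda)-1$, i.e. $1+\sum_{k'\neq k}\lambda^{(k-k')^2}\leq\vartheta_3(\tfrac\pi2,\lambda)$. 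Hence diagonal dominance holds as soon as $\beta_{00}(\alpha)\geq\vartheta_3(\tfrac\pi2,\lambda)-1$... wait — one wants $1-\beta_{00}(\alpha)\leq$ something; rather, $\inv_{00}$ involves $\beta_{00}$ and the condition becomes $1-\vartheta_3(\tfrac\pi2,\lambda)\leq\beta_{00}(\alpha)$ is not quite it. The correct statement to aim for, matching the definition $L_N(\lambda)=[\beta_{00}]^{-1}(1-\vartheta_3(\tfrac\pi2,\lambda))$, is that diagonal dominance of $\tilde C[\lambda]*\inv(\tilde\P^C_\alpha)$ holds precisely when $\beta_{00}(\alpha)\geq 1-\vartheta_3(\tfrac\pi2,\lambda)$, which since $\beta_{00}$ is decreasing means $\alpha\leq[\beta_{00}]^{-1}(1-\vartheta_3(\tfrac\pi2,\lambda))=L_N(\lambda)$; I will need to carefully track the row sums so the theta-function tail matches this exactly.

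\emph{The bound $\vartheta_3(\tfrac\pi2,\lambda)\leq L_N(\lambda)$.} This is a purely scalar estimate: since $[\beta_{00}]^{-1}$ is decreasing, it is equivalent to $\beta_{00}(\vartheta_3(\tfrac\pi2,\lambda))\geq 1-\vartheta_3(\tfrac\pi2,\lambda)$, i.e. $\beta_{00}(x)+x\geq 1$ at $x=\vartheta_3(\tfrac\pi2,\lambda)\in[0,1]$. Recalling $\beta_{00}(\alpha)=1-\alpha+2u_0$ with $u_0\geq 0$, we get $\beta_{00}(\alpha)+\alpha = 1+2u_0\geq 1$ for every $\alpha$, so this holds trivially. (If the indexing makes $\beta_{00}$ mean the $k=k'=0$ diagonal-type Hellinger quantity with a slightly different formula, the same nonnegativity of the $u$-correction delivers it.)

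\emph{Main obstacle.} The routine parts are the theta-function tail bound and the final scalar inequality; the genuine work is the diagonal-dominance verification in the $L_N$ direction — specifically proving that $\beta_{kk'}(\alpha)$ is minimised (over the off-diagonal pairs entering a fixed row) at the pair with extremal binomial weights, so that replacing all $\beta_{kk'}$ by $\beta_{00}$ is legitimate, and simultaneously that the geometric-in-$(k-k')^2$ decay of $\tilde C[\lambda]$ lets the finite row sum be controlled by the full bilateral theta series uniformly in $N$. Handling the interplay of these two monotonicities, rather than either one alone, is where care is required.
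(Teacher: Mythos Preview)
Your upper bound argument and the final scalar inequality $\vartheta_3(\tfrac\pi2,\lambda)\leq L_N(\lambda)$ are both fine and match the paper. For the upper bound you do not need any extremality of $\beta_{01}$ among neighbours: a \emph{single} constraint from Cor.~\ref{cohcor} (namely the $(0,1)$ entry, where $\tilde c_{01}=\lambda$) already forces $\beta_{01}(\alpha)\geq\lambda$, hence $\alpha\leq U_N(\lambda)$.

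The genuine gap is in the step $L_N(\lambda)\leq\alpha_N(\lambda)$. Your diagonal-dominance route cannot reach the stated $L_N$. The row-sum you need is $\sum_{k'\neq k}\lambda^{(k-k')^2}\leq 2\sum_{j\geq 1}\lambda^{j^2}=\vartheta_3(0,\lambda)-1$, \emph{not} $\vartheta_3(\tfrac\pi2,\lambda)-1$; indeed $\vartheta_3(\tfrac\pi2,\lambda)=1+2\sum_{k\geq1}(-1)^k\lambda^{k^2}\leq 1$, so your displayed identity is false. Diagonal dominance would therefore yield only $\beta_{00}(\alpha)\geq\vartheta_3(0,\lambda)-1$, i.e.\ the weaker bound $[\beta_{00}]^{-1}(\vartheta_3(0,\lambda)-1)\leq\alpha_N(\lambda)$, which is strictly below $L_N(\lambda)=[\beta_{00}]^{-1}(1-\vartheta_3(\tfrac\pi2,\lambda))$ since $\vartheta_3(0,\lambda)-1\geq 1-\vartheta_3(\tfrac\pi2,\lambda)$ for all $\lambda>0$.

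The paper avoids diagonal dominance entirely. It defines $B$ by $b_{nm}=1/\beta_{nm}$ for \emph{all} pairs (including the diagonal), and shows $B\geq 0$ via the Gram-type integral representation $b_{nm}=\int_0^1 x^{-\alpha}x^{u_n}x^{u_m}\,dx$. Then $\tilde C[\lambda]*\mathfrak S(\tilde\P^C_\alpha)=\tilde C*B+D$ with $D$ diagonal, and since $B\geq 0$ and $\tilde C-r\id\geq 0$ for $r$ the smallest eigenvalue of $\tilde C$, the Schur product theorem gives $\tilde C*B\geq r\,\mathrm{diag}(B)$. The Toeplitz structure of $\tilde C$ yields $r\geq\min_x\vartheta_3(\tfrac x2,\lambda)=\vartheta_3(\tfrac\pi2,\lambda)$, and the condition $r\geq 1-\beta_{00}(\alpha)$ (equivalently $\alpha\leq L_N(\lambda)$) then suffices to make the resulting diagonal matrix nonnegative. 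The appearance of $\vartheta_3(\tfrac\pi2,\lambda)$ is thus a \emph{minimum-eigenvalue} bound, not a row-sum bound, and that is precisely why it is sharper than what diagonal dominance can give.
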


\begin{proof} We fix $\lambda\in [0,1]$ (and hence also $\tilde C[\lambda]$, $U_N=U_N(\lambda)$, $L_N=L_N(\lambda)$, $\alpha_N=\alpha_N(\lambda)$). Now if $\alpha$ is such that $\tilde \P^{C}_\alpha\in \mathcal C_{\tilde C}$, then $\lambda \leq \beta_{01}(\alpha)$ by Cor.\ \ref{cohcor}, i.e.\ $U_N\geq \alpha$, so $U_N\geq \alpha_N$; this establishes the upper bound. For the lower bounds, define $B=(b_{nm})$ by $b_{nm} = 1/\beta_{nm}$ for each $n,m$. (Notice that $\beta_{nn}\neq 1$, as $1-\beta_{nm}$ only coincides with the corresponding Hellinger distance on the off-diagonal elements.) Now $B\geq 0$, as $b_{nm} = \int_0^1 x^{\beta_{nm}-1} dx=\int_0^1 x^{-\alpha} x^{u_n} x^{u_m}dx$. Now $b_{nm}=(1-d^2_{nm}(\tilde \P^{C}_\alpha))^{-1}$ when $n\neq m$, but $b_{nn} = 1/\beta_{nn}\geq 1$, so $\tilde C*\mathfrak S(\tilde \P_\alpha^C) = \tilde C*B +D$ with $D$ diagonal, $d_{nn}= 1-b_{nn}$. Since $B\geq0$, we get $(\tilde C-r\id)*B\geq 0$, where $r$ is the bottom eigenvalue of $\tilde C$. By the theory of Toeplitz matrices (\cite[p. 194, 211]{gray06}, \cite[Lemma 1]{montgomery88}), $$r\geq \min_{x\in [0,2\pi]} \sum_{k=-\infty}^\infty \tilde c_{k0}\,e^{ikx}=\min_{x\in [0,2\pi]}\vartheta_3(\frac x2,\lambda) = \vartheta_3(\frac \pi 2,\lambda).$$ Hence, if $\vartheta_3(\frac\pi 2,\lambda)\geq 1-\beta_{00}(\alpha)$ then $\tilde C*\mathfrak S(\tilde \P_\alpha^C) \geq r (\id -D)+D\geq (1-\beta_{00})(\id -D)+D\geq 0$ as $1-\beta_{00}=\max_{n}(1-\beta_{nn})$, and so $\tilde \P_\alpha^{C}\in \mathcal C_{\tilde C}$ by Cor.\ \ref{cohcor2}. Hence $\tilde \P_\alpha^{C}\in \mathcal C_{\tilde C}$ for all $\alpha\leq L_N$, so $\alpha_{N} \geq L_N$. Finally, $\vartheta_3(\frac \pi 2, \lambda)\leq L_N$ since $\alpha\geq 1-\beta_{00}(\alpha)$ for all $\alpha$. This completes the proof.\end{proof}

Fig.~\ref{fig:spinboson2} shows the bounds for $N=2,10$, and the analytical curve $\alpha_2(\lambda)$ with a numerical consistency check (blue dots) computed with a generic joint measurability SDP available in \cite{cavalcanti16} applied to $(\tilde C[\lambda] *\Q_0,\tilde\P^{C}_\alpha)$ (see Thm.\ \ref{mubthm}). This SDP is not practical for large $N$; to compute $\alpha_{10}(\lambda)$ we used the efficient SDP \eqref{GII0} adapted to our setting as described in the next section (implemented in Python \cite{cvxopt}). Prop.\ \ref{sbbounds} says that at any time $t$, at least the proportion $1-U_N(\lambda(t))$ of the line $\{\P^C_\alpha\mid \alpha\in [0,1]\}$ has incompatibility due to coherence, while at least the proportion $L_N(\lambda(t))\geq \vartheta_3(\frac \pi 2, \lambda(t))$ has lost it. Remarkably, the last bound is independent of $N$, i.e. holds for any system size. Finally, the bound $U_N$ is tight near the classical limit (small $\lambda$): 
\begin{proposition}\label{asymptotic} For each fixed $N$, the curves $\alpha = \alpha_N(\lambda)$ and $\alpha = U_N(\lambda)$ have the same asymptotic form:
$\lambda = \tfrac{1+\sqrt{N}}{2^{N/2}}\sqrt{1-\alpha} + O(1-\alpha)$  as $\alpha\rightarrow 1$ ($\lambda\rightarrow 0$).
\end{proposition}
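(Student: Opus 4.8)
The plan is to extract the common leading coefficient from the already-known $U_N$ curve, and then trap $\alpha_N$ between the bound $\alpha_N\le U_N$ of Prop.~\ref{sbbounds} and a new joint-measurability construction asymptotically as strong as $U_N$. Note that neither the lower bound $L_N$ of Prop.~\ref{sbbounds} nor the sufficient condition of Cor.~\ref{cohcor2} will suffice: a quick expansion gives $1-L_N(\lambda)=2^N\lambda^2(1+o(1))$, whereas we will see $1-U_N(\lambda)=\tfrac{2^N}{(1+\sqrt N)^2}\lambda^2(1+o(1))$, and one checks that $\tilde C[\lambda]*\inv(\tilde\P^C_\alpha)\ge 0$ only for $\lambda$ strictly below $\tfrac{1+\sqrt N}{2^{N/2}}\sqrt{1-\alpha}$ when $N\ge2$. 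So a genuinely sharper argument is needed on the lower side.

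The $U_N$ side is routine. With $\epsilon:=1-\alpha$, Taylor-expanding $u_k=\sqrt{c_N(k)\epsilon\big(1-\epsilon(1-c_N(k))\big)}-c_N(k)\epsilon=\sqrt{c_N(k)}\sqrt\epsilon+O(\epsilon)$ yields $\beta_{01}(\alpha)=\epsilon+u_0+u_1=\big(\sqrt{c_N(0)}+\sqrt{c_N(1)}\big)\sqrt\epsilon+O(\epsilon)=\tfrac{1+\sqrt N}{2^{N/2}}\sqrt\epsilon+O(\epsilon)$, using $c_N(0)=2^{-N}$, $c_N(1)=N2^{-N}$. Since $\alpha=U_N(\lambda)$ means $\lambda=\beta_{01}(\alpha)$, the $U_N$ curve has the asserted form; and since $\alpha_N(\lambda)\le U_N(\lambda)$ with $U_N$ strictly decreasing, on the $\alpha_N$ curve $\lambda\le\beta_{01}(\alpha_N(\lambda))=\tfrac{1+\sqrt N}{2^{N/2}}\sqrt{1-\alpha}+O(1-\alpha)$, i.e.\ $1-\alpha_N(\lambda)\ge\tfrac{2^N}{(1+\sqrt N)^2}\lambda^2\big(1+o(1)\big)$.

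For the matching lower bound I would use the criterion: $\tilde\P^C_\alpha\in\mathcal C_{\tilde C[\lambda]}$ whenever there are positive semidefinite $W_0,\dots,W_N$ with diagonals $(W_j)_{nn}=\alpha\delta_{nj}+(1-\alpha)c_N(j)$ and $\sum_jW_j=\tilde C[\lambda]$, since $\G(i,j):=\Q(i)*W_j$ is then a joint observable of $(\tilde C[\lambda]*\Q,\tilde\P^C_\alpha)$ for every $\Q$ (positivity by the Schur product theorem; marginals $\Q(i)*\sum_jW_j=(\tilde C[\lambda]*\Q)(i)$ and $(\sum_i\Q(i))*W_j=\id*W_j=\tilde\P^C_\alpha(j)$). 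I then build such $W_j$ perturbatively around $\alpha=1$. For fixed $j$ the prescribed diagonal of $W_j$ is $\approx 1$ at position $j$ and $\epsilon c_N(j)$ elsewhere, so only the entries $(W_j)_{j,m}$ have an $O(\sqrt\epsilon)$ positivity budget $\approx\sqrt{c_N(j)}\sqrt\epsilon$; the only off-diagonal entries of $\tilde C[\lambda]$ of that size are the nearest-neighbour ones (each $\lambda$), which I split between $W_n$ and $W_{n+1}$, every other entry being $O(\epsilon^2)$. Taking each interior $W_j$'s distance-$2$ entry to be $(W_j)_{j-1,j+1}=(W_j)_{j-1,j}(W_j)_{j,j+1}/(W_j)_{jj}$ makes the Schur complement of $W_j$ at its large $(j,j)$ entry diagonal, so $W_j\ge 0$ reduces to $(W_j)_{j-1,j}^2,(W_j)_{j,j+1}^2\le(1-O(\epsilon))\epsilon c_N(j)$; hence the split is feasible exactly when $\lambda\le\big(\sqrt{c_N(n)}+\sqrt{c_N(n+1)}\big)\sqrt\epsilon(1-O(\epsilon))$ for all $n$, the tightest instance --- since $c_N$ is minimised at the endpoints --- being $\lambda\le\tfrac{1+\sqrt N}{2^{N/2}}\sqrt\epsilon(1-O(\epsilon))$. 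The diagonalising entries are $O(\epsilon)$ and, together with the distance-$\ge2$ target entries $\le\lambda^4=O(\epsilon^2)$, are matched by $O(\epsilon)$ adjustments absorbed into the $O(\sqrt\epsilon)$ budgets at farther positions, cascading with geometrically decreasing size. Thus the construction succeeds whenever $\lambda\le\tfrac{1+\sqrt N}{2^{N/2}}\sqrt{1-\alpha}-C_N(1-\alpha)$ with $1-\alpha$ small, giving $1-\alpha_N(\lambda)\le\tfrac{2^N}{(1+\sqrt N)^2}\lambda^2(1+O(\lambda))$; combined with the second paragraph, $1-\alpha_N(\lambda)=\tfrac{2^N}{(1+\sqrt N)^2}\lambda^2(1+O(\lambda))$, i.e.\ $\lambda=\tfrac{1+\sqrt N}{2^{N/2}}\sqrt{1-\alpha}+O(1-\alpha)$ on the $\alpha_N$ curve.

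The main obstacle is the perturbative bookkeeping of the previous paragraph --- confirming that the cascade of $O(\epsilon)$-and-smaller corrections needed to turn $\sum_jW_j\approx\tilde C[\lambda]$ into an exact identity remains uniformly below the $O(\sqrt\epsilon)$ scale of the leading entries, so that the leading-order positivity margins survive. For $N=2$ this step can be bypassed via the closed form $\alpha_2(\lambda)=1-\tfrac{4\lambda^2}{3+\lambda^4+2\sqrt2(1-\lambda^2)}$, which directly gives $1-\alpha_2(\lambda)=\tfrac{4}{(1+\sqrt2)^2}\lambda^2+O(\lambda^4)$, matching $U_2$; the $U_N$-side expansion is routine for every $N$.
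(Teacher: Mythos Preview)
Your strategy matches the paper's exactly: expand $\beta_{01}$ for the $U_N$ side, then establish a matching lower bound for $\alpha_N$ by constructing a GII (the criterion you state is precisely Thm.~\ref{mainthm}(ii)). In both approaches the decisive idea is to split the dominant nearest-neighbour coherence $\lambda$ at position $(n,n+1)$ between the adjacent GII matrices $W_n$ and $W_{n+1}$, each piece bounded by the geometric mean $\sqrt{(W_j)_{jj}(W_j)_{j\pm1,j\pm1}}\approx\sqrt{c_N(j)\epsilon}$, so that the tightest constraint arises at the endpoint and reads $\lambda\le(\sqrt{c_N(0)}+\sqrt{c_N(1)})\sqrt{\epsilon}=k_N\sqrt{\epsilon}$.

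The difference is in how the ``bookkeeping'' obstacle you flag is resolved. The paper does not carry out a perturbative cascade; instead it writes down a fully explicit GII $(C(j))_{j=0}^N$ valid for \emph{all} $\alpha\in[0,1]$ (for $N\ge5$; small $N$ needs minor tweaks), proving the exact inequality $u_0(\alpha)+u_1(\alpha)\le\alpha_N^{-1}(\alpha)$ rather than an asymptotic one. The endpoint matrices $C(0),C(1)$ (and their reflections) are rank-controlled blocks with off-diagonal entries $u_0,u_1$; the interior $C(j)$ for $2\le j\le j_0$ carry a single $3\times3$ tridiagonal block $M_j(w)$ with $w\in\{u_0,u_1\}$ alternating, so that consecutive overlaps reproduce $\lambda=u_0+u_1$ on the first sub/super-diagonal; and the entire higher-order remainder $G:=\tilde C[\lambda]-(\text{tridiagonal part})$ is dumped into the middle matrix $C(j_0)$, whose positivity is then checked by writing $C(j_0)=(D-\epsilon\id)+(\epsilon\id+G)$ with a suitable $\epsilon\sim q_1$ and using diagonal dominance for $\epsilon\id+G$. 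This buys a global bound (and an explicit error estimate), whereas your sketch is only asymptotic and leaves the higher-order matching unverified. Your instinct that the cascade should close is correct, but the paper's explicit construction is what turns it into a proof.
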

We prove this (Appendix \ref{A:smallcoh}) by explicitly constructing relevant joint observables in the operational framework of the next section. Note that the square root behaviour near the classical limit is distinct from the ``middle'' regime for $\lambda$, where $U_N(\lambda)$ decreases towards $1-\lambda$ as $N$ increases. Clearly, incompatibility is much more intricate than the coherence which sustains it in this model.

\section{Operational framework for coherence and incompatibility}\label{operational}
Having demonstrated our theory in applications, we now gain further insight by reformulating it in resource-theoretic terms, in the context of quantum measurement theory.

\subsection{Resource-theoretic aspects}

Here we describe how our theory formally integrates into the resource theory of quantum coherence, and especially measurement coherence. We stress that our aim is \emph{not} to develop a comprehensive joint resource theory for coherence and incompatibility, but rather to focus on the most relevant aspects.

The ``free resources'' are the incoherent observables $\P(i) = \sum_n p^{\P}_n(i) |n\rangle\langle n|$ \cite{oszmaniec19, baek20} already used above; they are jointly measurable with each other. Their non-classicality is quantified by the Hellinger distances: $d^2_{nm}(\P_1) =1$ for all $n\neq m$, while $d_{nm}(\P)=0$ for all $n,m$ iff each $\P(j)$ is a multiple of the identity, i.e. $\P$ is jointly measurable with \emph{every} observable. An observable $\P$ is incoherent iff $\coh_{nm}(\P) =0$ for all $n\neq m$.

Any quantum channel, a completely positive (CP) trace preserving map $\Lambda$, is a ``free operation'' for incompatibility \cite{heinosaari15a, heinosaari15b,chitambar19}, acting on observables via preprocessing $\Lambda^*(\M)(i) = \Lambda^*(\M(i))$, where $\Lambda^*$ is defined by $\tr{\Lambda^*(X)\rho}=\tr{X\Lambda(\rho)}$ for all matrices $X$, $\rho$. If $\Lambda(|n\rangle\langle n|)=|n\rangle\langle n|$ for all $n$, i.e. $\Lambda$ leaves each incoherent observable unchanged, then $\Lambda$ is also ``free'' for coherence, called a \emph{genuinely incoherent operation (GIO)} \cite{devicente17,yao17}. These have already appeared in our setting: each GIO has the form $\Lambda^*_C(\M)(i) = C*\M(i)$ for some coherence matrix $C$ \cite{buscemi05, devicente17}. The entry-wise coherence is monotonic in GIOs, as $\coh_{nm}(\Lambda^*_C(\M)) = |c_{nm}| \coh_{nm}(\M)\leq \coh_{nm}(\M)$.

Observables are measured by \emph{instruments} $\I=(\I_i)_{i\in \Omega}$, where each $\I_i$ is a CP map such that $\sum_i \I_i$ is a channel \cite{QM}. 
The observable measured by $\I$ is $\M(i)=\I^*_i(\id)$. For a state (density matrix) $\rho$ the post-measurement state given outcome $i$ is $\I_i(\rho)$. Instruments are needed for sequential implementation of joint measurements: measuring first $\M$ with $\I$, and then an observable $\F$, we get a joint POVM $\G(i,j) =\I^*_i(\F(j))$ for $\M$ and $\Lambda^*(\F)$. Joint measurements usually do not have sequential implementations (unless one ``cheats'' by allowing a larger output space \cite[Prop.\ 2]{Optimal}). We define a \emph{genuinely incoherent instrument} (GII) as one whose channel $\sum_i\I_i$ is a GIO. It follows (Appendix \ref{A:dilations}) that any GII with channel $\Lambda_C$ has the form $\I^*_i(X) = C(i)* X$ for some PSD matrices $C(i)$ with $\sum_i C(i) = C$; we call $C$ the coherence matrix of $\I$. GIIs are free operations also for coherence: they cannot create coherent observables from incoherent ones by sequential combination. The observable measured by a GII is incoherent, namely
$\P(i) := \I^*_i(\id)=\sum_n c_{nn}(i) |n\rangle\langle n|$. 

\subsection{Sequential measurement setting}

Given the above concepts, the operational scheme in Fig.\ \ref{fig:scheme} naturally emerges; in this setting, coherence needed for incompatibility can now be characterised as follows:
\begin{figure}[t]
\begin{minipage}{\columnwidth}
\includegraphics[scale=0.29]{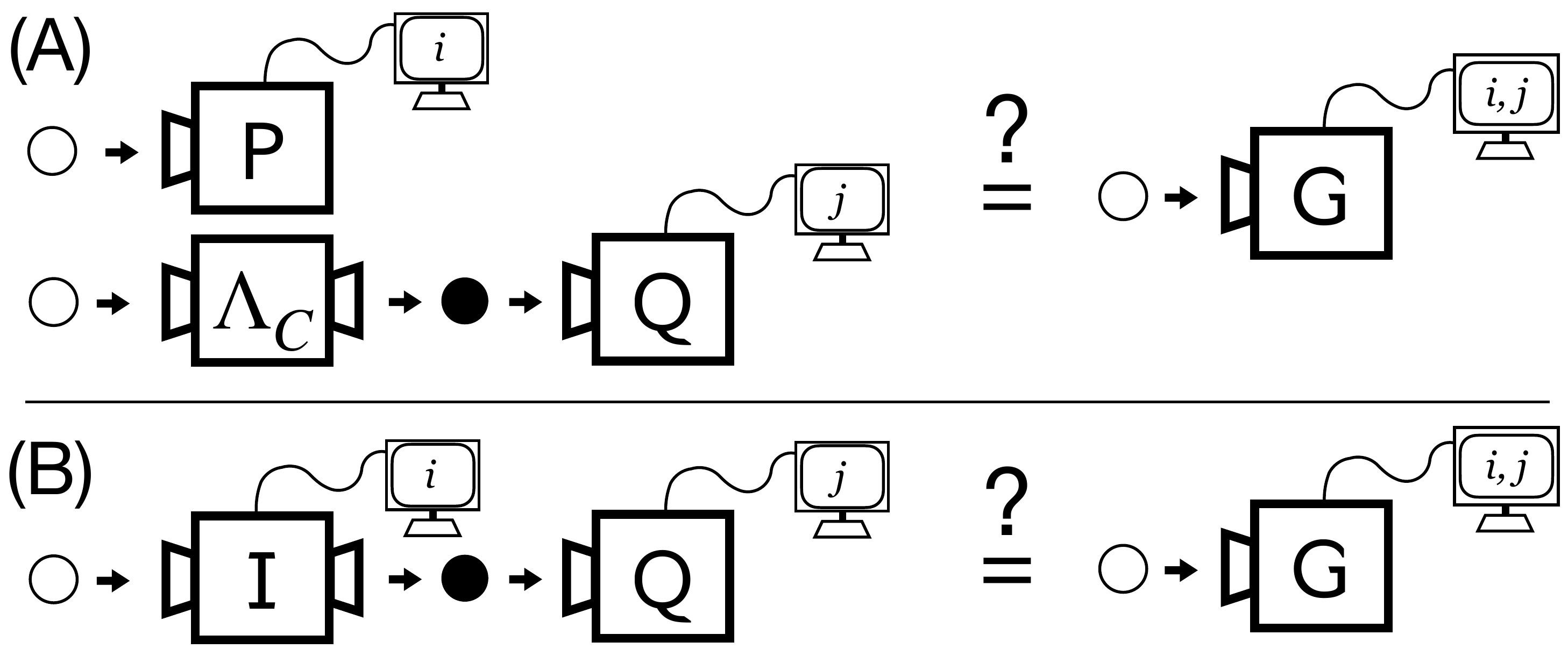}
\end{minipage}
\caption{\label{fig:scheme} (A): Operational setting for linking incompatibility and coherence. Here $\P$ is an incoherent observable and $\Lambda_C$ a channel which imprints a coherence pattern $C=(c_{nm})$ into a coherent observable $\Q$ by Hadamard multiplication. If $C*\Q$ is jointly measurable with $\P$ (via a joint POVM $\G$) for \emph{every} $\Q$,
the coherence $C$ cannot sustain any (pairwise) incompatibility involving $\P$, and
we write $\P\in \mathcal C_{C}$. (B): Sequential implementation of the joint POVM $\G$ in (A), given a GII $\I$ with observable $\P$ and channel $\Lambda_C$. By Thms.\ \ref{mubthm} and \ref{mainthm}, the existence of $\I$ is equivalent to joint measurability in (A) when $\Q$ is a maximally coherent.}
\end{figure}
\begin{theorem}\label{mainthm} Let $C$ be a coherence matrix and $\P$ an incoherent observable. The following are equivalent:
\begin{enumerate}
\item[(i)] $\P\in \mathcal C_{C}$;
\item[(ii)] There exists a GII with coherence matrix $C$ and observable $\P$, that is, matrices $C(j)$ satisfying
\begin{align}\label{GII0}
C(j)&\geq 0, & \textstyle\sum_{j} C(j) &= C, & c_{nn}(j) = p^{\P}_n(j).
\end{align}
\end{enumerate}
In that case a joint measurement of $\P$ and $C*\Q$, for any observable $\Q$, can be implemented sequentially by first measuring $\P$ using the GII in (ii), and subsequently $\Q$.
\end{theorem}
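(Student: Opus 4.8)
The plan is to prove the equivalence $(i)\Leftrightarrow(ii)$ by establishing both implications separately, and then to extract the sequential implementation statement from the construction used in $(ii)\Rightarrow(i)$. The easy direction is $(ii)\Rightarrow(i)$: given matrices $C(j)$ as in \eqref{GII0}, define for an arbitrary observable $\Q$ the candidate joint POVM $\G(i,j):=C(j)*\Q(i)$. Positivity of $\G(i,j)$ follows from the Schur product theorem since $C(j)\geq 0$ and $\Q(i)\geq 0$; the marginal $\sum_j\G(i,j)=\big(\sum_j C(j)\big)*\Q(i)=C*\Q(i)$ recovers $C*\Q$; and the other marginal $\sum_i\G(i,j)=C(j)*\id=\sum_n c_{nn}(j)|n\rangle\langle n|=\P(j)$ by the diagonal condition in \eqref{GII0} and the fact that $\P$ is incoherent. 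Since $\Q$ was arbitrary, this shows $\P\in\mathcal C_C$. This same computation, read operationally, gives the final sequential statement: the GII $\I$ with $\I_j^*(X)=C(j)*X$ measures $\P$ (as $\I_j^*(\id)=\P(j)$) with channel $\Lambda_C$, and composing with a subsequent measurement of $\Q$ yields exactly $\G(i,j)=\I_j^*(\Q(i))=C(j)*\Q(i)$, the joint POVM above.

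The harder direction is $(i)\Rightarrow(ii)$, and I expect this to be the main obstacle. By the remark following Theorem~\ref{mubthm}, $\P\in\mathcal C_C$ is equivalent to $\P$ being jointly measurable with $C*\Q_0$ for a single fixed maximally coherent $\Q_0$; choose $\Q_0(i)=|\psi_i\rangle\langle\psi_i|$ with $|\langle\psi_i|n\rangle|^2=d^{-1}$, say $\Q_0$ the POVM built from a fixed MUB (e.g.\ a Fourier basis), so that $\langle n|\Q_0(i)|m\rangle = d^{-1}\overline{w_{ni}}\,w_{mi}$ for suitable unit-modulus entries $w_{ni}$. Let $\G$ be a joint POVM: $\G(i,j)\geq 0$, $\sum_j\G(i,j)=C*\Q_0(i)$, $\sum_i\G(i,j)=\P(j)$. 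The task is to manufacture from $\G$ the matrices $C(j)$. The natural guess is to ``divide out'' the $\Q_0$ dependence: since the $(n,m)$ entry of $C*\Q_0(i)$ is $d^{-1}c_{nm}\overline{w_{ni}}w_{mi}$, and $\G(i,j)$ is dominated in an entrywise/PSD sense by $C*\Q_0(i)$, one wants to write $\langle n|\G(i,j)|m\rangle = d^{-1}\overline{w_{ni}}w_{mi}\,c_{nm}(j)$ for matrices $C(j)=(c_{nm}(j))$ that no longer depend on $i$. The key point to verify is that this $i$-independence is forced: summing the proposed identity over $i$ must reproduce both marginals consistently. This is where dilation theory enters (as flagged for Appendix~\ref{A:dilations} and Theorem~\ref{mubthm}); I would use a Naimark/Stinespring dilation of $\G$ together with the rigid structure of a maximally coherent $\Q_0$ — essentially the same dilation-theoretic argument that proves Theorem~\ref{mubthm} — to show that the ``coefficient matrices'' extracted fibre-wise in $i$ coincide, yielding well-defined $C(j)$. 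Once $i$-independence is established, $C(j)\geq 0$ follows because $C(j)$ is (up to the fixed invertible diagonal conjugation by $(w_{ni})$ and the overall factor $d$) a compression of the PSD matrix $\G(i,j)$; $\sum_j C(j)=C$ follows from $\sum_j\G(i,j)=C*\Q_0(i)$ after cancelling the $\Q_0$ factor; and $c_{nn}(j)=p_n^\P(j)$ follows from the diagonal of $\sum_i\G(i,j)=\P(j)$ since $\langle n|\Q_0(i)|n\rangle=d^{-1}$ and $c_{nn}=1$, so $\sum_i\langle n|\G(i,j)|n\rangle = d^{-1}\sum_i c_{nn}(j)=c_{nn}(j)$, which must equal $p_n^\P(j)$.

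In summary, the order of operations is: (1) prove $(ii)\Rightarrow(i)$ by the Schur-product computation of $\G(i,j)=C(j)*\Q(i)$ and check both marginals; (2) observe this computation \emph{is} the sequential implementation, giving the last sentence of the theorem immediately; (3) prove $(i)\Rightarrow(ii)$ by reducing to joint measurability with a fixed maximally coherent $\Q_0$ via Theorem~\ref{mubthm}, extracting candidate matrices $C(j)$ by dividing out the (invertible) structure of $\Q_0$ entrywise, and using a dilation argument to show the extracted data is genuinely independent of the outcome $i$ of $\Q_0$; (4) verify $C(j)\geq 0$ (compression of PSD), $\sum_j C(j)=C$ (first marginal), and the diagonal condition (second marginal). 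The crux — and the only step that is not routine — is step (3), the $i$-independence, which is exactly the content that dilation theory supplies and which is morally inherited from the proof of Theorem~\ref{mubthm}.
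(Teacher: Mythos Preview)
Your direction $(ii)\Rightarrow(i)$ and the sequential-implementation remark are correct and coincide with the paper's argument. The genuine gap is in $(i)\Rightarrow(ii)$: the claimed $i$-independence of the fibre-wise matrices is \emph{not} forced, and the dilation argument behind Theorem~\ref{mubthm} does not supply it. Concretely, with $\Q_0(i)=|\psi_i\rangle\langle\psi_i|$ let $U_i$ be the diagonal unitary with entries $\sqrt d\,\langle n|\psi_i\rangle$, so that $C*\Q_0(i)=d^{-1}U_iCU_i^*$; your fibre matrices are $C^{(i)}(j):=d\,U_i^*\,\G(i,j)\,U_i$. Each satisfies $C^{(i)}(j)\geq 0$ and $\sum_j C^{(i)}(j)=C$, but in general $C^{(i)}(j)\neq C^{(i')}(j)$: already for $d=2$ with $|c_{01}|<1$ and trivial $\P(j)=\mu(j)\id$, one can perturb the product joint POVM $\mu(j)\,C*\Q_0(i)$ by a small off-diagonal Hermitian term and obtain a valid joint POVM whose two fibres differ.

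What the proof of Theorem~\ref{mubthm} actually produces is an \emph{average}: there one sets $c_{nm}(j)=\langle\eta_n|\A(j)\eta_m\rangle$ with $\A(j)=\sum_i p(i)\F_i(j)$, which in your notation is exactly $C(j):=\sum_i p(i)\,C^{(i)}(j)$ (for a $d$-outcome MUB, $p(i)=d^{-1}$). All three conditions in \eqref{GII0} then follow from your own verifications, applied to the average rather than to a single fibre: $C(j)\geq 0$ as a convex combination of PSD matrices; $\sum_j C(j)=\sum_i p(i)\,C=C$; and on the diagonal $c_{nn}(j)=\sum_i p(i)\,c_{nn}^{(i)}(j)=\sum_i\langle n|\G(i,j)|n\rangle=p^{\P}_n(j)$. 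So the fix is simply to replace ``coincide'' by ``average'' --- and once you do, your argument is actually more elementary than the paper's route, since the dilation apparatus is no longer needed: the averaged $C(j)$ can be formed directly from $\G$.
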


The crucial part of the proof of this result is the construction of the special joint observable given in the proof of Thm. \ref{mubthm} (see Appendix \ref{A:mainthm}). The result is surprising, as joint observables with a GII implementation are quite special, requiring channel-observable compatibility \cite{heinosaari13,heinosaari14,heinosaari18}. We note that \eqref{GII0}, as a SDP \cite{boyd04}, is more efficient than a generic joint measurability SDP due to lower dimensionality, but still not analytically solvable except in simple cases; see Appendix \ref{A:centrod3} for a $d=3$ example. However, there is a useful class of GIIs: for each incoherent observable $\P$ define $C^{\P}(j)$ as in the proof of Prop.~\ref{basicprop2}; this is a GII with coherence matrix $C^{\P}$ given by the Hellinger distances: $c^{\P}_{nm} = \sum_j c^{\P}_{nm}(j)= 1-d^2_{nm}(\P)$. Following the proof of Prop.~\ref{basicprop2} we obtain a GII for the setting in Cor. \ref{cohcor2}:
\begin{proposition}\label{GIIprop} If $C*\inv(\P)\geq 0$ then $C(j) = C*\inv(\P) * C^{\P}(j)$ defines a GII with coherence matrix $C$ and observable $\P$. 
\end{proposition}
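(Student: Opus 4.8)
The plan is to verify directly that the matrices $C(j) := C * \inv(\P) * C^{\P}(j)$ satisfy the three defining conditions \eqref{GII0} of a GII with coherence matrix $C$ and observable $\P$, reusing the ingredients already assembled in the proof of Prop.~\ref{basicprop2}. First I would recall from that proof that $C^{\P}(j) \geq 0$ with entries $c^{\P}_{nm}(j) = \sqrt{p^{\P}_n(j) p^{\P}_m(j)}$, and that $\inv(\P) \geq 0$ whenever it is defined (indeed $\inv_{nm}(\P) = (1-d^2_{nm}(\P))^{-1}$ is a Gram-type matrix, positivity being exactly what makes Prop.~\ref{basicprop2} work). Given the hypothesis $C * \inv(\P) \geq 0$, the Schur product theorem immediately yields $C(j) = (C*\inv(\P)) * C^{\P}(j) \geq 0$, which is the first condition.

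For the normalisation $\sum_j C(j) = C$: since Hadamard multiplication is bilinear, $\sum_j C(j) = C * \inv(\P) * \bigl(\sum_j C^{\P}(j)\bigr)$, so it suffices to show $\sum_j C^{\P}(j) = \inv(\P)^{*(-1)}$ entrywise, i.e. that $\sum_j c^{\P}_{nm}(j) = 1 - d^2_{nm}(\P)$ for all $n,m$. But $\sum_j \sqrt{p^{\P}_n(j) p^{\P}_m(j)} = 1 - d^2_{nm}(\P)$ is precisely the definition of the Hellinger distance $d^2_{nm}(\P)$, so $\sum_j C^{\P}(j)$ is the entrywise reciprocal of $\inv(\P)$ and the Hadamard product $\inv(\P) * \sum_j C^{\P}(j)$ equals the all-ones matrix $J$. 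Hence $\sum_j C(j) = C * J = C$, as required. Finally, the third condition $c_{nn}(j) = p^{\P}_n(j)$ follows by reading off the diagonal: $c_{nn}(j) = c_{nn}\, \inv_{nn}(\P)\, c^{\P}_{nn}(j) = 1 \cdot 1 \cdot p^{\P}_n(j)$, using that $C$ has unit diagonal (coherence matrix), that $d^2_{nn}(\P) = 0$ so $\inv_{nn}(\P) = 1$, and that $c^{\P}_{nn}(j) = p^{\P}_n(j)$.

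There is essentially no hard obstacle here: the statement is a bookkeeping consequence of the construction in Prop.~\ref{basicprop2}, recast so that the single joint POVM $\G(i,j) = \inv(\P) * \M(i) * C^{\P}(j)$ of that proof is recognised as $\G(i,j) = C(i')\cdot(\ldots)$ — more precisely, one notes that the GII $\I^*_i(X) = C(i) * X$ reproduces exactly that joint observable when composed with a subsequent measurement of $\Q$ (with $\M = C*\Q$), since $C(i) * (C*\Q)(j)'$ unravels to the same Hadamard product. The only point requiring a line of care is checking that $\inv(\P)$ is genuinely well-defined and PSD under the stated hypothesis — but this is inherited verbatim from Prop.~\ref{basicprop2} and Cor.~\ref{cohcor2}, which already assume $\inv(\P)$ exists (all $d^2_{nm}(\P) < 1$) and use its positivity. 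So the proof is a three-line verification plus a pointer to the earlier construction for the sequential-implementation claim.
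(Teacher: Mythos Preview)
Your proof is correct and follows essentially the same three-line verification as the paper: positivity of $C(j)$ via the Schur product theorem applied to $(C*\inv(\P))*C^{\P}(j)$, the sum $\sum_j C(j)=C$ from $\sum_j C^{\P}(j)$ being the entrywise reciprocal of $\inv(\P)$, and the diagonal identity $c_{nn}(j)=p^{\P}_n(j)$. One minor remark: your parenthetical assertion that $\inv(\P)\geq 0$ ``is exactly what makes Prop.~\ref{basicprop2} work'' is not quite accurate and is in any case unnecessary here---the hypothesis $C*\inv(\P)\geq 0$ is all that is used, and neither Prop.~\ref{basicprop2} nor the present statement relies on $\inv(\P)$ itself being PSD.
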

\begin{proof}
Now $\sum_j C(j) = C*\inv(\P)* \sum_j C^{\P}(j) = C$, and $c_{nn}(j) = c^{\P}_{nn}(j) = p^{\P}_n(j)$. If $C*\inv(\P)\geq 0$ then $C(j)\geq 0$ by the Schur product theorem, so \eqref{GII0} holds for $C(j)$.
\end{proof}
As an example of a GII we consider the qubit case:
\begin{align*}
C &=\begin{pmatrix} 1 & c\\ \overline{c}& 1\end{pmatrix}, & \P(j) &= \begin{pmatrix}p_j & 0\\ 0 & q_j\end{pmatrix}, \quad j\in \mathbb N,
\end{align*}
with $c\in \mathbb C$, $|c|\leq 1$, and $(q_j)$, $(p_j)$ being probability distributions. The Hellinger distance is $d_{01}^2(\P) = 1-\gamma$ with $\gamma=\sum_j \sqrt{p_jq_j}$, and we obtain
$\P\in \mathcal C_{C}$ iff $|c| \leq \gamma$. Indeed, $\P\in \mathcal C_{C}$ implies $|c| \leq \gamma$ by Cor.\ \ref{cohcor}, while $|c| \leq \gamma$ implies $\P\in \mathcal C_{C}$ by Cor.\ \ref{cohcor2}, with the GII
$$
C(j) =\begin{pmatrix}p_j & c \gamma^{-1} \sqrt{q_jp_j} \\
\overline c\gamma^{-1} \sqrt{q_jp_j} & q_j\end{pmatrix}.
$$

Revisiting the noisy MUB example (see Prop. \ref{noisymubs}), we note that any $-(d-1)^{-1}\leq \lambda \leq 1$ defines a valid coherence matrix $C$, and $\P_\alpha$ is an incoherent observable for the same range of $\alpha$. When $\lambda<0$ the bounds of Cor.\ \ref{cohcor} and \ref{cohcor2} do not coincide, but our general method applies (Appendix \ref{A:uniform}), reproducing the result in \cite{carmeli19b}, and (additionally) yielding a sequential GII implementation for all jointly measurable cases, including the ``corner'' $\lambda=\alpha = -(d-1)^{-1}$ where the L\"uders instrument fails. We consider this interesting exceptional case here, and postpone the rest of the proof to the Appendix. We have $\P_\alpha(j)=(d-1)^{-1}(\id -|j\rangle\langle j|)$, and $$C = d(d-1)^{-1}(\id -|\phi_d\rangle\langle \phi_d|),$$ where $\phi_d = d^{-\frac 12}\sum_k |k\rangle$. In this case the two corollaries do not tell us anything: the tradeoff in Cor.\ \ref{cohcor} is not violated, and the matrix in Cor.\ \ref{cohcor2} is not positive semidefinite. However, now $\P_{\alpha}\in \mathcal C_{C}$ directly by Thm.\ \ref{mainthm}, as we can construct a GII $C(j)$ satisfying \eqref{GII0}: $$C(j) := (d-2)^{-1} (\id - |\phi_d\rangle\langle \phi_d| - |\varphi_d^j\rangle\langle \varphi_d^j|),$$ where $\varphi^j_d := \sqrt{d(d-1)^{-1}}(|j\rangle -d^{-\frac 12}\phi_d)$. Indeed, positivity follows from the fact that $\varphi^j_d$ is orthogonal to $\phi_d$ for each $j$ (so each $C(j)$ is a multiple of a projection), and it is easy to check that $\sum_j C(j) = C$ and $c_{nn}(j) = (d-1)^{-1} (1-\delta_{jn}) = p^{\P_{\alpha}}_n(j)$.

In addition to these examples, in Appendix \ref{A:smallcoh} we explicitly construct a GII for the spin-boson model: it establishes the tight bound for the coherence needed for incompatibility near the classical limit, and proves Prop.~\ref{asymptotic}.

\subsection{Reduction by symmetry}\label{symmetry}
One of the main obstacles in our joint measurability problem is the difficulty of finding the form of a suitable GII. Symmetries in the coherence pattern $C$ can be used to simplify the search, and also single out relevant incoherent observables. We outline this reduction here; detailed derivations are given in Appendix \ref{A:symmetry}.

First note that the matrix $D$ of a unitary GIO $\Lambda_D(\rho) =U^{\dagger}\rho U$ has $d_{nm}=u_n\overline{u_m}$ with $|u_n|=1$ for all $n$; then $\Lambda_D$ changes neither incompatibility nor coherence. For coherence matrices $C,C'$, we write $C\simeq C'$ if $C=D *C'$ for a unitary GIO $\Lambda_D$. Clearly, $\mathcal C_C=\mathcal C_{C'}$.

Second, we call an incoherent $\P$ \emph{adapted to $C$} if $p^{\P}_n=p^{\P}_m$ ($\P$ does not distinguish $n$ from $m$) whenever $(n,m)$ has maximal coherence $|c_{nm}|=1$. Crucially, each $\P\in \mathcal C_C$ is adapted to $C$, as $|c_{nm}|=1$ implies $d^2_{nm}(\P)=0$ by Cor.\ \ref{cohcor}. So every $\P$ \emph{not} adapted to $C$ has incompatibility due to coherence. Now $\{1,\ldots, d\}$ splits into $n_C\leq d$ disjoint equivalence classes $I^C_k$, $k\in \Omega_C:=\{1,\ldots, n_C\}$, such that $I^C_k=\{m\mid |c_{nm}|=1\}$ for any $n\in I_k^C$. Let $\tilde{\mathcal H}$ be the Hilbert space with incoherent basis $\{|k\rangle\}_{k\in \Omega_C}$ and define $L:\tilde{\mathcal H}\to \mathcal H$ by $L|k\rangle = \sum_{n\in I^C_k} |n\rangle$. Then $C\simeq L\tilde CL^\dagger$ for a ``reduced'' coherence matrix $\tilde C$ acting on $\tilde{\mathcal H}$.

As an example, consider the spin-boson model: we have the coherence matrix $C=C[\lambda]$ with $n_C=N+1$ equivalence classes $I^C_{k} = \{{\bf n} \mid \sum_l n_l =k\}$, $k\in \Omega_C=\{0,1,\ldots, N\}$ of size $|I^C_k|=\binom{N}{k}$, corresponding to the decoherence-free subspaces ${\rm span}\{|{\bf n}\rangle\mid {\bf n}\in I^C_k\}$. The resulting reduced coherence matrix $\tilde C$ is easily seen to be the one given by \eqref{sbmatrix}.

Next, let $S_d$, the group of permutations of $\{1,\ldots,d\}$, act on $\mathcal H$ via $U_\pi|n\rangle=|\pi(n)\rangle$. We define
$$G_C:= \{\pi\in S_d\mid U_\pi^\dagger  C U_\pi \simeq C\}.$$
In Appendix \ref{A:symmetry} we show that $G_C$ is a permutation group, i.e. a subgroup of $S_d$, and we call it the \emph{symmetry group} of $C$. It moves each class $I_k^C$ as a whole, and hence gives rise to a map $\phi:G_C\to G_{\tilde C}$ through $\pi(I^C_k) = I^C_{\phi(\pi)(k)}$. Now let $G$ be any subgroup of $G_C$. We then say that an incoherent observable $\P$ is $G$-\emph{covariant} if it has outcome set $\Omega_C$, and $p^{\P}_{\pi^{-1}(n)}(j) = p^{\P}_n(\phi(\pi)(j))$ for $\pi\in G$, $j\in \Omega_C$. These observables have their outcomes directly linked to the equivalence classes of the basis labels. The joint measurability problem reduces considerably when restricted to them; we set $$\mathcal C_C[G]:=\{\P\in \mathcal C_C\mid \P \text{ is $G$-covariant}\}.$$ The case of the full symmetry group is denoted by $\mathcal C_C^{\rm sym}:=\mathcal C_C[G_C]$. In Appendix \ref{A:symmetry} we show that for $\P\in \mathcal C_C^{\rm sym}$, the SDP \eqref{GII0} can be constrained by a corresponding covariance condition at the GII level without any loss. Furthermore, we also link the incoherent observables on $\tilde{\mathcal H}$ and $\mathcal H$ as $\mathcal L(|k\rangle\langle k|) = \sum_{n\in I^C_k} |n\rangle\langle n|$; the reduction by symmetry is then given by
\begin{equation}\label{symreduction}
\mathcal C_C^{\rm sym}=\{\mathcal L(\P(\cdot))\mid \P\in \mathcal C_{\tilde{C}}[\phi(G_C)]\}.
\end{equation}

We note that $\phi(G_C)$ may be different from $G_{\tilde C}$. However, they coincide when $|I^{C}_k|=|I^{C}_{\pi(k)}|$ for each $\pi\in G_{\tilde C}$ and $k\in \Omega_C$, i.e. the equivalence classes linked by permutations in the reduced symmetry group have equal size. In that case we have the straightforward reduction $\mathcal C_C^{\rm sym}=\{\mathcal L(\P(\cdot))\mid \P\in \mathcal C_{\tilde{C}}^{\rm sym}\}$. An example is provided by the spin-boson model, where the reduced coherence matrix $\tilde C=\tilde C[\lambda]$ is invariant under the \emph{exchange permutation} $\pi_0$ defined by $\pi_0(k) = N-k$ for $k\in \Omega_{C}$. In fact, $G_{\tilde C}=\{e, \pi_0\}$ for all $\lambda$, and $|I^{C}_k|=|I^{C}_{\pi_0(k)}|$ for each $k$, so $\phi(G_C)=\{e, \pi_0\}$. Hence, by \eqref{symreduction}, the set $\mathcal C_{C}^{\rm sym}$ is isomorphic to $\mathcal C_{\tilde C}^{\rm sym}$, as stated in Section~\ref{sb}.

Finally, we give a simple example involving also complex phase factors, and demonstrating the case $\phi(G_C)\neq G_{\tilde C}$: Let $\lambda\in [0,1)$, and consider the following:
\begin{align*}
C &:=\begin{pmatrix} 1 & i & \lambda i\\
-i & 1 & \lambda \\
-\lambda i& \lambda & 1
\end{pmatrix}\simeq \begin{pmatrix} 1 & 1 & \lambda\\
1 & 1 & \lambda \\
\lambda  & \lambda  & 1
\end{pmatrix}  &\longrightarrow  &&\tilde C &= \begin{pmatrix} 1 & \lambda \\ \lambda & 1\end{pmatrix}.
\end{align*}
Here $d=3$, $n_C=2$, $\Omega_C=\{1,2\}$, $I^C_1=\{1,2\}$, $I^C_2=\{3\}$, $G_C=\{ e, (12)\}$, $G_{\tilde C}=S_2$, $\phi(G_C)=\{e\}$. Note that the symmetry of $C$ is ``revealed'' after factoring out a unitary GIO in the first step. As a further subtlety, the exchange symmetry of $\tilde C$ is excluded as $|I^C_1|\neq |I^C_2|$. The point of the reduction is that we can use the simpler two-dimensional case to solve the original three-dimensional joint measurability problem. Indeed, we first characterise $\mathcal C_{\tilde{C}}[\{e\}]$ as described in the qubit case after Prop. \ref{GIIprop} above, and then use \eqref{symreduction}: $\mathcal C_{C}^{\rm sym}$ consists of the binary observables $\P$ of the form
\begin{align*}
\P(1)&=\begin{pmatrix} p & 0 & 0\\ 0& p& 0\\0 & 0& q\end{pmatrix},&\P(2)&= \id -\P(1),
\end{align*}
where $p,q\in [0,1]$ and $\sqrt{pq}+\sqrt{(1-p)(1-q)}\geq \lambda$.
%This is a convex subset of $[0,1]^2$.

\section{Conclusion} We considered a general operational setting where quantum coherence is tightly linked to measurement incompatibility. We derived two explicit conditions for the coherence needed for incompatibility, and demonstrated that these are amenable to analytical calculations even in large open quantum systems. Topics of further study include the infinite-dimensional case and adaptation to quantum steering.

\section*{Acknowledgements} D.M. has received funding from the European Union's Horizon 2020 research and innovation programme under the Marie Sk\l{}odowska-Curie Grant Agreement No. 663830. D.M. also acknowledges financial support by the TEAM-NET project co-financed by the EU within the Smart Growth
Operational Programme (Contract No. POIR.04.04.00-00-17C1/18-00). J.K. thanks Alessandro Toigo and Teiko Heinosaari for a useful discussion on the noisy MUB problem.

\appendix

\section{Properties of the entry-wise coherence}\label{A:coh}

Recall that an observable is a POVM $\M$ on a Hilbert space $\mathcal H=\mathbb C^d$ with (finite) outcome set $\Omega_\M$, i.e. $\M(i)\geq 0$ for each $i\in \Omega_\M$, and $\sum_{i\in \Omega_\M} \M(i) =\id$. We assume that the outcome set is taken minimal, i.e. $\M(i)\neq 0$ for each $i\in \Omega_\M$. We fix a basis $\{|n\rangle\mid n=1,\ldots,d\}$ and call it the \emph{incoherent basis}. For any observable we denote $p_n^{\M}(i):= \langle n|\M(i)|n\rangle$ for all $n\in \{1,\ldots, d\}$, $i\in \Omega_\M$.

An observable $\M$ is \emph{incoherent} if each POVM element $\M(i)$ is diagonal, i.e. $\M(i) = \sum_{n=1}^d p^{\M}(i) |n\rangle\langle n|$ for all $i\in\Omega_\M$. It is \emph{mutually unbiased to the incoherent basis} if $\M(i) = |\psi_i\rangle\langle \psi_i|$ where $\{\psi_i\mid i\in \Omega_\M\}$ is a basis of $\mathcal H$ such that $|\langle n|\psi_i\rangle|^2=d^{-1}$ for each $n\in \{1,\ldots,d\}$ and $i\in \Omega_\M$. Note that incoherent observables can have arbitrary outcome set $\Omega_\M$, while $|\Omega_\M|=d$ for any if $\M$ is mutually unbiased to the incoherent basis.

In the main text we introduced the \emph{entry-wise coherence} and \emph{Hellinger distances} for each $n,m\in \{1,\ldots,n\}$:
\begin{align*}
\coh_{nm}(\M) &= \sum_{i\in \Omega_\M} |\langle n|\M(i)|m\rangle|,\\ d^2_{nm}(\M) &= 1-\sum_{i\in \Omega_\M} \sqrt{p_n^{\M}(i)p_m^{\M}(i)}.
\end{align*}
We are not aware of the entry-wise coherence having appeared in the literature as such, but it has been used recently in the construction of overall $l_p$-type measures \cite{baek20}. Hellinger distance is a known $f$-divergence \cite{pollard02}, but (as far as we know) has not been used in the present context before. It is an actual metric in the space of probability distributions; in particular, if $d^2_{nm}(\M)=0$ for some $n,m$, then $p^\M_n(i) = p_m^\M(i)$ for all $i\in \Omega_\M$. Furthermore, $0\leq d^2_{nm}(\M)\leq 1$ by a simple application of the classical Schwarz inequality.

We say that $\M$ is \emph{maximally coherent} if $\coh_{nm}(\M)=1$ for all $n,m=1,\ldots,d$.
The following proposition summarises the basic properties of the entry-wise coherence:
\begin{appendixproposition}\label{maxcohprop} Let $\M$ be an observable.
\begin{itemize}
\item[(a)] (Bounds). $0\leq \coh_{nm}(\M)\leq 1-d^2_{nm}(\M)\leq 1$ for all $n,m$.
\item[(b)] (Zero coherence). $\coh_{nm}(\M) =0$ for all $n\neq m$, if and only if $\M$ is incoherent.
\item[(c)] (Maximal coherence). The following are equivalent:
\begin{enumerate}
\item[(i)] $\M$ is maximally coherent;
\item[(ii)] $\M$ has has rank one, and $p_n^\M$ is the same probability distribution for each $n=1,\ldots,d$;
\item[(iii)] There is a probability distribution $i\mapsto p(i)$ on $\Omega_\M$, and a sequence of unit vectors $\psi_i\in \mathcal H$ with $|\langle n|\psi_i\rangle|^2=d^{-1}$ for each $n=1,\ldots,d$, $i\in \Omega_\M$, such that $\M(i) =p(i) d |\psi_i\rangle\langle \psi_i|$ for each $i\in \Omega_\M$.
\end{enumerate}
If $\M$ has exactly $d$ outcomes, then $\M$ is maximally coherent if and only if $\M$ is a MUB to the incoherent basis.
\end{itemize}
\end{appendixproposition}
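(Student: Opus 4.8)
The plan is to derive (a) and (b) from two Cauchy--Schwarz inequalities, and to obtain (c) by tracking exactly when those inequalities become equalities. For part (a), $\coh_{nm}(\M)\geq 0$ is trivial; for the middle inequality I would use that each $\M(i)\geq 0$ defines a positive semidefinite form, whence $|\langle n|\M(i)|m\rangle|\leq\sqrt{\langle n|\M(i)|n\rangle\langle m|\M(i)|m\rangle}=\sqrt{p^{\M}_n(i)p^{\M}_m(i)}$, and summing over $i$ with the definition of $d^2_{nm}$ gives $\coh_{nm}(\M)\leq 1-d^2_{nm}(\M)$. The remaining inequality (equivalently $d^2_{nm}(\M)\geq 0$) is the classical bound $\sum_i\sqrt{p^{\M}_n(i)p^{\M}_m(i)}\leq\sqrt{\sum_i p^{\M}_n(i)}\sqrt{\sum_i p^{\M}_m(i)}=1$. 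Part (b) is then immediate: if $\M$ is incoherent every off-diagonal matrix element vanishes, so $\coh_{nm}(\M)=0$ for $n\neq m$; conversely a vanishing sum of nonnegative terms forces $\langle n|\M(i)|m\rangle=0$ for every $i$, so each $\M(i)$ is diagonal.

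For part (c) I would close the cycle (iii)$\Rightarrow$(i)$\Rightarrow$(ii)$\Rightarrow$(iii). The implication (iii)$\Rightarrow$(i) is a direct computation: with $\M(i)=d\,p(i)\,|\psi_i\rangle\langle\psi_i|$ and $|\langle n|\psi_i\rangle|^2=d^{-1}$ one finds $|\langle n|\M(i)|m\rangle|=d\,p(i)\,d^{-1}=p(i)$, hence $\coh_{nm}(\M)=\sum_i p(i)=1$. For (i)$\Rightarrow$(ii), maximal coherence forces equality throughout the chain of part (a) for every pair $n,m$: equality in the classical step gives $p^{\M}_n=p^{\M}_m$ for all $n,m$, i.e.\ a single distribution $p(i)$, and since the outcome set is minimal, $\M(i)\neq 0$ forces $p(i)>0$. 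Equality in the operator step, holding termwise, says $\M(i)^{1/2}|n\rangle$ and $\M(i)^{1/2}|m\rangle$ are linearly dependent for all $n,m$; each is nonzero (its squared norm is $p(i)>0$), so the range of $\M(i)^{1/2}$ is one-dimensional and $\M(i)$ has rank one, which is (ii). Finally (ii)$\Rightarrow$(iii) is bookkeeping: writing $\M(i)=|\phi_i\rangle\langle\phi_i|$, the condition $|\langle n|\phi_i\rangle|^2=p(i)$ for all $n$ yields $\|\phi_i\|^2=d\,p(i)$, so $\psi_i:=\phi_i/\|\phi_i\|$ satisfies $|\langle n|\psi_i\rangle|^2=d^{-1}$ and $\M(i)=d\,p(i)\,|\psi_i\rangle\langle\psi_i|$, while $\sum_i p(i)=\langle n|\sum_i\M(i)|n\rangle=1$.

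For the last assertion, the direction ``MUB $\Rightarrow$ maximally coherent'' is just (iii)$\Rightarrow$(i) with $p(i)=d^{-1}$. For the converse, when $|\Omega_\M|=d$ part (c) gives $\id=\sum_{i=1}^{d}d\,p(i)\,|\psi_i\rangle\langle\psi_i|=MM^\dagger$, where $M$ is the $d\times d$ matrix whose $i$-th column is $\sqrt{d\,p(i)}\,\psi_i$; a square matrix with $MM^\dagger=\id$ is unitary, so also $M^\dagger M=\id$, which forces $d\,p(i)=1$ and $\langle\psi_i|\psi_j\rangle=\delta_{ij}$. Thus $\M(i)=|\psi_i\rangle\langle\psi_i|$ with $\{\psi_i\}$ an orthonormal basis obeying $|\langle n|\psi_i\rangle|^2=d^{-1}$, i.e.\ a MUB to the incoherent basis.

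The only non-mechanical point, and hence the main obstacle, is recognising that the two equality conditions from part (a), taken together, pin down the rank-one structure needed in (i)$\Rightarrow$(ii), and that in the $|\Omega_\M|=d$ case the normalisation $\sum_i\M(i)=\id$ with exactly $d$ rank-one summands upgrades linear independence of the $\psi_i$ to orthonormality via the unitarity of $M$; everything else is routine.
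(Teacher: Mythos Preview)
Your proof is correct and follows essentially the same route as the paper's: both parts (a) and (c) hinge on the two Cauchy--Schwarz inequalities (the operator one for $\M(i)\geq 0$, phrased in the paper via the vectors $\sqrt{\M(i)}\,|n\rangle$, and the classical one for the Hellinger distance), with the rank-one structure in (i)$\Rightarrow$(ii) extracted from the termwise equality conditions exactly as you do. The only substantive difference is in the final MUB assertion: the paper invokes the well-known fact that a rank-one POVM with $d$ outcomes in dimension $d$ is a basis observable, whereas you supply an explicit proof via the unitarity of the matrix $M$ with columns $\sqrt{d\,p(i)}\,\psi_i$, which is a clean and self-contained way to get both $p(i)=d^{-1}$ and the orthonormality at once.
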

\begin{proof}
Denote $\psi_i^{n} = \sqrt{\M(i)} |n\rangle$, for each $n=1,\ldots, d$, $i\in \Omega_\M$. Then $\|\psi_i^n\|^2=p_n^\M(i)$, and hence $\sum_{i\in \Omega_\M} \|\psi_i^n\|^2=\sum_i p_n^\M(i) = 1$ for each $n$ by the normalisation of the observable $\M$. To prove (a) we use the Cauchy-Schwarz inequality:
\begin{align*}
\coh_{nm}(\M) &= \sum_i |\langle n|\M(i) |m\rangle| = \sum_i |\langle \psi_i^{n}|\psi_i^m\rangle|\\
&\leq  \sum_i \|\psi_i^{n}\|\||\psi_i^m\|=1-d_{nm}^2(\M)\leq 1. 
\end{align*}
If $\coh_{nm}(\M) =0$ for all $n\neq m$ we have $\langle n|\M(i)|m\rangle = 0$ for all $n\neq m$, and hence $\M$ is incoherent; this proves (b).

To prove (c), assume (i), so that $\coh_{nm}(\M)=1$ for each pair $(n,m)$. Then the second inequality in the above calculation is saturated, so $d_{nm}^2(\M)=0$ for all $n\neq m$, which implies $p(i):=p^\M_n(i) = p_m^\M(i)$ for each pair $n,m=1,\ldots,d$, and all $i\in \Omega_\M$, so $\|\psi_i^n\|^2=p(i)$ for each $n,i$ (i.e. the norm only depends on $i$). Note that $p(i)>0$ for each $i$ (since otherwise $\M(i)=0$). Also the first inequality is saturated, that is, $\sum_i ( \|\psi_i^{n}\|\||\psi_i^m\|-|\langle \psi_i^{n}|\psi_i^m\rangle|)=0$, so $ |\langle \psi_i^{n}|\psi_i^m\rangle|=\|\psi_i^{n}\|\||\psi_i^m\|$ for all $n,m,i$, as each term in the sum is nonnegative. Hence the Cauchy-Schwarz inequality is saturated for each pair $\psi^n_i,\psi_i^m$, so $\psi_i^n =c_i^{nm}\psi_i^m$ for some constants $c_i^{nm}\in \mathbb C$ which must have modulus one as $\|\psi_i^n\|^2 = \|\psi_i^m\|^2=p(i)$. Define $\psi_i := p(i)^{-\frac 12} \psi_i^{1}$ for each $i\in \Omega_\M$. Then $\|\psi_i\|=1$, and $\sqrt{\M(i)} |n\rangle=\psi_i^n = c_{i}^{n1}\psi_i^{1}=c_{i}^{n1}\sqrt{p(i)}\psi_i$ for all $n,i$, showing that $\sqrt{\M(i)}$ (and hence also $\M(i)$) has rank one with range spanned by $\psi_i$. Hence (ii) holds. Assume now (ii). Since $\M$ has rank one we can write $\M(i) = p(i) d|\psi_i\rangle\langle\psi_i|$ where $p(i) := {\rm tr}[\M(i)]/d$ is a probability distribution and $\psi_i$ is a unit vector for each $i\in\Omega_\M$. Since $p^\M_n=p^\M_m$ for all $n,m$, we must have $p(i)=\frac 1d\sum_m p^\M_m(i)=p^\M_n(i)= \langle n|\M(i)|n\rangle =p(i) d |\langle \psi_i|n\rangle|^2$ for all $n,i$, which shows that $|\langle \psi_i|n\rangle|^2=d^{-1}$ for all $n,i$. Hence (iii) holds. Finally, assuming (iii) we easily check that $\M$ has maximal coherence, i.e. (i) holds, and we have established the equivalences in (c). The last claim follows immediately from (iii) and the well-known fact (which is easy to prove) that any rank one observable in dimension $d$ with $d$ outcomes is necessarily a basis observable.
\end{proof}
As noted above, any observable mutually unbiased to the incoherent basis is incoherent. Moreover, any refinement of a such an observable is maximally coherent: if $q_k(i)_{i\in\Omega_k}$ defines a probability distribution for each $k=1,\ldots,d$ (where $\Omega_k$ are distinct sets), and $\{\psi_k\mid k=1,\ldots,d\}$ is mutually unbiased to the incoherent basis, let $\Omega = \cup_{k}\Omega_k$, and $\M(i) := q_k(i)|\psi_k \rangle\langle\psi_k|$ whenever $i\in \Omega_k$. Another class of maximally coherent observables is given as follows: take any pair of MUBs $\{\phi_k\mid k=1,\ldots N\}$ and $\{\xi_k\mid k=1,\ldots,N\}$ in a larger Hilbert space $\mathcal M=\mathbb C^N$, and define an isometry $V:\mathcal H\to \mathcal M$ by $V|n\rangle = \xi_n$ for $n=1,\ldots,d$. Then $\M(i) := |V^*\phi_i\rangle\langle V^*\phi_i|$, $i\in \Omega_\M:=\{1,\ldots, N\}$ is a maximally coherent observable with the distribution $p(i) = 1/N$ in the above proposition. An explicit example of this type is obtained by taking $\Omega_\M=\{0,\ldots,N\}$, $p(i) =1/N$ and
$\ket{\psi_i}=\frac{1}{\sqrt{d}}\sum_{j=0}^{d-1}\omega_N^{ji}\ket{j}$
for $i=0,\ldots,N-1$, where $\omega_N=e^{2\pi i/N}$.

\section{Dilation theory}\label{A:dilations}

We review here briefly some well-known aspects of dilation theory of quantum channels and observables (see, e.g.,\ \cite{QM}), applied to our framework introduced in the main text.

First recall that the \emph{Naimark dilation} of an observable $\M=(\M(i))_{i\in \Omega_\M}$ on a Hilbert space $\mathcal H$ is a projection valued observable $\A$ on a larger Hilbert space $\hi_\oplus$ %$\mathcal H\otimes \mathcal K$ 
such that $\M(i) = J^\dagger \A(i)J$ for all $i$, where $J:\mathcal H\to \mathcal H_\oplus$ is an isometry, i.e.\ $J^\dagger J=\id$. The dilation is \emph{minimal}, if $\mathcal H_\oplus={\rm span}\{\A(i)J\varphi\mid \varphi\in \mathcal H, \, i\in\Omega_\M\}$. 
The following is a basic joint measurability result:

\begin{appendixtheorem}[\cite{Optimal}]\label{JMthm} Let $\F=(\F(j))_{j\in\Omega_\F}$ be any observable jointly measurable with $\M$ and $(\hi_\oplus,\A,J)$ a minimal Naimark dilation of $\M$. Then each joint observable $\G$ of $\M$ and $\F$ is of the form $\G(i,j) = J^\dagger \A(i) \B(j)J$ where $\B$ is a unique  POVM of $\hi_\oplus$ such that $[\A(i),\B(j)]=0$ for all $i\in \Omega_\M$, $j\in \Omega_\F$.
\end{appendixtheorem}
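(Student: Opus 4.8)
The final statement to prove is Appendix Theorem~\ref{JMthm}, characterising joint observables of $\M$ and $\F$ via the minimal Naimark dilation of $\M$. The plan is to exploit the universal property of the minimal Naimark dilation as a ``smallest'' reproducing object, together with the correspondence between joint POVMs of $\M$ and $\F$ and Naimark-type dilations of $\M$ augmented by a commuting observable. Concretely, I would argue as follows. Given any joint observable $\G$ of $\M$ and $\F$, first take a minimal Naimark dilation $(\mathcal K, \mathsf E, W)$ of the \emph{joint} POVM $\G$, so that $\G(i,j) = W^\dagger \mathsf E(i,j) W$ with $\mathsf E$ projection-valued and $\mathcal K = \mathrm{span}\{\mathsf E(i,j)W\varphi\}$. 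Setting $\A'(i) := \sum_j \mathsf E(i,j)$ and $\B'(j) := \sum_i \mathsf E(i,j)$, these are commuting POVMs on $\mathcal K$ (indeed $\A'$ is projection-valued since $\mathsf E$ is a refinement of it, and $[\A'(i),\B'(j)]=0$ because all the $\mathsf E(i,j)$ mutually commute, being orthogonal projections summing to $\id$), and $W^\dagger \A'(i) W = \sum_j \G(i,j) = \M(i)$. So $(\mathcal K, \A', W)$ is a (projection-valued) dilation of $\M$, though not necessarily minimal.

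The second step is to pass from this possibly-non-minimal dilation down to the minimal one $(\hi_\oplus, \A, J)$ without losing the commuting observable. For this I would invoke the standard uniqueness-up-to-isometry property of the minimal Naimark dilation: there is an isometry $Y:\hi_\oplus \to \mathcal K$ with $YJ = W$ and $Y^\dagger \A'(i) Y = \A(i)$; moreover $Y$ intertwines the dilations in the sense that $\A'(i) Y = Y \A(i)$ on the relevant subspace, and $\ran Y$ reduces $\A'$. Then define $\B(j) := Y^\dagger \B'(j) Y$ on $\hi_\oplus$. Because $\ran Y$ is invariant under each $\A'(i)$ and the $\B'(j)$ commute with the $\A'(i)$, one checks $\ran Y$ is also invariant under $\B'(j)$ (it is the closed span of vectors $\A'(i)W\varphi = \A'(i)YJ\varphi$, and $\B'(j)$ commutes with $\A'(i)$ and preserves $\ran W \subseteq \ran Y$ --- here one uses minimality of the $\hi_\oplus$ dilation crucially). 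Hence $\B$ is a genuine POVM on $\hi_\oplus$, it commutes with $\A$ (pull the commutation of $\A',\B'$ back through the reducing isometry $Y$), and
$$ J^\dagger \A(i)\B(j) J = J^\dagger Y^\dagger \A'(i)\B'(j) Y J = W^\dagger \A'(i)\B'(j) W = W^\dagger \mathsf E(i,j) W = \G(i,j),$$
where the middle equality uses $\A'(i)\B'(j) = \A'(i) \sum_{i'}\mathsf E(i',j) = \mathsf E(i,j)$ since $\A'(i) = \sum_j \mathsf E(i,j)$ is the orthogonal projection onto the range of $\mathsf E(i,\cdot)$ and the $\mathsf E$ are mutually orthogonal in the first index.

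For uniqueness, suppose $\B_1,\B_2$ are two commuting POVMs on $\hi_\oplus$ with $J^\dagger \A(i)\B_k(j) J = \G(i,j)$ for all $i,j$. Since $\A(i)$ are orthogonal projections and $[\A(i),\B_k(j)]=0$, we get $\A(i)\B_k(j)\A(i') = \delta_{ii'}\A(i)\B_k(j)\A(i)$, and $\{\A(i)J\varphi\}$ spans $\hi_\oplus$ by minimality; computing $\langle \A(i)J\varphi, \B_k(j)\A(i')J\varphi'\rangle = \delta_{ii'}\langle J\varphi, \A(i)\B_k(j) J\varphi'\rangle = \delta_{ii'}\langle\varphi,\G(i,j)\varphi'\rangle$ shows the sesquilinear form of $\B_k(j)$ is determined on a total set, so $\B_1(j) = \B_2(j)$. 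The main obstacle I anticipate is the second step: carefully establishing that the reducing subspace $\ran Y$ of the minimal dilation of $\M$ is genuinely invariant under the commuting observable $\B'$ pulled back from the (larger) minimal dilation of $\G$ --- this is where minimality of the $\hi_\oplus$-dilation and the commutation $[\A',\B']=0$ must be combined just right, and it is the crux of why the \emph{minimal} dilation is the correct object in the statement. Everything else (that orthogonal projections summing to the identity mutually commute, Cauchy--Schwarz-type manipulations, the uniqueness argument) is routine, and in fact this is a cited result \cite{Optimal}, so I would either reproduce this dilation-theoretic argument or simply appeal to \cite{Optimal} directly.
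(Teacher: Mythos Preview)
The paper does not supply its own proof of this theorem; it simply cites \cite{Optimal} and then uses the result in the proof of Theorem~\ref{mubthm}. So there is nothing to compare against at the level of argument, and your final remark that one could just appeal to \cite{Optimal} is exactly what the paper does.

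Your sketch is nonetheless essentially correct, and your uniqueness argument is clean. One comment on the existence part: the step you flag as the ``main obstacle'' --- invariance of $\ran Y$ under $\B'(j)$ --- is not actually needed, and your justification of it (that $\B'(j)$ preserves $\ran W$) is not obviously true. The point is that the intertwining relation $\A'(i)Y = Y\A(i)$ alone already does the job. Taking adjoints gives $Y^\dagger \A'(i) = \A(i) Y^\dagger$, and then with $\B(j):=Y^\dagger \B'(j)Y$ one computes directly
\[
\A(i)\B(j) = \A(i)Y^\dagger \B'(j)Y = Y^\dagger \A'(i)\B'(j)Y = Y^\dagger \B'(j)\A'(i)Y = Y^\dagger \B'(j)Y\A(i) = \B(j)\A(i),
\]
and similarly $J^\dagger \A(i)\B(j)J = W^\dagger \A'(i)\B'(j)W = W^\dagger \mathsf E(i,j)W = \G(i,j)$. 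That $\B$ is a POVM follows from $Y^\dagger Y=\id$. So the minimality of $(\hi_\oplus,\A,J)$ enters only through the construction of the intertwining isometry $Y$ and through your uniqueness argument, not through any invariance of $\ran Y$ under $\B'$.
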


Any quantum channel $\Lambda$ of $\hi\simeq\mathbb C^d$ has a {\it minimal Stinespring dilation,} i.e.\ its Heisenberg picture (a completely positive unital map on the matrix algebra $M_d(\mathbb C)$)
 can be written in the form $\Lambda^*(X)=J^\dagger (X\otimes \id)J$, $X\in M_d(\mathbb C)$,
 where $J:\,\hi\to\hi\otimes\ki$ is an isometry, $\ki$ a Hilbert space (an ancilla) and the vectors 
 $(X\otimes\id)J\psi$, $X\in M_d(\mathbb C)$, $\psi\in\hi$, span $\hi\otimes\ki$ \cite{paulsen02}.
 It follows from the Radon-Nikodym theorem of completely positive maps \cite{raginsky03} that any instrument $(\I_i)_{i\in \Omega}$ whose channel is $\Lambda$ has the form $\I^*_i(X) = J^\dagger (X\otimes \F(i))J$ where $\F=(\F(i))_{i\in \Omega}$ is a (unique) POVM.

We now specialise to our case, with a channel $\Lambda_C$ given by a PSD matrix $C$ with unit diagonal, through Hadamard multiplication $\Lambda_C^*(X)=C*X$. Since $C\geq 0$, we may write $c_{nm}=\langle \eta_n|\eta_m\rangle$ where $\eta_n$ are unit vectors in a Hilbert space $\mathcal K$ with dimension equal to the rank of $C$, that is, $\mathcal K={\rm span} \{\eta_n\mid n=1,\ldots,d\}$ (see, e.g.,\ \cite{helm09}). These vectors constitute the minimal Stinespring dilation $\Lambda^*_C(X) = J^\dagger (X\otimes \id)J$ of $\Lambda_C$, where the isometry is defined by $J|n\rangle = |n\rangle\otimes \eta_n$ (note that $|m\rangle\otimes\eta_n=(X\otimes\id)J|n\rangle$ where $X=\kb m n$).
Then any GII $(\I_j)_{j\in \Omega}$ with channel $\Lambda_C$ has the form 
$$
\I^*_j(X) = J^\dagger (X\otimes \F(j))J=\sum_{n,m}\langle n|X|m\rangle\langle \eta_n|\F(j)\eta_m\rangle\kb n m
$$ where $\F=(\F(j))_{j\in \Omega}$ is a POVM of $\ki$.
This gives us the Hadamard form $\I^*_j(X) = C(j)* X$ used in the main text, with the matrix $C(j)$ given by
\begin{equation}\label{GIIdilation}
c_{nm}(j)=\langle \eta_n|\F(j)\eta_m\rangle.
\end{equation}
We stress that \emph{every} instrument with channel $\Lambda_C$ has this form. In particular, if $\{C(j)\}$ is any collection of PSD matrices with $\sum_j C(j)=C$, then $C(j)$ can be written as \eqref{GIIdilation} for some (unique) $\F$.

\section{Proofs of Theorems \ref{mubthm} and \ref{mainthm}}\label{A:mainthm}

Recall that in Theorem \ref{mubthm} we let $\Q_0$ be any fixed maximally coherent observable, with outcome set $\Omega_0$ (assumed to be minimal). By Prop. \ref{maxcohprop}, we can write
\begin{equation}\label{maxcohQ}
\Q_0(i) = p(i) d |\psi_i\rangle\langle\psi_i|
\end{equation}
where $i\mapsto p(i)$ is a probability distribution on $\Omega_0$, and $\psi_i$ are unit vectors such that $|\langle \psi_i|n\rangle|^2=d^{-1}$ for all $n,i$. Let $\mathcal M$ be a copy of $\mathbb C^{|\Omega_0|}$, and note that $\langle n|\Q_0(i)|n\rangle = p(i)$ for all $n=1,\ldots,d$.

The proofs are based on the dilation theory described above; let $\eta_n$ be vectors such that $c_{nm}=\langle \eta_n|\eta_m\rangle$ as in Appendix \ref{A:dilations}.

{\it Proof of Theorem \ref{mubthm}}. Fix a basis $\{|i\rangle\mid i\in \Omega_{0}\}$ of $\mathcal M$, and define an isometry $V:\mathcal H\to \mathcal M \otimes \mathcal K$ via
$$
V|n\rangle = \sum_{i\in \Omega_{\Q_0}} \sqrt{p(i)d}\,\langle \psi_i|n\rangle\, |i\rangle \otimes \eta_n.
$$
Then $C* \Q_0(i) = V^*(|i\rangle\langle i|\otimes \id_{\mathcal K})V$. Hence, this is a Naimark dilation of the observable $C*\Q_0$. Since ${\rm span}\{(|i\rangle\langle i|\otimes \id)V|n\rangle \mid n\in\{ 1,\ldots, d\},i\in\Omega_{0} \}= {\rm span}\{|i\rangle\otimes \eta_n\mid n\in\{1,\ldots, d\}, i\in \Omega_0\}=\mathcal M\otimes\mathcal K$ $(=\hi_\oplus)$, the dilation is minimal.

Now assume that $C*\Q_0$ is jointly measurable with an incoherent observable $\P$. Since the above dilation is minimal, Thm.\ \ref{JMthm} applies: $\P$ must be of the form $\P(j) = V^\dagger \F(j)V$, $j\in\Omega_\P$, where $[\F(j), |i\rangle\langle i|\otimes \id_\mathcal K]=0$ for all $i,j$. This implies that for each $i$ there is a POVM $\F_i=(\F_i(j))_{j\in \Omega_\P}$ on $\mathcal K$, such that $\F(j) = \sum_i |i\rangle\langle i|\otimes \F_i(j)$. Furthermore, according to Thm.\ \ref{JMthm}, $\P$ and $C*\Q_0$ have a joint observable
\begin{align*}
\tilde \G(i,j) &= V^\dagger \F(j)(|i\rangle\langle i|\otimes \id)V= V^\dagger (|i\rangle\langle i|\otimes \F_i(j))V\\
&=\sum_{n,m} \langle n|\Q_0(i)|m\rangle \langle \eta_n |\F_i(j)\eta_m\rangle |n\rangle\langle m|
%& = \frac 1d\sum_{n,m} \langle \eta_n |F_x(j)\eta_m\rangle e^{2\pi i(n-m)x/d} |n\rangle\langle m|,
\end{align*}
and hence $\P$ must have the form
\begin{align}\label{crucial}
\P(j) &= \sum_i \tilde \G(i,j)\nonumber= \sum_{n,m}\sum_i \langle \eta_n |\F_i(j)\eta_m\rangle \langle n|\Q_0(i)|m\rangle |n\rangle\langle m|\nonumber\\
&=\sum_{n}\sum_i \langle \eta_n |\F_i(j)\eta_n\rangle \langle n|\Q_0(i)|n\rangle |n\rangle\langle n|\nonumber\\
& = \sum_{n}\big\langle \eta_n \big|\sum_i p(i)\F_i(j)\eta_n\big\rangle |n\rangle\langle n|.
\end{align}
In the third step we have used the assumption that $\P$ is incoherent (so there are no off-diagonal elements), and in the fourth step the maximal coherence condition $\langle n|\Q_0(i)|n\rangle =p(i)$. We now define, for each $j$, a matrix $C(j)$ by $c_{nm}(j) := \langle \eta_n |\A(j)|\eta_m\rangle$, where $\A := \sum_i p(i)\F_i$ is a POVM by convexity. Therefore we have $C(j)\geq 0$ and $\sum_j c_{nm}(j)= \langle \eta_n|\eta_m\rangle =c_{nm}$, that is, $C(j)$ form a GII whose channel is $\Lambda_C$. Finally, by the computation \eqref{crucial}, $\P(j) = \sum_n  \langle \eta_n |\A(j)\eta_n\rangle |n\rangle\langle n| = \sum_n c_{nn}(j) |n\rangle\langle n|$, showing that the observable of this GII is precisely $\P$. Applying the GII to any observable $\Q$ we get a joint observable $\G(i,j) = C(j)*\Q(i)$ for $C*\Q$ and $\P$, as $\sum_i \G(i,j) = C(j) *\id = \P(j)$ and $\sum_j \G(i,j) = C*\Q(i)$. Hence $C*\Q$ and $\P$ are jointly measurable. This completes the proof of Thm. \ref{mubthm}.

The crucial point of the proof is the computation \eqref{crucial}; one can readily see how the two strong assumptions, $\P$ incoherent and $\Q_0$ maximally coherent, fit together rather neatly to form the \emph{single} dilation POVM $\A$.

{\it Proof of Theorem \ref{mainthm}}. If (i) holds then $C*\Q$ is jointly measurable with $\P$ for all $\Q$, so in particular for $\Q_0$. By the above proof we obtain matrices $C(j)$ satisfying \eqref{GII0}, so (ii) holds. Conversely, if such matrices exist (that is, (ii) holds), the observable $\G(i,j) = C(j)*\Q(i)$ defined in the above proof is a joint observable for $\P$ and $C*\Q$ for any observable $\Q$, hence $\P\in \mathcal C_C$, i.e., (i) holds.

\section{Reduction by symmetry}\label{A:symmetry}

Here we develop in detail the theory of covariance systems for a $d\times d$ coherence matrix $C$. Recall that the aim is to characterise the set $\mathcal C_C$ of incoherent observables $\P$ for which there is a GII with GIO $\Lambda_C$ and observable $\P$. The idea is that symmetries in the coherence pattern can be used to simplify the problem, and single out relevant incoherent observables.

As above, we make use of the dilation $c_{nm}=\langle \eta_n|\eta_m\rangle$ (see Appendix \ref{A:dilations}). For each pair $(n,m)$ we write $n\sim_C m$ when $|c_{nm}|=1$. This implies that $|\langle \eta_n|\eta_m\rangle| = 1=\|\eta_n\|\|\eta_m\|$, i.e.\ the Cauchy-Schwarz inequality is saturated for this pair of unit vectors, and hence $\eta_n = e^{i\theta}\eta_m$ for some $\theta\in \mathbb R$. Now if $|c_{nm}|=1$ and $|c_{mk}|=1$ then $\eta_n = e^{i\theta}\eta_m$ and $\eta_m = e^{i\theta'}\eta_k$, so $\eta_n = e^{i(\theta+\theta')} \eta_k$, which implies $|c_{nk}|=1$. Hence the relation $n\sim_C m$ is transitive, and since $|c_{nm}|=|c_{mn}|$ for each pair $(n,m)$, it is also symmetric, so an equivalence relation on the set $\{1,\ldots, d\}$. Hence the set splits into a union $\{1,\ldots ,d\}=\cup_{k=1}^{n_C} I^C_k$ of $n_C$ equivalence classes $I^C_k$ (unique up to ordering). We let $\Omega_C=\{1,\ldots, n_C\}$.

Now we pick from each equivalence class $I^C_k$ one fixed representative $n_k\in I^C_k$, and let $\tilde \eta_k:=\eta_{n_k}$ for each $k$; then for each $n=1,\ldots, d$ there is a unique phase factor $e^{i\theta_n}$ so that $\eta_n = e^{i\theta_n} \tilde \eta_{k}$ where $I^C_k$ is the class of $n$. Now define an $n_C\times n_C$ matrix $\tilde C$ by $\tilde c_{kk'} := \langle \tilde\eta_{k}|\tilde\eta_{k'}\rangle$. By construction, $\tilde C$ is a structure matrix, and we observe that $c_{nm} = \langle \eta_n|\eta_m\rangle = e^{-i(\theta_n-\theta_m)} \tilde c_{kk'}$ whenever $n\in I^C_k$ and $m\in I^C_{k'}$. Let $D$ be the matrix $d_{nm} = e^{-i(\theta_n-\theta_m)}$; this is symmetric rank-1, hence the structure matrix of a unitary GIO $\Lambda^*_D(X) = U_DXU^{\dagger}_D$ where $U_D$ is the diagonal unitary with phases $e^{-i\theta_n}$ on the diagonal. We then let $\tilde{\mathcal H}$ be the Hilbert space with incoherent basis $\{|k\rangle\}_{k\in \Omega_C}$ and define $L:\tilde{\mathcal H}\to \mathcal H$ by $$L|k\rangle = \sum_{n\in I^C_k} |n\rangle.$$

We then obtain the decomposition
$C = D* (L\tilde CL^\dagger )$, so that $C \simeq L\tilde CL^\dagger $, and see that the $(n,m)$ entry of the structure matrix $L\tilde CL^\dagger $ is equal to $\tilde c_{kk'}$ for all $n\in I^C_k$, $m\in I^C_{k'}$, that is, only depends on the classes of $n$ and $m$. In other words, after the unitary GIO is factored out, the remaining channel compresses into the GIO $\Lambda_{\tilde C}$ on a $n_C$-dimensional system. By construction, this channel is unique up to diagonal unitaries, corresponding to different choices of the representatives $n_k$.

Assuming $\P$ is adapted, we can compress it into an incoherent observable $\tilde \P$ on the $n_C$-dimensional system, by setting $p^{\tilde \P}_k := p^{\P}_n$ for any $n\in I^C_k$, so that $\P(j) = \mathcal L(\tilde \P(j))$ where $\mathcal L: {\rm span} \{|k\rangle\langle k|\mid k\in \Omega_C\} \to {\rm span} \{|n\rangle\langle n|\mid n=1,\ldots,d\} $ links the two diagonal algebras ``incoherently'':
$$\mathcal L(|k\rangle\langle k|) = \sum_{n\in I^C_k} |n\rangle\langle n|.$$
We can now prove our first reduction result:

\

\begin{appendixproposition}\label{adapted}
There is a GIO $\Lambda_{\tilde C}$ acting on $\tilde{\mathcal H}$ such that $C\simeq L\tilde CL^\dagger$. Then $\mathcal C_{C} = \{ \mathcal L(\P(\cdot) )\mid \P\in \mathcal C_{\tilde C}\}$.
\end{appendixproposition}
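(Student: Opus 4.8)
The plan is to treat the first sentence as already established: the construction immediately preceding the statement produces the vectors $\eta_n$ with $c_{nm}=\langle \eta_n|\eta_m\rangle$, the equivalence classes $I^C_k$, a choice of representatives, and the diagonal‑unitary factorisation $C = D*(L\tilde CL^\dagger)$, hence $C\simeq L\tilde CL^\dagger$ with $\tilde C$ the structure matrix of a GIO $\Lambda_{\tilde C}$ on $\tilde{\mathcal H}$. Since $\mathcal C_C = \mathcal C_{C'}$ whenever $C\simeq C'$ (Section~\ref{symmetry}), and since the classes $I^C_k$, the maps $L,\mathcal L$ and the reduced matrix $\tilde C$ are unaffected (up to a harmless diagonal unitary on $\tilde C$) by multiplying $C$ by $D$, I would assume without loss of generality that $C = L\tilde CL^\dagger$, i.e.\ $c_{nm}$ depends only on the classes of $n$ and $m$. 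Everything then rests on the GII characterisation of membership in $\mathcal C_C$ from Theorem~\ref{mainthm}, applied on both $\mathcal H$ and $\tilde{\mathcal H}$.

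For the inclusion $\supseteq$ I take $\P\in\mathcal C_{\tilde C}$ and lift it. Theorem~\ref{mainthm} supplies matrices $\tilde C(j)\geq 0$ on $\tilde{\mathcal H}$ with $\sum_j\tilde C(j)=\tilde C$ and $\tilde c_{kk}(j)=p^\P_k(j)$. I would set $C(j):=L\,\tilde C(j)\,L^\dagger$: this is PSD, sums to $L\tilde CL^\dagger=C$, and---using $L^\dagger|n\rangle=|k\rangle$ for $n\in I^C_k$---has diagonal $c_{nn}(j)=\tilde c_{kk}(j)=p^\P_k(j)$, which is exactly the diagonal of the incoherent operator $\mathcal L(\P(j))$. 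So $\{C(j)\}$ is a GII with coherence matrix $C$ and observable $\mathcal L(\P(\cdot))$, whence $\mathcal L(\P(\cdot))\in\mathcal C_C$ by Theorem~\ref{mainthm}.

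For the inclusion $\subseteq$ I take an incoherent $\N\in\mathcal C_C$ and push it down. First, $\N$ is adapted to $C$: Corollary~\ref{cohcor} gives $d^2_{nm}(\N)\leq 1-|c_{nm}|$, so $|c_{nm}|=1$ forces $d^2_{nm}(\N)=0$, hence $p^\N_n=p^\N_m$ by the metric property of the Hellinger distance (Appendix~\ref{A:coh}). Thus $p^\N_n$ depends only on the class of $n$, and $\N$ descends to a well-defined incoherent $\P$ on $\tilde{\mathcal H}$ with $p^\P_k:=p^\N_n$ for $n\in I^C_k$, satisfying $\N(j)=\mathcal L(\P(j))$. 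To get $\P\in\mathcal C_{\tilde C}$, take a GII $\{C(j)\}$ for $\N$ from Theorem~\ref{mainthm}, fix one representative $n_k\in I^C_k$ for each $k$, and let $\tilde C(j)$ be the principal submatrix of $C(j)$ on rows and columns $\{n_1,\dots,n_{n_C}\}$. Principal submatrices of PSD matrices are PSD; $\sum_j\tilde C(j)$ is the corresponding submatrix of $C=L\tilde CL^\dagger$, which equals $\tilde C$ since $(L\tilde CL^\dagger)_{n_k n_{k'}}=\tilde c_{kk'}$; and $\tilde c_{kk}(j)=c_{n_k n_k}(j)=p^\N_{n_k}(j)=p^\P_k(j)$. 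So $\{\tilde C(j)\}$ satisfies the GII conditions \eqref{GII0} for $\tilde C$ and $\P$, giving $\P\in\mathcal C_{\tilde C}$ by Theorem~\ref{mainthm}. Combining the two inclusions yields $\mathcal C_C=\{\mathcal L(\P(\cdot))\mid\P\in\mathcal C_{\tilde C}\}$.

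The argument is essentially bookkeeping once Theorem~\ref{mainthm} is available; the only substantive input is that every member of $\mathcal C_C$ is automatically adapted to $C$, which follows from Corollary~\ref{cohcor} together with the fact that a vanishing Hellinger distance forces equality of the distributions. The single point needing care is to keep $\mathcal L$ (acting on the diagonal subalgebra) separate from the full lift $X\mapsto LXL^\dagger$: these agree on diagonals, and since the observable measured by a GII only reads off the diagonals of its matrices $C(j)$, that is all that is needed. I do not anticipate a genuine obstacle.
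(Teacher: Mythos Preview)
Your proof is correct and follows essentially the same route as the paper: both directions rest on the GII characterisation of Theorem~\ref{mainthm}, with the lift $C(j)=L\tilde C(j)L^\dagger$ (which, after undoing your WLOG reduction, is exactly the paper's formula $c_{nm}(j)=e^{-i(\theta_n-\theta_m)}\tilde c_{kk'}(j)$) and the reduction to the principal submatrix on representatives (which coincides with the paper's $\tilde c_{kk'}(j)=\langle\tilde\eta_k|\F(j)|\tilde\eta_{k'}\rangle$). The one substantive difference is how adaptedness of $\N\in\mathcal C_C$ is obtained: you invoke Corollary~\ref{cohcor}, while the paper extracts it directly from the dilation identity $c_{nn}(j)=\langle\eta_n|\F(j)|\eta_n\rangle$ together with $\eta_n=e^{i\theta_n}\tilde\eta_k$---both arguments are valid, yours being slightly more elementary at the cost of citing an extra result.
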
 
\begin{proof} The decomposition $C\simeq L\tilde CL^\dagger$ was constructed above. To prove the second claim, assume first that $\P'\in \mathcal C_C$, and let $C(j)$ form a GII with observable $\P'$ and GIO $\Lambda_C$, so that $c_{nm}(j) = \langle \eta_n| \F(j) |\eta_m\rangle$ for some POVM $\F$ on the dilation space $\ki$, with $\langle \eta_n| \F(j) |\eta_n\rangle=p^{\P'}_n(j)$ for each $n$. Now the $n_C\times n_C$-matrices $\tilde C(j)$ defined by $\tilde c_{kk'}(j) := \langle \tilde \eta_k| \F(j) |\tilde \eta_{k'}\rangle$ form a GII with channel $\Lambda_{\tilde C}$. In order to find the corresponding observable we compute $\tilde c_{kk}(j) = \langle \tilde \eta_k| \F(j) |\tilde \eta_{k}\rangle = \langle \eta_n| \F(j) |\eta_{n}\rangle = p^{\P'}_n(j)$ for \emph{any} $n\in I^C_k$. Hence $p^{\P'}_n$ does not depend on the choice of $n\in I^C_k$, so $\P'$ is adapted to $C$, and $\P' =\mathcal L(\P)$ where $\P$ is defined by $p_k^{\P}(j):=\tilde c_{kk}(j)$. This shows that the observable of this GII is $\P$, so $\P\in \mathcal C_{\tilde C}$. Conversely, if $\P'=\mathcal L(\P)$ with $\P\in \mathcal C_{\tilde C}$ then there is a GII $\tilde C(j)$ with $\sum_j \tilde c_{kk'}(j) = \tilde c_{kk'}$ and $\tilde c_{kk}(j) = p^{\P}_k(j)$. We then define $c_{nm}(j):=e^{-i(\theta_n-\theta_m)}\tilde c_{kk'}(j)$ whenever $(n,m)\in I^C_k\times I^C_{k'}$. This is a GII for which $\sum_j c_{nm}(j) = e^{-i(\theta_n-\theta_m)}\tilde c_{kk'}=c_{nm}$, and $c_{nn}(j)=\tilde c_{kk}(j)=p^{\P}_k(j) = p^{\P'}_n(j)$ regardless of the choice of $n\in I^C_k$. Hence $\P'\in \mathcal C_{C}$. This completes the proof.
\end{proof}

Now let $S_d$, the group of permutations of $\{1,\ldots,d\}$, act on $\mathcal H$ via $U_\pi|n\rangle=|\pi(n)\rangle$, and recall from the main text, the symmetry group
$$
G_C = \{\pi\in S_d\mid U_\pi^\dagger  C U_\pi \simeq C\}.
$$
By the definition of $\simeq$, $G_C$ consists of exactly those permutations $\pi\in S_d$ for which there exists a unitary GIO with matrix $D$ such that $U_\pi^\dagger  C U_\pi = D *C$. So $\pi\in G_C$ iff there exist phase factors $u_n(\pi)$, $n=1,\ldots,d$, such that
\begin{equation}\label{phases}
c_{\pi(n),\pi(m)} = u_{n}(\pi) c_{nm} \overline{u_m(\pi)}
\end{equation}
for each $n,m$. In what follows we assume for simplicity that $c_{nm}\neq 0$ for all $n,m$. Then for each $\pi$ the coefficients $u_n(\pi)$ are uniquely determined up to an overall ($\pi$-dependent) phase factor, which we choose by setting $u_{n_0}(\pi)=1$ for a fixed $n_0$. We can then construct $u_n(\pi)$ explicitly from the entries of $C$:
\begin{equation}\label{defu}
u_n(\pi) = c_{\pi(n),\pi(n_0)}/c_{nn_0}, \quad n=1,\ldots,d.
\end{equation}
We then define, for each $\pi\in G_C$ and $n=1,\ldots,d$, a unitary operator $W_\pi$ on $\hi$ by $$W_\pi|n\rangle = u_n(\pi) |\pi(n)\rangle, \quad n=1,\ldots,d,$$
so that (by \eqref{phases}) we may write
$$
G_C = \{\pi\in S_d\mid W_\pi^{\dagger} C W_\pi =C\}.
$$
The following result shows that $W_\pi$ appropriately reflects the symmetries of $C$ on the Hilbert space level:
\begin{appendixproposition} $G_C$ is a permutation group (i.e.\ a subgroup of $S_d$), and $\pi\mapsto W_\pi$ is a projective unitary representation of $G_C$ with multiplier $(\pi,\pi')\mapsto u_{\pi'(n_0)}(\pi)$. If each entry of $C$ is real positive, then $W_\pi = U_\pi$.
\end{appendixproposition}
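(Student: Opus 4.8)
The plan is to work entirely with the phase factors $u_n(\pi)$ of \eqref{defu} and the defining identity \eqref{phases}, exploiting the standing assumption $c_{nm}\neq 0$: the phases solving \eqref{phases} are then unique up to a single overall factor, which we have pinned down by $u_{n_0}(\pi)=1$. Everything reduces to manipulating unit-modulus numbers.

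First I would establish that $G_C$ is a subgroup of $S_d$. The identity lies in $G_C$ with $u_n(e)=1$. Given $\pi,\pi'\in G_C$, set $w_n:=u_{\pi'(n)}(\pi)\,u_n(\pi')$; these have modulus one, and chaining the instance of \eqref{phases} for $\pi$ (evaluated at the arguments $\pi'(n),\pi'(m)$) with the instance for $\pi'$ gives $c_{(\pi\pi')(n),(\pi\pi')(m)}=w_n\,c_{nm}\,\overline{w_m}$, which is precisely \eqref{phases} for $\pi\pi'$; hence $\pi\pi'\in G_C$. For inverses one re-indexes \eqref{phases} by $n\mapsto\pi^{-1}(n)$, $m\mapsto\pi^{-1}(m)$ and solves for $c_{\pi^{-1}(n),\pi^{-1}(m)}$, producing the valid phases $\overline{u_{\pi^{-1}(n)}(\pi)}$; so $\pi^{-1}\in G_C$. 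Thus $G_C\le S_d$, i.e. it is a permutation group.

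Next I would prove the projective representation property. Since $w_n$ solves \eqref{phases} for $\pi\pi'$ and the normalised solution is unique, we must have $u_n(\pi\pi')=w_n/w_{n_0}=u_{\pi'(n)}(\pi)\,u_n(\pi')\,/\,u_{\pi'(n_0)}(\pi)$, equivalently $u_n(\pi')\,u_{\pi'(n)}(\pi)=u_{\pi'(n_0)}(\pi)\,u_n(\pi\pi')$ for every $n$. Evaluating on the incoherent basis, $W_\pi W_{\pi'}|n\rangle=u_n(\pi')\,u_{\pi'(n)}(\pi)\,|(\pi\pi')(n)\rangle=u_{\pi'(n_0)}(\pi)\,u_n(\pi\pi')\,|(\pi\pi')(n)\rangle=u_{\pi'(n_0)}(\pi)\,W_{\pi\pi'}|n\rangle$. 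Together with $W_e=\id$ this shows that $\pi\mapsto W_\pi$ is a projective unitary representation of $G_C$ with multiplier $\mu(\pi,\pi')=u_{\pi'(n_0)}(\pi)$; that $\mu$ is a genuine $U(1)$-valued $2$-cocycle then follows automatically from associativity of $W_\pi(W_{\pi'}W_{\pi''})=(W_\pi W_{\pi'})W_{\pi''}$, and $|\mu(\pi,\pi')|=1$ is immediate. Finally, if every entry of $C$ is real and positive, then \eqref{phases} with $m=n_0$ (and $u_{n_0}(\pi)=1$) reads $u_n(\pi)=c_{\pi(n),\pi(n_0)}/c_{nn_0}>0$; being positive and of modulus one it equals $1$, so $W_\pi|n\rangle=|\pi(n)\rangle=U_\pi|n\rangle$, i.e. $W_\pi=U_\pi$.

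The step I expect to carry the actual content is the uniqueness-up-to-phase argument: it is what upgrades the obvious fact that $W_\pi W_{\pi'}$ and $W_{\pi\pi'}$ implement the same underlying permutation to the much stronger statement that they differ only by a global scalar rather than by a diagonal unitary; this is exactly where the hypothesis $c_{nm}\neq0$ is used. The remaining points are routine bookkeeping, the only mild care needed being to keep the normalisation $u_{n_0}(\cdot)=1$ consistent each time uniqueness is invoked.
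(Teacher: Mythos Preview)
Your proof is correct and follows essentially the same route as the paper's. The only cosmetic differences are that the paper invokes finiteness of $G_C\subset S_d$ to get the subgroup property from closure alone (rather than checking inverses explicitly), and it obtains the multiplier identity by directly setting $m=n_0$ in the chained version of \eqref{phases} and reading off \eqref{defu}, which is exactly your uniqueness-up-to-phase argument made concrete.
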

\begin{proof} Let $\pi,\pi'\in G_C$. Then \eqref{phases} holds for both, so
\begin{align*}
c_{\pi\pi'(n),\pi\pi'(m)} &= u_{\pi'(n)}(\pi) c_{\pi'(n),\pi'(m)} \overline{u_{\pi'(m)}(\pi)}\\
&= u_{\pi'(n)}(\pi) u_{n}(\pi') c_{nm} \overline{u_{m}(\pi')} \overline{u_{\pi'(m)}(\pi)}
\end{align*}
for all $n,m$, showing that \eqref{phases} holds also for $\pi\pi'$. Hence $\pi\pi'\in G_C$, and since $G_C$ is finite, this implies that $G_C$ is a subgroup. Taking $m=n_0$ and using \eqref{defu} we find $$u_{n}(\pi\pi') =  u_{\pi'(n)}(\pi) u_{n}(\pi')\overline{u_{\pi'(n_0)}(\pi)},$$
which reads $W_\pi W_{\pi'} = u_{\pi'(n_0)}(\pi)\, W_{\pi\pi'}$. Since clearly $W_e=\id$, the second claim follows. Finally, if each entry of $C$ is real positive, then $u_n(\pi)=1$ for all $\pi$ and $n$, and we have simply $W_\pi = U_\pi$.
\end{proof}

It is clear that $\{I^C_k\}$ forms a block system for the group $G_C$: for each $\pi\in G_C$ we have $|c_{nm}|=1$ iff $|c_{\pi(n),\pi(m)}|=1$, so $\pi$ moves each class as a whole, $\pi(I^C_k) = I^C_{\phi(\pi)(k)}$ for a unique $\phi(\pi)\in G_{\tilde C}$, where $\tilde C$ is the reduced $n_C\times n_C$ GIO matrix. The map $\phi:G_C\to G_{\tilde C}$ is a homomorphism with $\phi(G_C)\leq G_{\tilde C}$ consisting of permutations between classes of the same size. This structure is unique up to an irrelevant overall permutation of $\{1,\ldots, n_C\}$, fixed by the labelling of $I^C_k$.

Next, recall that given any subgroup $G\leq G_C$, a $G$-covariant incoherent observable $\P$ is one with outcome set $\Omega_C$ satisfying $p^{\P}_{\pi^{-1}(n)}(j) = p^{\P}_n(\phi(\pi)(j))$ for each $\pi\in G$, $j\in \Omega_C$ and $n\in \{1,\ldots d\}$. It is convenient to write this condition equivalently using the representation $W_\pi$ as
\begin{equation}\label{ADcov}
W_\pi \P(j) W_\pi^{\dagger} = \P(\phi(\pi)(j)), \, \pi\in G,\, j \in \Omega_C.
\end{equation}
Note that the reduced matrix $\tilde C$ obviously does not reduce further, i.e.\ $n_{\tilde C}=n_C$ (each equivalence class is a singleton). Hence for any subgroup $\tilde G\leq G_{\tilde C}\leq S_{n_C}$, the $\tilde G$-covariant observables $\P$ are given by \eqref{ADcov} with $d$ replaced by $n_C$ and $\phi={\rm Id}$.

We recall that $\mathcal C_C[G]$ is the set of all $G$-covariant incoherent observables in $\mathcal C_C$, and $\mathcal C_C^{\rm sym}=\mathcal C_C[G_C]$. The symmetry constraint \eqref{ADcov} can be naturally formulated in the GII level: we call a GII $C(j)$ \emph{$G$-covariant} if
\begin{equation}\label{ADcovC}
W_\pi C(j) W_\pi^{\dagger} = C(\phi(\pi)(j)), \, \pi\in G,\, j \in \Omega_C.
\end{equation}
Notice that here the matrices $C(j)$ are not diagonal, so we need to state the condition using the representation $W_\pi$. The following result shows that the SDP \eqref{GII0} in the main text can be supplemented by an extra symmetry constraint if $\P$ is $G$-covariant:

\begin{appendixproposition}\label{GIIsymmetry}
Any $\P\in \mathcal C_C[G]$ has a $G$-covariant GII.
\end{appendixproposition}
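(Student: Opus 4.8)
The plan is to produce a $G$-covariant GII by \emph{twirling} an arbitrary GII for $\P$ over the group $G$, simultaneously along the unitary representation $\pi\mapsto W_\pi$ on $\hi$ and along the induced permutations $\phi(\pi)$ of the outcome set $\Omega_C$. Concretely, since $\P\in\mathcal C_C[G]\subseteq\mathcal C_C$, there is a GII with coherence matrix $C$ and observable $\P$, i.e.\ matrices $C(j)$, $j\in\Omega_C$, with $C(j)\geq 0$, $\sum_j C(j)=C$ and $c_{nn}(j)=p^\P_n(j)$. I would then define
\begin{equation*}
\bar C(j):=\frac{1}{|G|}\sum_{\pi\in G} W_\pi^\dagger\, C\big(\phi(\pi)(j)\big)\, W_\pi,\qquad j\in\Omega_C,
\end{equation*}
and claim that $(\bar C(j))_{j\in\Omega_C}$ is the required $G$-covariant GII with observable $\P$.

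Before the verification I would record the one structural point that needs care: although $\pi\mapsto W_\pi$ is only a \emph{projective} representation, the conjugation maps $\alpha_\pi(X):=W_\pi X W_\pi^\dagger$ form an honest action of $G$ on matrices, since the multiplier is a phase and cancels between $W_\pi$ and $W_\pi^\dagger$; in particular $\alpha_{\pi^{-1}}(X)=W_\pi^\dagger X W_\pi$ and $\alpha_\sigma\alpha_{\pi^{-1}}=\alpha_{\sigma\pi^{-1}}$. Combined with the fact (established in the preceding proposition) that $\phi\colon G_C\to G_{\tilde C}$ is a homomorphism, so that $\phi(\rho\sigma)=\phi(\rho)\phi(\sigma)$, the re-indexing steps below become routine.

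The four defining properties then follow directly. \emph{Positivity}: each summand is a unitary conjugate of the PSD matrix $C(\phi(\pi)(j))$, hence $\bar C(j)\geq 0$. \emph{Normalisation}: $\phi(\pi)$ permutes $\Omega_C$, so $\sum_j C(\phi(\pi)(j))=\sum_{j'}C(j')=C$, and $W_\pi^\dagger C W_\pi=C$ because $\pi\in G\subseteq G_C$; therefore $\sum_j\bar C(j)=\frac{1}{|G|}\sum_\pi W_\pi^\dagger C W_\pi=C$. \emph{Diagonal entries}: using $W_\pi|n\rangle=u_n(\pi)|\pi(n)\rangle$ with $|u_n(\pi)|=1$ gives $\langle n|\bar C(j)|n\rangle=\frac{1}{|G|}\sum_\pi\langle\pi(n)|C(\phi(\pi)(j))|\pi(n)\rangle=\frac{1}{|G|}\sum_\pi p^\P_{\pi(n)}\big(\phi(\pi)(j)\big)$, and the $G$-covariance \eqref{ADcov} of $\P$, rewritten as $p^\P_{\pi(n)}(\phi(\pi)(j))=p^\P_n(j)$, collapses this to $p^\P_n(j)$; hence $(\bar C(j))_j$ is a GII with coherence matrix $C$ and observable $\P$. \emph{Covariance \eqref{ADcovC}}: for $\sigma\in G$, substituting $\pi=\rho\sigma$ and using $\phi(\pi)(j)=\phi(\rho)\big(\phi(\sigma)(j)\big)$ together with $\alpha_\sigma\alpha_{\pi^{-1}}=\alpha_{\rho^{-1}}$ yields $W_\sigma\bar C(j)W_\sigma^\dagger=\frac{1}{|G|}\sum_\rho W_\rho^\dagger C\big(\phi(\rho)(\phi(\sigma)(j))\big)W_\rho=\bar C\big(\phi(\sigma)(j)\big)$.

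The only genuine obstacle is the bookkeeping around the projective representation and the two simultaneous relabellings (outcomes via $\phi$, basis labels via $\pi$): once one checks that $\alpha_\pi$ is a bona fide $G$-action and that $\phi$ is a homomorphism intertwining these two levels, the argument is a straightforward group average with no optimisation and no positivity subtlety beyond ``conjugation preserves PSD''.
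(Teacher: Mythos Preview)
Your proof is correct and follows exactly the same approach as the paper: define the twirled GII $\bar C(j)=|G|^{-1}\sum_{\pi\in G}W_\pi^\dagger C(\phi(\pi)(j))W_\pi$ and verify positivity, normalisation, the diagonal condition via the covariance of $\P$, and the covariance of $\bar C$ via a change of summation variable. Your explicit remark that the phase multiplier of the projective representation cancels in the conjugation action $X\mapsto W_\pi X W_\pi^\dagger$ is a helpful clarification that the paper handles more implicitly.
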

\begin{proof}
To prove the claim, let $\P\in \mathcal C_C[G]$. Then there is a GII $C(j)$ with $\sum_j C(j) = C$ and $c_{nn}(j) =p^{\P}_n(j)$ for all $n$. We define
$$
C'(j) := \frac{1}{|G|} \sum_{\pi\in G} W_\pi^{\dagger} C(\phi(\pi)(j)) W_\pi.
$$
This is essentially the ``averaging argument'' often used in the context of symmetry constraints for joint measurability \cite{werner04,carmeli05,carmeli12}, except that now we apply it at the level of structure matrices as opposed to POVM elements. Now $C'(j)\geq 0$, so it defines a GII, which is $G$-covariant, as
\begin{align*}
W_{\pi_0}C'(j) W^{\dagger}_{\pi_0} &= \frac{1}{|G|} \sum_{\pi\in G} W_{\pi\pi^{-1}_0}^{\dagger} C(\phi(\pi)(j)) W_{\pi\pi^{-1}_0} \\
&= \frac{1}{|G|} \sum_{\pi\in G} W_{\pi}^{\dagger} C(\phi(\pi\pi_0)(j)) W_{\pi}\\
&= \frac{1}{|G|} \sum_{\pi\in G} W_{\pi}^{\dagger} C(\phi(\pi)\phi(\pi_0)(j)) W_{\pi}\\
&=C'(\phi(\pi_0)(j))
\end{align*}
for each $\pi_0\in G$. Here we used the fact that $\pi\mapsto W_\pi$ is a (projective) representation, and $\phi$ is a homomorphism. Furthermore,
\begin{align*}
\sum_j C'(j) &=  \frac{1}{|G|} \sum_{\pi\in G} W_\pi^{\dagger} \sum_j C(\phi(\pi)(j)) W_\pi\\
&= \frac{1}{|G|} \sum_{\pi\in G} W_\pi^{\dagger} C W_\pi = \frac{1}{|G|} \sum_{\pi\in G} C = C,
\end{align*}
as $G$ is a subgroup of $G_C=\{\pi\in S_d\mid W_\pi C W_\pi^{\dagger}=C\}$. Finally,
\begin{align*}
c'_{nn}(j) &= \frac{1}{|G|} \sum_{\pi\in G} \langle n| W_\pi^{\dagger} C(\phi(\pi)(j)) W_\pi|n\rangle\\
&= \frac{1}{|G|} \sum_{\pi\in G} c_{\pi(n),\pi(n)}(\phi(\pi)(j))\\
&= \frac{1}{|G|} \sum_{\pi\in G} p^{\P}_{\pi(n)}(\phi(\pi)(j))\\
&= \frac{1}{|G|} \sum_{\pi\in G} p^{\P}_{n}(j)=p^{\P}_n(j),
\end{align*}
because $\P$ is $G$-covariant. Therefore, $C'(j)$ satisfies eq. \eqref{GII0} in the main text, and is $G$-covariant.
\end{proof}

We also remark that the GII matrices can always be chosen real if $C$ is a real matrix (independently of permutation symmetry). In fact, if \eqref{GII0} holds for matrices $C(j)$, we can define $C^{\rm re}(j) = \frac12(C(j) + C(j)^T)$; then $C^{\rm re}(j)\geq 0$ since transpose preserves positivity, and $C^{\rm re}(j)$ is a real symmetric matrix since $c^{\rm re}_{nm}(j) = {\rm Re}\, c_{nm}(j)$ (as $C(j)=C(j)^*$). Since $C$ is real we therefore still have $\sum_j C^{\rm re}(j) = C$, and since the diagonal of $C(j)$ is real in any case, it coincides with the diagonal of $C^{\rm re}(j)$. Hence the matrices $C^{\rm re}(j)$ fulfil \eqref{GII0} as well.

We now prove the main reduction result, which in the main text was stated in Eq. \eqref{symreduction}:
\begin{appendixproposition} \label{symmetryprop1}
$\mathcal C_C^{\rm sym}=\{\mathcal L(\P(\cdot))\mid \P\in \mathcal C_{\tilde{C}}[\phi(G_C)]\}$.
\end{appendixproposition}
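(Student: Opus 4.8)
The plan is to refine Proposition~\ref{adapted} by carrying the permutation covariance through the diagonal correspondence $\P' = \mathcal L(\P)$ it already provides. Recall that Proposition~\ref{adapted} gives $\mathcal C_C = \{\mathcal L(\P(\cdot)) \mid \P \in \mathcal C_{\tilde C}\}$, and that any $\P' \in \mathcal C_C$ is automatically adapted to $C$ by Corollary~\ref{cohcor} (since $|c_{nm}| = 1$ forces $d^2_{nm}(\P') = 0$), hence uniquely of the form $\mathcal L(\P)$ with $p^{\P}_k(j) = p^{\P'}_n(j)$ for every $n \in I^C_k$. So it will be enough to prove the following equivalence for an adapted incoherent observable $\P'$ with outcome set $\Omega_C$ and its compression $\P$: \emph{$\P'$ is $G_C$-covariant if and only if $\P$ is $\phi(G_C)$-covariant}.

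To establish this equivalence I would first note that, because $\P'(j)$ is diagonal, the phase factors $u_n(\pi)$ entering $W_\pi$ cancel under conjugation, so the covariance condition \eqref{ADcov} for $\P'$ is equivalent to the bare permutation identity $p^{\P'}_{\pi^{-1}(n)}(j) = p^{\P'}_n(\phi(\pi)(j))$ for all $\pi \in G_C$, $n$, and $j \in \Omega_C$. Next, using that $G_C$ permutes the block system via $\pi(I^C_k) = I^C_{\phi(\pi)(k)}$ and that $\phi$ is a homomorphism, one has $\pi^{-1}(n) \in I^C_{\phi(\pi)^{-1}(k)}$ whenever $n \in I^C_k$; substituting $p^{\P'}_n = p^{\P}_k$ converts the above identity into $p^{\P}_{\phi(\pi)^{-1}(k)}(j) = p^{\P}_k(\phi(\pi)(j))$ for all $\pi \in G_C$ and $k, j \in \Omega_C$. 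Writing $\sigma = \phi(\pi)$ and letting $\sigma$ range over $\phi(G_C)$, this is precisely the defining condition of $\phi(G_C)$-covariance of $\P$ on $\tilde{\mathcal H}$ (where the reduced matrix does not reduce further, so the associated ``$\phi$'' is the identity and $\phi(G_C) \leq G_{\tilde C}$). Each step here is reversible, so the equivalence holds in both directions.

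Finally I would assemble the two inclusions from the equivalence: $\P' \in \mathcal C_C^{\rm sym}$ says exactly that $\P' \in \mathcal C_C$ and $\P'$ is $G_C$-covariant; by Proposition~\ref{adapted} the first part means $\P' = \mathcal L(\P)$ for a (unique) $\P \in \mathcal C_{\tilde C}$, and by the equivalence the second part is then equivalent to $\P \in \mathcal C_{\tilde C}[\phi(G_C)]$. Reading this chain of ``iff''s in both directions yields $\mathcal C_C^{\rm sym} = \{\mathcal L(\P(\cdot)) \mid \P \in \mathcal C_{\tilde C}[\phi(G_C)]\}$.

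Most of this is bookkeeping once Proposition~\ref{adapted} is available; the only genuinely delicate points are (i) verifying that the phases $u_n(\pi)$ play no role for diagonal observables, so that \eqref{ADcov} collapses to the naive permutation condition, and (ii) keeping straight the two different groups $G_C$ and $G_{\tilde C}$ and the homomorphism $\phi$ between them — in particular that covariance on $\tilde{\mathcal H}$ is taken with respect to the subgroup $\phi(G_C)$, which may be strictly smaller than $G_{\tilde C}$. If desired, Proposition~\ref{GIIsymmetry} can additionally be invoked to note that the GII realising membership in $\mathcal C_C^{\rm sym}$ may be chosen $G_C$-covariant, though this is not needed for the set identity itself.
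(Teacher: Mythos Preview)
Your proposal is correct and takes essentially the same approach as the paper: both reduce the claim to Proposition~\ref{adapted} plus the equivalence ``$\P'=\mathcal L(\P)$ is $G_C$-covariant iff $\P$ is $\phi(G_C)$-covariant,'' established via the block action $\pi(I^C_k)=I^C_{\phi(\pi)(k)}$. The paper carries this out as an explicit chain of rearrangements of $W_\pi \P'(j) W_\pi^{\dagger}$, whereas you state the same computation in terms of the diagonal entries $p^{\P'}_n$ after observing that the phases $u_n(\pi)$ cancel; the content is identical.
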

\begin{proof} Let $\P\in \mathcal C_{\tilde C}[\phi(G_C)]$. Hence $\P$ is $\phi(G_C)$-covariant and $\P\in \mathcal C_{\tilde C}$. Now define $\P':= \mathcal L(\P)$. Then $\P'\in \mathcal C_{C}$ by Prop.\ \ref{adapted}. Note that $\P'$ still has $n_C$ outcomes, but lives in dimension $d$; explicitly, $\P'(j) =\sum_{k=1}^{n_C} p^{\P}_k(j) \sum_{n\in I_k^C} |n\rangle\langle n|$. The following rearrangement now shows that $\P'$ is $G_C$-covariant:
\begin{align*}
W_\pi \P'(j) W_\pi^{\dagger} &= \sum_{k=1}^{n_C} p^{\P}_k(j) \sum_{n\in I_k^C} |\pi(n)\rangle\langle \pi(n)|\\
&= \sum_{k=1}^{n_C} p^{\P}_k(j) \sum_{\pi^{-1}(n)\in I_k^C} |n\rangle\langle n|\\
&= \sum_{k=1}^{n_C} p^{\P}_k(j) \sum_{n\in I_{\phi(\pi)(k)}^C} |n\rangle\langle n|\\
&= \sum_{k=1}^{n_C} p^{\P}_{\phi(\pi)^{-1}(k)}(j) \sum_{n\in I_{k}^C} |n\rangle\langle n|\\
&= \sum_{k=1}^{n_C} p^{\P}_{k}(\phi(\pi)(j)) \sum_{n\in I_{k}^C} |n\rangle\langle n|
= \P'(\phi(\pi)(j)).
\end{align*}
Hence $\P'\in \mathcal C_C^{\rm sym}$. Conversely, if we pick a $\P'\in\mathcal C_C^{\rm sym}$ then by Prop.\ \ref{adapted} we can write it as $\P':= \mathcal L(\P)$ for some $\P\in \mathcal C_{\tilde C}$, and reverse the rearrangement to show that $p^{\P}_{\phi(\pi)^{-1}(k)}(j)= p^{\P}_{k}(\phi(\pi)(j))$, i.e.\  $\P$ is $\phi(G_C)$-covariant and hence $\P\in \mathcal C_{\tilde C}[\phi(G_C)]$. This completes the proof.\end{proof}

Finally, we modify the robustness idea described above to account for symmetry: instead of Eq. \eqref{whitenoise} we use the ``canonical'' $G_C$-covariant observable $\P^C(j):=\mathcal L(|j\rangle\langle j|)$, $j\in \Omega_C$, and the line $\P^C_{\alpha} = \alpha \P^C +(1-\alpha)\P_{\rm dep}^C$, where $\P_{{\rm dep}}^C(j)=d^{-1}{\rm tr}[\P^C(j)]\id=d^{-1} |I^C_j|\id$. Noting that $\P^C\notin \mathcal C_C^{\rm sym}$, we use Prop.\ \ref{symmetryprop1} to set 
\begin{align}\label{CI}
\alpha_C &:= \max \{\alpha>0 \mid \P^C_{\alpha}\in \mathcal C_C\}\\
 &=\max \{\alpha>0 \mid \P^C_{\alpha}\in \mathcal C^{\rm sym}_C\}\nonumber\\
&=\max\{\alpha>0 \mid  \alpha |j\rangle\langle j| + (1-\alpha) d^{-1} |I^C_j| \id \in \mathcal C_{\tilde C}^{\rm sym}\}, \nonumber
\end{align}
so $\alpha_C$ is the proportion of the line where coherence does \emph{not} sustain incompatibility. Note that in the nondegenerate case ($|c_{nm}|<1$ for all $n\neq m$) we have $\P^C_\alpha=\P_\alpha$, so Eq. \eqref{CI} is (by Thm. \ref{mubthm}) consistent with $\alpha_\M$ defined after Eq. \eqref{whitenoise} in the main text, when $\M =C*\Q_0$ where $\Q_0$ is maximally coherent.

\section{Uniform coherence with negative entries}\label{A:uniform}

Let $C$ be a coherence matrix with full symmetry, i.e.\ $G_C=S_d$. Then $c_{nm}=\lambda$ for all $n\neq m$, for some $\lambda\in \mathbb R$, i.e.\ all coherences are equal. From its eigenvalues one sees that $C$ defines a GIO iff $-(d-1)^{-1}\leq \lambda \leq 1$. Any $G_C$-covariant $\P$ has $p^{\P}_{n}(j) = q$ for all $n,j$, $j\neq n$, for a fixed $q$; writing $q=(1-\alpha)/d$ we see that $\P= \P^C_\alpha=\P_\alpha$ for some $-(d-1)^{-1}\leq \alpha \leq 1$. Hence these families are naturally motivated by symmetry considerations. Recall from the main text (Prop. \ref{noisymubs}) that for $\lambda,\alpha>0$ we have $\P_{\alpha}\in \mathcal C^{\rm sym}_{C}$ if and only if $g_d(\alpha)\geq \lambda$ where $g_d(\alpha) = \tfrac 1d \left((d-2)(1-\alpha) + 2\sqrt{1-\alpha}\sqrt{1+(d-1)\alpha}\right)$, and we now note that the same argument clearly applies also for $\alpha<0$. If $\alpha>0$ we can write this equivalently as $\alpha\leq g_d(\lambda)$, the function $\alpha\mapsto g_d(\alpha)$ is decreasing for $\alpha\in [0,1]$ and is its own inverse. If $\alpha<0$ the result still holds, but the inequality cannot be inverted using $g_d$, as $\alpha\mapsto g_d(\alpha)$ is increasing for $\alpha \in [-(d-1)^{-1},0]$ with inverse
$u\mapsto \tfrac 1d \left((d-2)(1-u) - 2\sqrt{1-u}\sqrt{1+(d-1)u}\right)$.
To summarise the $\lambda>0$ case:
$$
\mathcal C^{\rm sym}_{C}=\{\P_\alpha^{C}\mid g_d(\alpha)\geq \lambda\} \quad \text{when }\lambda\in [0,1].
$$
Now if $\lambda<0$, the corollaries Cor.\ \ref{cohcor} and Cor.\ \ref{cohcor2} do not completely determine $\mathcal C_C^{\rm sym}$. Indeed, Cor.\ \ref{cohcor} gives the necessary condition
$\lambda \geq -g_d(\alpha)$ for $\P_\alpha^C\in \mathcal C_C^{\rm sym}$, and Cor.\ \ref{cohcor2} the sufficient condition $\lambda \geq -g_d(\alpha)(d-1)^{-1}$, 
which only coincide in the qubit case. However, since $C * \Q_0$ and $\P_\alpha$ have the exact same form, we can interchange $\alpha$ and $\lambda$ above to conclude that for $\alpha>0$ and $\lambda\in [-(d-1)^{-1},1]$ we have $\P_{\alpha}\in \mathcal C^{\rm sym}_{C}$ if and only if $g_d(\lambda)\geq \alpha$. This already gives $\alpha_C=g_d(\lambda)$ in Eq. \eqref{CI}. In order to fully characterise $\mathcal C_C^{\rm sym}$ we need to show that $\P_\alpha^C\in \mathcal C_C^{\rm sym}$ for all $(\alpha,\lambda)\in [-(d-1)^{-1},0]\times [-(d-1)^{-1},0]$; this then gives
\begin{align*}
\mathcal C^{\rm sym}_{C}&=\{\P_\alpha^{C}\mid -(d-1)^{-1}\leq \alpha \leq g_d(\lambda)\},\\
&\text{when }\lambda\in [ -(d-1)^{-1}, 0].
\end{align*}
To prove the remaining bit it suffices (by convexity) to show that $\P_\alpha\in \mathcal C_C^{\rm sym}$ for the ``corner'' $\alpha=\lambda = -(d-1)^{-1}$, which was done in the main text.

\section{Example -- centrosymmetric case in dimension $3$}\label{A:centrod3}

Here we give a nontrivial example of the theory developed in the main text (and the Appendixes above). This example is relevant for the $N=2$ case of the spin-boson model but we work it out slightly more generally.

Let $C$ be any $3\times 3$ GIO matrix with real positive entries such that $(13)\in G_C$; that is, the symmetry group contains the permutation which exchanges 1 and 3 and leaves 2 unchanged. Then $C$ must be \emph{centrosymmetric}, i.e.,\ (also) symmetric about the counter-diagonal, so
\begin{align*}
C &= \begin{pmatrix}1 & \lambda & \gamma\\
\lambda & 1 & \lambda\\
\gamma & \lambda & 1\end{pmatrix},
\end{align*}
for some $\lambda,\gamma\in [0,1]$ and $D:=\tfrac 12(1+\gamma)-\lambda^2\geq 0$. The conditions ensure that $C\geq 0$. This covers both the uniform coherence in dimension $3$ ($\gamma=\lambda$ with $G_C=S_3$), and the reduction $\tilde C$ of the spin-boson model for $N=2$ ($\gamma=\lambda^4$ with $G_C=\{e,(13)\}$). In the former case $D = (\lambda+\tfrac 12)(1-\lambda)$, and in the latter case $D=\tfrac 12(1-\lambda^2)^2$, which are indeed both positive for all $\lambda\in [0,1]$.

If $G_C=S_3$ (i.e.,\ $\gamma=\lambda$) we know from the main text that $\mathcal C^{\rm sym}_C=\{\P_\alpha^C\mid \lambda \leq g_3(\alpha)\}$, i.e.,\ has affine dimension one. We now proceed to characterise $\mathcal C^{\rm sym}_{C}$ assuming $G_C=\{e, (13)\}$ (i.e., $\gamma\neq \lambda$). Denote $\pi_0 = (13)$ (as in the main text). We first note that each $\{e,(13)\}$-covariant incoherent observable $\P$ has $\P(2)=U_{\pi_0}^{\dagger}\P(0)U_{\pi_0}$ and $\P(1)=U_{\pi_0}^{\dagger}\P(1)U_{\pi_0}$. Therefore, it is of the form $\P=\P_{\bf q}$ for some ${\bf q} =(q,p,r)\in  \mathcal M:=\Delta\times [0,\tfrac 12]$ where $\Delta :=\{(q,p)\in [0,1]^2\mid p+q\leq 1\}$, and
\begin{align*}
\P_{\bf q}(0) &=\begin{pmatrix}p & 0& 0\\
0& r & 0\\
0&0& q
\end{pmatrix} &
\P_{\bf q}(1) &=\begin{pmatrix}s & 0& 0\\
0& 1-2r & 0\\
0&0& s 
\end{pmatrix} \\
\P_{\bf q}(2) &=\begin{pmatrix} q & 0& 0\\
0& r & 0\\
0&0& p
\end{pmatrix},
\end{align*}
with $s=1-p-q$. Since the map ${\bf q}\mapsto \P_{\bf q}$ is convex, the convex structure of the set of incoherent observables (including the shape of $\mathcal C^{\rm sym}_{C}$ inside it) is faithfully represented inside $\mathcal M$. In particular, the incoherent basis observable $\P_0(j) = |j\rangle\langle j|$ and its permutation $\P_0(\pi_0(j))$ are represented by the extremal points $(1,0,0)$ and $(0,1,0)$, while the trivial observables $\P(j) =\mu(j)\id$ (where $\mu$ is $\pi$-invariant) form the line from the origin (with $\P(1)=\id$) to $\tfrac 12(1,1,1)$ (with $\P(0)=\P(2) =\tfrac 12\id$). In particular, the centroid $\tfrac 14(1,1,1)$ of $\mathcal M$ is the ``coin toss'' observable $\P(j) = \binom{2}{j}\tfrac 14\id$, while the uniform trivial observable $\P(j) =\tfrac 13 \id$ is $\tfrac 13 (1,1,1)$. The former appears in the spin-boson model as the depolarisation of the spectral measure of the Hamiltonian (see the main text).

In order to state the result, we define the functions $w_+:\Delta \to [0,2]$, $w_-:\Delta\to [0,1]$, $w_0:\Delta\to [\gamma-1,\gamma+1]$, and
$w^-_{0}:\Delta\to [0,1+\gamma]$ by
\begin{align*}
w_{\pm}(p,q) &=(\sqrt{q}\pm \sqrt{p})^2,\quad w_0(p,q) =\gamma-1+2(q+p),\\
w_{0}^-(p,q) &= \begin{cases} w_-, & w_+\leq 1-\gamma\\ w_0, & w_+\geq 1-\gamma \end{cases}.
\end{align*}
Clearly, $w_-\leq w_+$ (for any $p,q$). Moreover, $w_0\leq w_-$ when $w_+\leq 1-\gamma$, with $w_0=w_-$ when $w_+=1-\gamma$. Correspondingly, $w_0\leq w_+$ when $w_-\leq 1-\gamma$, with $w_0=w_+$ when $w_-=1-\gamma$. In particular, $0\leq w_0^-\leq w_+$, and $w_0^-$ is a continuous function.

Since $D\geq 0$ we have $0\leq 2D\leq 1+\gamma$ and $2\lambda^2\leq 1+\gamma$. Therefore, we can define the functions
\begin{align*}
h_-:&[0,1+\gamma]\to [0,\lambda^2/(1+\gamma)],\\
h_{-}(w)&= \begin{cases}0, & w\in [0, 2D]\\
\big(\tfrac{\lambda \sqrt{w} -\sqrt{1+\gamma -w}\sqrt{D}}{1+\gamma}\big)^2, & w\in [2D,1+\gamma]
\end{cases}\\
h_+:&[0,1+\gamma]\to [D/(1+\gamma),\tfrac 12],\\
h_+(w) &= \begin{cases} 
\big(\tfrac{\lambda \sqrt{w} +\sqrt{1+\gamma -w}\sqrt{D}}{1+\gamma}\big)^2, & w\in [0,2\lambda^2]\\
\frac 12 & w\in [2\lambda^2,2].
\end{cases}
\end{align*}
One can readily check that these functions are continuous. The following result characterises $\mathcal C^{\rm sym}_C$ explicitly:
\begin{appendixproposition}\label{Cprop}
$$\mathcal C^{\rm sym}_{C} = \{ {\bf q}\in \mathcal M\mid w_-\leq 1-\gamma, \,r\in [h_-(w_0^-),h_+(w_+)]\}.$$
\end{appendixproposition}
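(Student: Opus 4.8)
\emph{Proof plan.} By Prop.~\ref{GIIsymmetry} and the remark that the GII matrices can be taken real when $C$ is real, $\P_{\bf q}\in\mathcal C^{\rm sym}_C$ is equivalent to the existence of real matrices $C(0),C(1),C(2)$ satisfying \eqref{GII0} for $\P_{\bf q}$ and covariant under $\pi_0:=(13)$, that is $C(2)=U_{\pi_0}C(0)U_{\pi_0}$ and $U_{\pi_0}C(1)U_{\pi_0}=C(1)$. Covariance together with the diagonal constraint $c_{nn}(j)=p^{\P_{\bf q}}_n(j)$ forces $C(0)=\left(\begin{smallmatrix}p&x&z\\ x&r&y\\ z&y&q\end{smallmatrix}\right)$ and $C(1)=\left(\begin{smallmatrix}s&b&e\\ b&1-2r&b\\ e&b&s\end{smallmatrix}\right)$ with $s=1-p-q$, and then $\sum_jC(j)=C$ reduces to the two scalar equations $b=\lambda-x-y$ and $e=\gamma-2z$. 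So the problem becomes: for which ${\bf q}=(q,p,r)\in\mathcal M$ do there exist real $x,y,z$ with $C(0)\ge0$ and $C(1)\ge0$ (and $C(2)\ge0\Leftrightarrow C(0)\ge0$)?

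I would dispose of the two positivity conditions as follows. Block-diagonalising $C(1)$ in the $\pi_0$-adapted basis $\{(\ket1-\ket3)/\sqrt2,(\ket1+\ket3)/\sqrt2,\ket2\}$ gives $C(1)\ge0$ iff $s-\gamma+2z\ge0$, $r\le\tfrac12$, and $\left(\begin{smallmatrix}s+\gamma-2z&\sqrt2(\lambda-x-y)\\ \sqrt2(\lambda-x-y)&1-2r\end{smallmatrix}\right)\ge0$, i.e.\ $z\le\tfrac{s+\gamma}2$ and $2(\lambda-x-y)^2\le(s+\gamma-2z)(1-2r)$. For $C(0)$ I would use Gram realisability: $C(0)\ge0$ iff there are vectors with $\no{v_1}^2=p$, $\no{v_2}^2=r$, $\no{v_3}^2=q$, $\ip{v_1}{v_3}=z$, $\ip{v_1}{v_2}=x$, $\ip{v_2}{v_3}=y$; putting $t:=x+y=\ip{v_1+v_3}{v_2}$ and varying $v_2$ at fixed $v_1,v_3$ shows that real $x,y$ with prescribed sum $t$ exist iff $z^2\le pq$ and $t^2\le(p+q+2z)r$. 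This removes $x$ and $y$.

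Now only $z$ and $t$ remain, subject to $z^2\le pq$, $\tfrac{\gamma-s}2\le z\le\tfrac{\gamma+s}2$, $t^2\le(p+q+2z)r$, $2(\lambda-t)^2\le(s+\gamma-2z)(1-2r)$. For fixed $z$ an admissible $t$ exists iff the intervals $\{|t|\le\sqrt{(p+q+2z)r}\}$ and $\{|t-\lambda|\le\sqrt{(s+\gamma-2z)(1-2r)/2}\}$ meet, i.e.\ iff $\lambda\le\sqrt{(p+q+2z)r}+\sqrt{(s+\gamma-2z)(1-2r)/2}$. Substituting $u=p+q+2z$ (so that $s+\gamma-2z=1+\gamma-u$) turns the right-hand side into $g(u):=\sqrt{ru}+\sqrt{(1-2r)(1+\gamma-u)/2}$, while the conditions on $z$ become $u\in[\max(w_-,w_0),\min(w_+,1+\gamma)]$. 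Since $w_0-w_+=w_--(1-\gamma)$ and $w_0-w_-=w_+-(1-\gamma)$, this interval is nonempty iff $w_-\le1-\gamma$, and then its left endpoint is $w_0^-$. Hence $\P_{\bf q}\in\mathcal C^{\rm sym}_C$ iff $w_-\le1-\gamma$ and $\max_{u\in[w_0^-,\min(w_+,1+\gamma)]}g(u)\ge\lambda$.

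The remaining task is this one-variable optimisation, which is also the part I expect to be the most laborious. The function $g$ is concave on $[0,1+\gamma]$ with unconstrained maximum $g\bigl(2r(1+\gamma)\bigr)=\sqrt{(1+\gamma)/2}$, and this is $\ge\lambda$ \emph{exactly because} $D=\tfrac12(1+\gamma)-\lambda^2\ge0$; so the constraint bites only when the maximiser $u^*=2r(1+\gamma)$ falls outside the admissible interval, in which case $\max g$ equals $g$ at the nearer endpoint $u_{\min}=w_0^-$ or $u_{\max}=\min(w_+,1+\gamma)$. Regarding $g$ at a fixed endpoint $u$ as a concave function of $r$ with maximum $\sqrt{(1+\gamma)/2}$ at $r=u/(2(1+\gamma))$, the condition $g(u)\ge\lambda$ holds on an $r$-interval whose endpoints solve $g(u)=\lambda$; setting $a=\sqrt{ru}$ this is the quadratic $(1+\gamma)a^2-2u\lambda a+u\lambda^2-\tfrac12u(1+\gamma-u)=0$ with discriminant $4u(1+\gamma-u)D$, whose roots are precisely $r=h_\mp(u)$, the smaller truncating to $0$ at $u=2D$ and the larger to $\tfrac12$ at $u=2\lambda^2$ (and $h_+(\min(w_+,1+\gamma))=h_+(w_+)$ since $1+\gamma\ge2\lambda^2$). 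Gluing the ``safe'' regime $u^*\in[u_{\min},u_{\max}]$ with the two endpoint regimes, and using $h_-(u_{\min})\le u_{\min}/(2(1+\gamma))\le u_{\max}/(2(1+\gamma))\le h_+(u_{\max})$, collapses ``$\max_u g(u)\ge\lambda$'' into $r\in[h_-(w_0^-),h_+(w_+)]$, giving the claim. The main obstacle is exactly this bookkeeping — the root computation, the two truncations, continuity at $w=2D$ and $w=2\lambda^2$, and the identity $h_+(\min(w_+,1+\gamma))=h_+(w_+)$ — together with making the Gram-realisability reduction of the second paragraph fully rigorous, as that is the only non-routine structural step.
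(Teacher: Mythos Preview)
Your proof is correct and follows the paper's route closely: reduce via covariance and realness to three free parameters in $C(0)$, block-diagonalise the centrosymmetric $C(1)$, eliminate the ``anti-sum'' off-diagonal variable, and then maximise the resulting one-variable function $g$ over the admissible interval, case-splitting on whether the unconstrained maximiser $2r(1+\gamma)$ lies inside. The one substantive difference is how you handle $C(0)\ge 0$: the paper conjugates $C(0)$ by the same orthogonal $Q$ (even though $C(0)$ is not centrosymmetric), computes the principal minors and determinant explicitly, and then argues that the variable $a-b$ can be set to an optimal value $y_0$ without loss; your Gram-vector argument---fix $v_1,v_3$ realising the $(p,q,z)$ data, then the existence of a $v_2$ with $\|v_2\|^2=r$ and prescribed $\langle v_1+v_3,v_2\rangle=t$ is a pure Cauchy--Schwarz condition---reaches the same reduced constraints $z^2\le pq$ and $t^2\le(p+q+2z)r$ more directly and with no determinant bookkeeping. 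After your linear substitution $u=p+q+2z$ (which is just the paper's variable $c$ shifted), the remainder---the roles of $w_\pm$, $w_0$, $w_0^-$, the quadratic yielding $h_\pm$, and the gluing via $h_-(u)\le u/(2(1+\gamma))\le h_+(u)$---matches the paper's argument exactly.
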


Before giving a proof, we apply Prop.\ \ref{Cprop} to a convex line $\P_\alpha$ of the form $\P_\alpha(j) = \alpha |j\rangle\langle j| +(1-\alpha) \mu(j)\id$, where $\mu=(\mu(0),\mu(1), \mu(2))$ is a probability distribution, which must satisfy $\mu(0)=\mu(2)=t\in [0,\tfrac 12]$ and $\mu(1)=1-2t$ for $\P_\alpha$ to be $\{e,(13)\}$-covariant. Fixing $t$ we then have $\P_\alpha$ represented by the line $$p=\alpha + (1-\alpha)t, \quad q=r=(1-\alpha)t,$$
inside $\mathcal M$. The goal is to find to value of $\alpha$ at which it intersects the boundary of $\mathcal C^{\rm sym}_C$. The reason for not restricting to the canonical line $\P^C_\alpha$ (i.e.,\ $t=\tfrac 13$) is that we can cover also the cases where $C$ is obtained as a reduction from some higher dimension as described in Appendix \ref{A:symmetry}. In particular, the case of the spin-boson model for $N=2$ corresponds to $t=\tfrac 14$.

Since $r$ decreases as $\alpha$ increases, the intersection point must lie on the lower boundary surface $h_-(w_0^-)$. We restrict to the case $t\geq \tfrac 14(1-\gamma)$ for simplicity, because then $w_+(t,t) = 4t \geq 1-\gamma$, and hence $w_+\geq 1-\gamma$ on the whole line. Therefore $w_0^-=w_0=\gamma+1-2(1-\alpha)(1-2t)$ on the line, and so $\P^C_\alpha\in \mathcal C_C$ iff $h_-(w_0)\leq (1-\alpha)t$, which reads
$$
\lambda\sqrt{w_0}-\sqrt{1+\gamma-w_0} \sqrt D \leq (1+\gamma)\sqrt{(1-\alpha)t}.
$$ 
Rearranging this yields $$\lambda^2w_0\leq (1-\alpha)(\sqrt{2(1-2t)}\sqrt D + (1+\gamma)\sqrt{t})^2,$$ from which one can conveniently solve $\alpha$ as
$$
\alpha \leq 1- \frac{\lambda^2}{1-t(1-\gamma) +2\sqrt{2t(1-2t)D}}.
$$
In particular, for $t=\tfrac 14$ (the centroid of $\mathcal M$ corresponding to $\P(j) = \binom{2}{j}4^{-1}\id$) we get
\begin{equation}\label{centroid}
\alpha \leq 1- \frac{4\lambda^2}{3 +\gamma +4\sqrt{D}},
\end{equation}
while the case $t=\tfrac 13$ (the uniform trivial observable) gives instead
$$
\alpha \leq 1-\frac{3\lambda^2}{2+\gamma +2\sqrt{2D}},
$$
which for uniform decoherence, $\gamma = \lambda$, reduces to
$$
\alpha \leq 1-\frac{3\lambda^2}{2+\lambda +2\sqrt{(1+2\lambda)(1-\lambda)}}.
$$
One can easily check that the right-hand side is equal to $g_3(\lambda)$ appearing in the main text. Since $\alpha \leq g_3(\lambda)$ is equivalent to $\lambda \leq g_3(\alpha)$ (as $\alpha,\lambda>0$), the results are consistent.

{\it Proof of Prop.\ \ref{Cprop}}
As per the reduction method in the main text (proved in Appendix \ref{A:symmetry}), $P_{\bf q}\in \mathcal C^{\rm sym}_{C}$ if and only if there exist real matrices $C(j)$ satisfying Eqs. \eqref{ADcovC} and \eqref{GII0}. This forces the matrices to have the following form, where $a,b,c\in \mathbb R$:
\begin{align*}
C(0) &=\begin{pmatrix}p & a& c\\
a& r & b\\
c&b& q
\end{pmatrix}, \\
C(1) &=\begin{pmatrix}s & \lambda-a-b& \gamma-2c\\
\lambda-a-b& 1-2r & \lambda-a-b\\
\gamma-2c&\lambda-a-b& s 
\end{pmatrix},\\
C(2) &=\begin{pmatrix} q & b& c\\
b& r & a\\
c&a& p 
\end{pmatrix}. 
\end{align*}
The problem is, then, whether we can find $a,b,c\in \mathbb R$ so that \emph{the first two} of these three matrices are positive semidefinite. (Note that $C(2)=U_{\pi_0} C(0)U^{\dagger}_{\pi_0}$ is then automatically positive semidefinite.)

In order to further simplify the positivity condition for $C(1)$ (which is easier of the two), we note that all \emph{centrosymmetric} matrices (i.e.,\ ones commuting with $U_{\pi_0}$) can be brought to a block form by a specific orthogonal matrix $Q$ only depending on $U_{\pi_0}$ \cite{liu03}; in our case,
$$Q =\frac 1{\sqrt 2}\begin{pmatrix}1 & 0 & 1\\
0 & \sqrt{2} & 0\\
-1 & 0 & 1
\end{pmatrix},$$
and letting $x = a+b$, $y=a-b$ we obtain
\begin{align*}
Q^TC(0)Q &= \begin{pmatrix} \frac{p+q}{2}-c & \frac{y}{\sqrt 2} & \frac{p-q}2\\
\frac{y}{\sqrt 2} & r& \frac{x}{\sqrt 2}\\
\frac{p-q} 2 & \frac{x}{\sqrt 2} & \frac{p+q}{2}+c 
\end{pmatrix},\\
Q^TC(1)Q &= \begin{pmatrix} s-\gamma+2c & 0 & 0\\
0 & 1-2r& \sqrt{2}(\lambda-x)\\
0 & \sqrt{2}(\lambda-x) & s+\gamma-2c 
\end{pmatrix}.
\end{align*}
Since orthogonal transformations preserve positivity (and determinants), we can extract the conditions for $C(j)\geq 0$ from these matrices. First of all, $\tfrac{p+q}{2}+c\geq 0$ is clearly necessary for $C(0)\geq 0$. The $2\times 2$ principal minors of $Q^TC(0)Q$ are
\begin{align*}
d_{(0)} &:=r(\tfrac{p+q}{2}+c)-\tfrac {x^2}{2},\\
d_{(1)} &:= (\tfrac {p+q}{2}-c)(\tfrac {p+q}{2}+c)-(\tfrac{p-q}{2})^2 = qp-c^2,\\
d_{(2)} &:= r(\tfrac{p+q}{2}-c)-\tfrac {y^2}{2}.
\end{align*}
Assuming $\tfrac {p+q}{2}+c> 0$, we can write
\begin{align*}
\det C(0) &= r d_{(1)} - \tfrac 12(\tfrac {p+q}{2}-c)x^2 -\tfrac 12 (\tfrac {p+q}{2}+c)y^2+xy\tfrac{p-q}2\\
&= d_{(0)}d_{(1)} / (\tfrac {p+q}{2}+c)- \tfrac 12(\tfrac {p+q}{2}+c)(y-y_0)^2,
\end{align*}
where $y_0=\tfrac{1}{2} x(p-q)/ (\tfrac {p+q}{2}+c)$. Hence, $C(0)\geq 0$ is equivalent to $\det C(0)\geq 0$ and $d_{(i)}\geq 0$ for $i=1,2,3$. Since $d_{(0)},d_{(1)}$ do not depend on $y$ and imply $d_{(2)}\geq 0$ when $y=y_0$, it follows that if $C(0)\geq 0$ for some choices of $x,y,c$, it also holds if we take $y=y_0$ (as the determinant can only  increase). As $C(1)$ does not depend on $y$, we may therefore always take $y=y_0$. With this choice, $C(0)\geq 0$ if and only if
\begin{align}\label{C0}
 \tfrac{p+q}{2}+c&\geq 0, &|x| &\leq \sqrt{2r (\tfrac{p+q}{2}+c)}, & c^2&\leq qp.
\end{align}
(In the special case $\tfrac{p+q}{2}+c=0$, we have $C(0)\geq0$ only if $p=q$ and $x=0$, so $C(0)\geq 0$ iff $y^2\leq 4rp$. Hence we can take $y=y_0=0$, and this case is covered by \eqref{C0}.)

Next we observe that $C(1)\geq 0$ if and only if
\begin{align}\label{C1}
\left|c -\frac{\gamma}{2}\right|&\leq \frac{s}{2}, &
|\lambda-x| &\leq \sqrt{(1-2r)(\tfrac{s +\gamma}2-c)}.
\end{align}
For fixed $c$, the inequalities \eqref{C0} and \eqref{C1} force $x$ into an intersection of two intervals. By the triangle inequality, \eqref{C0} and \eqref{C1} hold for \emph{some} $x$, if and only if $\lambda\leq g(c)$ where
$$
g(c):= \sqrt{2r (\tfrac{p+q}{2}+c)}+\sqrt{(1-2r)(\tfrac{s +\gamma}2-c)},
$$
and the remaining constraints hold for $c$. These constraints are given by the following set:
\begin{align*}
\mathcal D &:= \left\{c\in \mathbb R\,\Big|\, \left|c-\tfrac{\gamma}{2}\right| \leq \tfrac{s}{2}, \, |c|\leq \sqrt{pq}\right\}.
\end{align*}
Hence, ${\bf q}\in \mathcal C_{C}$ if and only if $\lambda \leq g(c)$ for some $c\in \mathcal D$. Clearly, this is in turn equivalent to the following:
\begin{equation}\label{eqf}
\mathcal D\neq \emptyset \text{ and }\lambda \leq \max_{c\in \mathcal D}g(c).
\end{equation}
By the triangle inequality, $\mathcal D\neq \emptyset$ if and only if 
\begin{equation}\label{hellinger02}
\tfrac{\gamma-s}{2} \leq \sqrt{qp},
\end{equation}
in which case $\mathcal D$ is the interval
\begin{equation}\label{domain}
\mathcal D :=\left[\max\{-\sqrt{qp},\tfrac{\gamma-s}2\}, \min\left\{\sqrt{qp},\tfrac{\gamma+s}2 \right\}\right]
\end{equation}
(where the left boundary does not exceed the right). Hence, \eqref{hellinger02} is a \emph{necessary} (but not sufficient) condition for ${\bf q}\in \mathcal C_{C}$. In fact, it is one of the two Hellinger distance conditions given by Cor.\ 1 of the main text. We note that \eqref{hellinger02} does not depend on $r$, and let $\mathcal R$ denote the set of those $(p,q)\in \Delta$ for which it holds. Then
$$\mathcal R =\{(p,q)\in \Delta\mid w_-\leq 1-\gamma\},$$
where the function $w_-=w_-(p,q)$ was defined above. Next we note that the maximal domain of $g$ (where the square roots are defined) is $$\mathcal D_{\rm max}:=\left[-\tfrac12(p+q),\tfrac12(s+\gamma)\right],$$
which clearly contains $\mathcal D$ because $\sqrt{qp}\leq \tfrac 12(q+p)$. We readily find the global maximum point of $g$ within $\mathcal D_{\rm max}$: 
$$
2\frac{dg}{dc} = \sqrt{\tfrac{2r}{c+(p+q)/2}}-\sqrt{\tfrac{1-2r}{(s+\gamma)/2-c}},
$$
and hence $\frac{dg}{dc}\geq 0$ iff
$c\leq c_0$ where $c_0 = r(1+\gamma)-\tfrac12 (p+q)$.
Note that indeed $c_0\in \mathcal D_{\rm max}$, as
\begin{align*}
c_0+\tfrac{p+q}{2} &= r(1+\gamma)\geq 0, & \tfrac{s+\gamma}2 -c_0&=(\tfrac12-r)(1+\gamma)\geq 0.
\end{align*}
It follows that $g(c_0) = \sqrt{\tfrac12(1+\gamma)}$, so that $g(c_0)\geq \lambda$ automatically by the positivity of $C$, and hence $c_0\in \mathcal D$ is a \emph{sufficient} condition for ${\bf q}\in \mathcal C_{C}$. However, it is not a necessary condition, as $c_0$ may fall on either side of $\mathcal D$; in those cases, the maximum is attained at the boundary, and we obtain a constraint in terms of $\lambda$. 
In order to find it we consider the three possible cases for $\mathcal D$:
\begin{enumerate}
\item $\mathcal D_1 = [-\sqrt{qp}, \sqrt{qp}]$.
\item $\mathcal D_2 = [\tfrac{\gamma-s}{2}, \sqrt{qp}]$.
\item $\mathcal D_3 = [\tfrac{\gamma-s}{2},\tfrac{\gamma+s}{2}]$.
\end{enumerate}
(The case $\mathcal D =[-\sqrt{qp},\tfrac{\gamma+s}{2}]$ cannot occur as $\gamma\geq 0$.) Since $\mathcal D$ does not depend on $r$, these cases set up a unique partition of $\mathcal R$; we have $\mathcal R = \mathcal R_1\cup \mathcal R_2\cup\mathcal R_3$ where $\mathcal R_i:=\{(p,q)\in \mathcal R\mid \mathcal D=\mathcal D_i\}$. It is then easy to check that
\begin{align*}
\mathcal R_1 &=\{(p,q)\in \Delta\mid w_+ < 1-\gamma\},\\
\mathcal R_2 &= \{(p,q)\in \Delta\mid 1-\gamma\leq w_+ < 1+\gamma\}\cap \mathcal R,\\
\mathcal R_3 &= \{(p,q)\in \Delta \mid w_+\geq 1+\gamma\},
\end{align*}
where we have used the function $w_+=w_+(p,q)$ defined above.
Each of these cases then has three subcases (a)--(c) according to whether $c_0\leq \min \mathcal D$, $c_0\in \mathcal D$, or $c_0\geq \max  \mathcal D$, respectively. In order to describe them we also need the functions $w_0=w_0(p,q)$, and $l(w) :=\tfrac 12w/(1+\gamma)$. We observe that ${\bf q}\in \mathcal C_{C}$ if and only if $(q,p)$ falls into one of the following eight categories:
\begin{enumerate}
\item $(p,q)\in \mathcal R_1$ (implying $0\leq l(w_-)\leq l(w_+)\leq \tfrac 12$), and
\begin{enumerate}
\item $0\leq r\leq l(w_-)$ and $\lambda \leq g(-\sqrt{qp})$, or
\item $l(w_-)\leq r\leq l(w_+)$, or
\item $l(w_+)\leq r\leq \tfrac 12$ and $\lambda\leq g(\sqrt{qp})$.
\end{enumerate}
\item $(p,q)\in \mathcal R_2$ (implying $0\leq l(w_0)\leq l(w_+)\leq \tfrac 12$) and
\begin{enumerate}
\item $0\leq r\leq l(w_0)$ and $\lambda\leq g(\tfrac{\gamma-s}{2})$, or
\item $l(w_0)\leq r\leq l(w_+)$ or 
\item $l(w_+)\leq r\leq \tfrac 12$ and $\lambda\leq g(\sqrt{qp})$.
\end{enumerate}
\item $(p,q)\in \mathcal R_3$ (implying $0\leq l(w_0)\leq \tfrac 12$), and
\begin{enumerate}
\item $0\leq r\leq l(w_0)$ and $\lambda\leq g(\tfrac{\gamma-s}{2})$, or
\item $l(w_0)\leq r\leq \tfrac 12$.
\end{enumerate}
(In this case (c) does not occur.)
\end{enumerate}
We then notice that
\begin{align}
g(\pm \sqrt{qp}) &= \sqrt{rw_\pm} +\sqrt{(\tfrac 12-r)(1+\gamma - w_\pm)},\nonumber\\
g(\tfrac 12(\gamma-s)) &= \sqrt{rw_0} +\sqrt{(\tfrac 12-r)(1+\gamma - w_0)};\label{gcond}
\end{align}
hence in each case the condition involving $\lambda$ is of the form
\begin{equation}\label{sqcond}
\lambda \leq \sqrt{rw} +\sqrt{(\tfrac 12-r)(1+\gamma - w)},
\end{equation}
for $(r,w)\in [0,\tfrac 12]\times [0,1+\gamma]$. Equivalently,
\begin{equation}\label{sqcond2}
h_-(w) \leq r \leq h_+(w),
\end{equation}
where $h_\pm$ were introduced above. We observe that
\begin{equation}\label{bounds}
h_{-}(w)\leq l(w)\leq h_+(w)
\end{equation}
for all $(r,w)\in [0,\tfrac 12]\times [0,1+\gamma]$. (In fact, the region defined by \eqref{sqcond} is symmetric about the line $r=l(w)$.) Using \eqref{gcond}, \eqref{sqcond2}, and \eqref{bounds} we can put together subcases (a)--(c) in the above three cases: ${\bf q}\in \mathcal C_{C}$ if and only if one of the following conditions hold:
\begin{enumerate}
\item $(p,q)\in \mathcal R_1$ and $h_-(w_-)\leq r \leq h_+(w_+)$.
\item $(p,q)\in \mathcal R_2$ and $h_-(w_0)\leq r \leq h_+(w_+)$.
\item $(p,q)\in \mathcal R_3$ and $h_-(w_0)\leq r \leq \tfrac 12$.
\end{enumerate}
Noting that $w_+\geq 1+\gamma\geq 2\lambda^2$ when $(p,q)\in \mathcal R_3$, we have $h_+(w_+)=\tfrac 12$ when $(p,q)\in \mathcal R_3$, and hence the upper bound for $r$ is always $h_+(w_+)$. By the definition of $w_0^-$, the lower bound is $h_-(w_0^-)$, and the proof of the proposition is complete.

\section{Spin-boson model with $N=2$}\label{A:sbN2}

Here we present in detail the analytical solution of $\mathcal C_C^{\rm sym}$ for $N=2$ in the spin-boson model. We order the two-qubit incoherent basis in the usual way as $\{|00\rangle, |01\rangle, |10\rangle, |11\rangle\}$, with respective label set $\{1,2,3,4\}$, on which the symmetric group $S_4$ acts. In this basis our matrix $C[\lambda]$ of the dynamical GIO reads
\begin{align*}
C[\lambda]=\begin{pmatrix}
1 & \lambda & \lambda & \lambda^4\\
\lambda & 1 & 1 & \lambda\\
\lambda & 1 & 1 & \lambda\\
\lambda^4 & \lambda & \lambda & 1
\end{pmatrix}\,.
\end{align*}
The equivalence classes of maximal coherence are given by $I^C_0=\{1\}$, $I^C_1=\{2,3\}$, $I^C_2=\{4\}$, so $n_C=3$. These correspond to the eigenspaces of $S_z$ given by $\mathcal S_0 = {\rm span}\{|00\rangle\}$, $\mathcal S_1 = {\rm span}\{|01\rangle, |10\rangle\}$, and $\mathcal S_2 = {\rm span}\{|11\rangle\}$. Note that $\mathcal S_1$ is a nontrivial two-dimensional decoherence-free subspace, as we can see from the matrix. Now the canonical incoherent observable $\P^C$ has three outcomes $\{0,1,2\}$, and is given by
$\P^C(0)= |00\rangle\langle 00|$, $\P^C(1)= |10\rangle\langle 01|+|10\rangle\langle 01|$, $\P^C(2)=|11\rangle\langle 11|$, which is just the spectral decomposition of $S_z$ as mentioned in the main text. The line $\P^C_\alpha$ used to define the quantity $\alpha_C$ is given by $\P^C_\alpha(j) = \alpha \P^C(j) +(1-\alpha) {\rm tr}[\P^C(j)]\tfrac 14\id= \alpha \P^C(j) +(1-\alpha) \binom{2}{j}\tfrac 14\id$. Finally, the symmetry group $G_C$ leaving $C$ unchanged is the subgroup of $S_4$ generated by the within-class permutation $(23)$ (exchanging $2$ and $3$), and the order-reversal $(14)(23)$ , that is, 
$G_C=\{e, (23), (14)(23),(14)\}$. The task is to characterise the set $\mathcal C_C^{\rm sym}$.

We now carry out the reduction to dimension $n_C=3$. First, we have 
\begin{align*}
\tilde C[\lambda]=\begin{pmatrix}
1 & \lambda & \lambda^4\\
\lambda & 1 & \lambda\\
\lambda^4 & \lambda & 1
\end{pmatrix},
\end{align*}
with $G_{\tilde C}= \{e, (13)\}$. The homomorphism $\phi:G_C\to G_{\tilde C}$ defined by $\pi(I^C_k)=I^C_{\phi(\pi)(k)}$ maps as follows: $\phi((23)) =\phi(e) = e$, $\phi((14))=\phi((14)(23))=(13)$, so that $\phi(G_C) = \{e, (13)\}=G_{\tilde C}$, i.e. the within-class permutation is mapped to the identity, and the reversal carries over to the reduction. In this way we end up with the case considered above in Appendix \ref{A:centrod3} with $\gamma=\lambda^4$, so Prop.\ \ref{Cprop} gives the three-dimensional convex set $\mathcal C_C[\phi(G_C)]=\mathcal C_{C}^{\rm sym}$. Hence every $\P\in \mathcal C_C^{\rm sym}$ is of the form
\begin{align*}
\P(0) &=\begin{pmatrix}p & 0& 0 & 0\\
0& r & 0 & 0\\
0& 0 & r & 0\\
0&0& 0 & q
\end{pmatrix}, &
\P(1) &=\begin{pmatrix}s & 0& 0 & 0\\
0& 1-2r & 0 & 0\\
0& 0 & 1-2r & 0\\
0&0& 0 &  s 
\end{pmatrix}, \\
\P(2) &=\begin{pmatrix} q & 0& 0 & 0\\
0& r & 0&0\\
0& 0 & r&0\\
0&0& 0 & p
\end{pmatrix},
\end{align*}
where $(p,q,r)\in [0,1]$, $r\in [0,\tfrac 12]$ are such that $w_-(p,q)\leq 1-\lambda^4$, $r\in [h_-(w_0^-(p,q)),h_+(w_+(p,q))]$ (see Appendix \ref{A:centrod3}). As noted in the main text, $\mathcal C_C^{\rm sym}$ is therefore a convex set of affine dimension $3$ only depending on $\lambda$, and can conveniently be plotted in the parameterisation $(p,q,r)$, as shown in Fig.\ \ref{fig:spinboson1} in the main text. Furthermore, by substituting $\gamma =\lambda^4$ into Eq. \eqref{centroid} in Appendix \ref{A:centrod3} we immediately deduce that
$$
\alpha_{2}(\lambda) = 1-\frac{4\lambda^2}{3+\lambda^4+2\sqrt2(1-\lambda^2)},
$$
as claimed in the main text.

\section{Small coherence limit in the spin-boson model}\label{A:smallcoh}

Here we prove the asymptotic behaviour of the curve $\alpha = \alpha_N(\lambda)$ stated in Prop.\ \ref{asymptotic}. We first consider the upper bound $\alpha = U_N(\lambda)$, or, equivalently, $\lambda = \beta_{01}(\alpha)$, which is an explicit algebraic curve given by the Hellinger distance $d^2_{01}(\P^{\tilde C}_\alpha)$ corresponding to the dominant coherence $\lambda$ in $C[\lambda]$, as explained in the main text. Hence the asymptotic form is easily obtained: $\beta_{01}(\alpha) = k_N \sqrt{1-\alpha} + O(1-\alpha)$, where $k_N = 2^{-N/2}(1+\sqrt{N})$. Here we have used the customary notation where $g(\alpha) = O((1-\alpha)^k)$ means that the function $\alpha \mapsto |g(\alpha)|/(1-\alpha)^k$ is bounded on some neighbourhood of $\alpha=1$. Note that this does not require a convergent series expansion for $g$ at $\alpha=1$. Indeed, while such an expansion exists for $\beta_{01}(\alpha)$, the same is not clear for the exact curve $\alpha = \alpha_N(\lambda)$, which nevertheless turns out to have the same behaviour. In order to see this we find a lower bound with the same asymptotic behaviour. We denote $q_k(\alpha) = \binom{N}{k} \frac{1-\alpha}{2^N}$, for each $k=0,\ldots,N$, and $u_k(\alpha) = \sqrt{q_k(\alpha)(\alpha+q_k(\alpha))}$ for $k=0,1$. Then the upper bound reads $\beta_{01}(\alpha) = u_0(\alpha) + u_1(\alpha) +(1-\alpha)(1-\tfrac{1}{2^N}(1+N))$. It turns out that the first two terms form a lower bound:
\begin{appendixlemma}\label{lemG} $u_0(\alpha)+u_1(\alpha)\leq \alpha_N^{-1}(\alpha)$ for all $\alpha\in [0,1]$, where $\alpha_N^{-1}:[0,1]\to [0,1]$ is the inverse of the monotone function $\alpha_N$.
\end{appendixlemma}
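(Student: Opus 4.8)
The plan is to convert the inequality into a membership statement and then exhibit an explicit genuinely incoherent instrument (GII). Write $\mu=u_0(\alpha)+u_1(\alpha)$ and $q_j=q_j(\alpha)$. I first claim that $u_0(\alpha)+u_1(\alpha)\le\alpha_N^{-1}(\alpha)$ holds iff $\tilde\P^C_\alpha\in\mathcal C_{\tilde C[\mu]}$. Since each $\mathcal C_{\tilde C[\nu]}$ is convex, closed and contains the trivial observable $\tilde\P^C_0$, and $\tilde\P^C_\alpha=\tfrac{\alpha}{\alpha'}\tilde\P^C_{\alpha'}+(1-\tfrac{\alpha}{\alpha'})\tilde\P^C_0$ whenever $0\le\alpha\le\alpha'$ with $\alpha'>0$, the set of $\alpha$ for which $\tilde\P^C_\alpha\in\mathcal C_{\tilde C[\nu]}$ is exactly $[0,\alpha_N(\nu)]$; as $\alpha_N$ is the monotone function being inverted, this yields $\alpha_N^{-1}(\alpha)=\max\{\nu\in[0,1]:\tilde\P^C_\alpha\in\mathcal C_{\tilde C[\nu]}\}$, whence the claim. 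By Thm.~\ref{mainthm} it therefore suffices to produce, for every $\alpha\in[0,1]$, matrices $C(j)$, $j=0,\dots,N$, with $C(j)\ge0$, $\sum_jC(j)=\tilde C[\mu]$ (the reduced matrix \eqref{sbmatrix}), and $c_{kk}(j)=p^{\tilde\P^C_\alpha}_k(j)=\alpha\delta_{jk}+q_j$.

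I would take the GII rank one per outcome, $C(j)=|v_j\rangle\langle v_j|$ with $\langle k|v_j\rangle=\sqrt{\alpha\delta_{jk}+q_j}\,\omega_{jk}$, $|\omega_{jk}|=1$, $\omega_{jj}=1$; then $C(j)\ge0$, the marginal constraint $c_{kk}(j)=p^{\tilde\P^C_\alpha}_k(j)$, and the unit diagonal $\sum_j(\alpha\delta_{jk}+q_j)=1$ of $\sum_jC(j)$ all hold automatically, and the task reduces to choosing the phases so that
\begin{equation*}
u_k\,\overline{\omega_{kk'}}+u_{k'}\,\omega_{k'k}+\sum_{j\ne k,k'}q_j\,\omega_{jk}\,\overline{\omega_{jk'}}=\mu^{(k-k')^2}\qquad(k\ne k'),
\end{equation*}
using $u_m^2=q_m(\alpha+q_m)$. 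Note that the phase-free choice $\omega_{jk}\equiv1$ reproduces the canonical GII $C^{\tilde\P^C_\alpha}$ underlying Cor.~\ref{cohcor2} and Prop.~\ref{sbbounds}, whose coherence matrix has off-diagonal entries $\beta_{kk'}=\sum_j\sqrt{p^{\tilde\P^C_\alpha}_k(j)p^{\tilde\P^C_\alpha}_{k'}(j)}$ rather than $\mu^{(k-k')^2}$, and transporting it to $\tilde C[\mu]$ by Schur multiplication only recovers the weaker bound $L_N$; hence nontrivial phases are essential. The guiding heuristic is quadrature: $\nu^{m^2}=\int_0^{2\pi}\vartheta_3(\tfrac{\xi}{2},\nu)e^{im\xi}\tfrac{d\xi}{2\pi}$ with $\vartheta_3(\tfrac{\xi}{2},\nu)\ge0$, so $\tilde C[\nu]$ is a nonnegative superposition of plane-wave rank ones, and with $N+1$ outcomes one should realise it by a finite node set; taking $\omega_{jk}=e^{i\xi_j(k-j)}$ turns the left-hand side into a trigonometric sum, and the $\xi_j$ must be chosen so that the two ``spike'' terms — where the spike $\alpha\delta_{jk}$ in the moduli upgrades $q_k$ to $u_k$ — are exactly absorbed. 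The value $\mu=u_0(\alpha)+u_1(\alpha)$ is precisely the threshold for which this is possible: it equals $\beta_{01}$ minus its tail $\sum_{j\ge2}q_j$, and $(0,1)$ is (tied for) the smallest nearest-neighbour Bhattacharyya overlap, so the Cauchy--Schwarz bound $|\sum_jc_{01}(j)|\le\beta_{01}$ has slack exactly $\sum_{j\ge2}q_j$, forcing the full overlap of outcomes $0$ and $1$ to be used and thereby pinning the phases on those outcomes.

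The main obstacle is proving that these $\binom{N+1}{2}$ coupled Hermitian equations are simultaneously solvable for every $\alpha\in[0,1]$: the spikes break the clean quadrature valid for a purely noisy observable, so the node set must be chosen $\alpha$-dependently and the interference checked to collapse to $\mu^{(k-k')^2}$. I would first dispose of $\alpha=0$ (where $\tilde C[\mu]$ is the all-ones matrix and any nonnegative $v_j$ works) and $\alpha=1$ (where $\mu=0$, $\tilde C[0]=\id$, $v_j=|j\rangle$), then write the nodes explicitly and verify the identity by a direct computation using only $u_k^2=q_k(\alpha+q_k)$, $\sum_jq_j=1-\alpha$, and the theta-moment identity for $\mu^{m^2}$. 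Once Lemma \ref{lemG} is in hand, it combines with $\alpha_N^{-1}(\alpha)\le U_N^{-1}(\alpha)=\beta_{01}(\alpha)$ (from $\alpha_N\le U_N$ in Prop.~\ref{sbbounds}) and the shared leading behaviour $\beta_{01}(\alpha)=u_0(\alpha)+u_1(\alpha)+O(1-\alpha)=\tfrac{1+\sqrt N}{2^{N/2}}\sqrt{1-\alpha}+O(1-\alpha)$ to squeeze $\alpha_N^{-1}$ and establish Prop.~\ref{asymptotic}.
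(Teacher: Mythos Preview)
Your reduction is correct and matches the paper: the inequality is equivalent to $\tilde\P^C_\alpha\in\mathcal C_{\tilde C[\mu]}$ with $\mu=u_0(\alpha)+u_1(\alpha)$, and by Thm.~\ref{mainthm} this amounts to exhibiting a GII with the prescribed diagonal and sum $\tilde C[\mu]$. The problem is that you do not actually construct it. You propose rank-one blocks $C(j)=|v_j\rangle\langle v_j|$ with the plane-wave phase ansatz $\omega_{jk}=e^{i\xi_j(k-j)}$, but this leaves only $N+1$ real node parameters to satisfy $\binom{N+1}{2}$ complex constraints, and your own analysis shows the constraints depend on the pair $(k,k')$, not merely on $m=k-k'$, because of the spike terms $u_k,u_{k'}$. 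Already for $N=2$ this fails: equating the $(0,1)$ and $(1,2)$ constraints forces $\xi_0=\xi_2$ (since $q_0=q_2$), and the $(0,1)$ constraint then pins $\xi_0,\xi_1$; near $\alpha=1$ both tend to $0$, so the $(0,2)$ constraint would require $2u_0+q_1=\mu^4$, but the left side is $O(\sqrt{1-\alpha})$ while the right is $O((1-\alpha)^2)$. A more general phase choice gives a square system, but you give no argument that it is solvable, and ``write the nodes explicitly and verify'' is precisely the missing content.

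The paper proceeds quite differently and does \emph{not} use rank-one matrices. Each $C(j)$ is built as a full-rank matrix whose only off-diagonal content is a small $2\times2$ or $3\times3$ block centred at row $j$, carrying $u_0$ or $u_1$ (alternating with $j$) on the first sub/super-diagonal. Adjacent blocks overlap so that $C(j)+C(j+1)$ contributes $u_0+u_1=\lambda$ at the $(j,j+1)$ position; summing over $j$ reproduces the \emph{tridiagonal} truncation of $\tilde C[\lambda]$. The remaining higher-order entries $\lambda^{(k-k')^2}$, $|k-k'|\ge2$, are collected into a remainder $G=\tilde C[\lambda]-(\text{tridiagonal part})$ and absorbed wholesale into the middle matrix $C(j_0)$ (two middle matrices when $N$ is odd), whose positivity is then checked by writing $C(j_0)=(D-\epsilon\id)+(\epsilon\id+G)$ with $\epsilon\sim q_1$ and using diagonal dominance via $\sum_{k\ge2}\lambda^{k^2}\le\lambda^4/(1-\lambda)$. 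This is elementary but requires explicit $N$-dependent bookkeeping; the paper only writes it out for $N\ge5$. The upshot is that the key idea is not a clever phase/quadrature identity but a blockwise ``staircase'' construction that handles the dominant first-neighbour coherence exactly and treats the rest as a perturbation.
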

As a consequence, we obtain the following result, the first part of which is Prop.\ \ref{asymptotic} in the main text:
\begin{appendixproposition}\label{asympt} For any fixed $N$,
$$
\alpha_N^{-1}(\alpha) = k_N \sqrt{1-\alpha} + O(1-\alpha) \,\, \text{ as } \alpha\rightarrow 1.
$$
For each $\alpha\in [0,1]$ and $N$, we have the error bound
$$
|\alpha_N^{-1}(\alpha)-k_N \sqrt{1-\alpha}| \leq 1-\alpha +\tfrac 32 k_N \left(\tfrac{1-\alpha}{\alpha}\right)^{\frac 32}.
$$
\end{appendixproposition}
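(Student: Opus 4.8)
\emph{Proof proposal.} The plan is to trap $\alpha_N^{-1}(\alpha)$ between two explicit curves and then compare each with $k_N\sqrt{1-\alpha}$ using only elementary inequalities (the case $\alpha=0$ is trivial since the claimed bound is then infinite, so take $\alpha\in(0,1]$). The upper fence comes for free from Prop.~\ref{sbbounds}: the bound $\alpha_N(\lambda)\le U_N(\lambda)=[\beta_{01}]^{-1}(\lambda)$, applied with $\lambda=\alpha_N^{-1}(\alpha)$ (so $\alpha=\alpha_N(\lambda)$) and followed by the decreasing function $\beta_{01}$, gives $\alpha_N^{-1}(\alpha)\le\beta_{01}(\alpha)$. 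The lower fence is precisely Lemma~\ref{lemG}: $u_0(\alpha)+u_1(\alpha)\le\alpha_N^{-1}(\alpha)$. Since $\beta_{01}(\alpha)=u_0(\alpha)+u_1(\alpha)+(1-b_N)(1-\alpha)$ with $b_N:=2^{-N}(1+N)\in(0,1]$, both fences are controlled by the single quantity $u_0(\alpha)+u_1(\alpha)$, so it suffices to estimate that.

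To do so I would write $q_k(\alpha)=c_k^2(1-\alpha)$ with $c_k:=\sqrt{\binom{N}{k}2^{-N}}$, noting $c_0+c_1=2^{-N/2}(1+\sqrt{N})=k_N$ and $0\le c_k\le 1$ (since $N\le 2^N$). Then $u_k(\alpha)=\sqrt{q_k(\alpha+q_k)}=\sqrt{\alpha q_k}\,\sqrt{1+q_k/\alpha}$, and the two-sided estimate $1\le\sqrt{1+x}\le 1+x/2$, together with $\sqrt{\alpha(1-\alpha)}\le\sqrt{1-\alpha}$ and $\sqrt{\alpha(1-\alpha)}\ge\alpha\sqrt{1-\alpha}=\sqrt{1-\alpha}-(1-\alpha)^{3/2}$, gives after summing over $k=0,1$ and using $c_0^3+c_1^3\le c_0+c_1=k_N$:
\[
k_N\sqrt{1-\alpha}-k_N(1-\alpha)^{3/2}\ \le\ u_0(\alpha)+u_1(\alpha)\ \le\ k_N\sqrt{1-\alpha}+\tfrac{k_N}{2\sqrt{\alpha}}(1-\alpha)^{3/2}.
\]

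Assembling the pieces finishes the proof. For the upward deviation, $\alpha_N^{-1}(\alpha)\le\beta_{01}(\alpha)\le k_N\sqrt{1-\alpha}+\tfrac{k_N}{2\sqrt{\alpha}}(1-\alpha)^{3/2}+(1-b_N)(1-\alpha)$; since $\alpha\le 1$ forces $\tfrac{1}{2\sqrt{\alpha}}\le\tfrac{3}{2\alpha^{3/2}}$ and $1-b_N\le1$, the correction is at most $(1-\alpha)+\tfrac32 k_N(\tfrac{1-\alpha}{\alpha})^{3/2}$. For the downward deviation, $\alpha_N^{-1}(\alpha)\ge u_0(\alpha)+u_1(\alpha)\ge k_N\sqrt{1-\alpha}-k_N(1-\alpha)^{3/2}$, and $k_N(1-\alpha)^{3/2}\le k_N(\tfrac{1-\alpha}{\alpha})^{3/2}\le(1-\alpha)+\tfrac32 k_N(\tfrac{1-\alpha}{\alpha})^{3/2}$ since $\alpha\le 1$. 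This is exactly the stated error bound, and the asymptotic claim of Prop.~\ref{asymptotic} follows at once, since its right-hand side is $(1-\alpha)+O((1-\alpha)^{3/2})=O(1-\alpha)$ as $\alpha\to 1$.

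The genuinely hard part is not in this proposition but hidden in Lemma~\ref{lemG}, which I would prove separately by exhibiting, for $\beta:=u_0(\alpha)+u_1(\alpha)$, an explicit GII realising $\tilde\P^{C}_\alpha\in\mathcal{C}_{\tilde C[\beta]}$ via Thm.~\ref{mainthm} (and invoking monotonicity of $\alpha_N$ so that $\alpha_N^{-1}$ is well defined). Granting that lemma, the remaining work is pure bookkeeping: the only things to watch are the powers of $1-\alpha$ and $\alpha$, and the fact that the deliberately crude constants ($c_0^3+c_1^3\le k_N$, $1/\sqrt{\alpha}\le 1/\alpha^{3/2}$, $1-b_N\le1$) are slack enough to land on exactly the coefficients $1$ and $\tfrac32 k_N$ in the claimed bound.
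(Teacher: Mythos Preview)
Your proof is correct. Both you and the paper use exactly the same sandwich $u_0(\alpha)+u_1(\alpha)\le \alpha_N^{-1}(\alpha)\le \beta_{01}(\alpha)=u_0(\alpha)+u_1(\alpha)+(1-b_N)(1-\alpha)$, coming from Lemma~\ref{lemG} and Prop.~\ref{sbbounds}; the only difference is in how the deviation $|u_0+u_1-k_N\sqrt{1-\alpha}|$ is controlled. The paper sets $v=\sqrt{1-\alpha}$, writes $f(v)=u_0+u_1$ explicitly as $v\sqrt{1-r}\sqrt{1-rv^2}+v\sqrt{1-s}\sqrt{1-sv^2}$, and applies Taylor's theorem at $v=0$ with an explicit bound on $f''$ to obtain $|f(v)-k_Nv|\le \tfrac32 k_N v^3(1-v^2)^{-3/2}$. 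You instead factor $u_k=c_k\sqrt{\alpha(1-\alpha)}\sqrt{1+q_k/\alpha}$ and use the elementary two-sided bound $1\le\sqrt{1+x}\le 1+x/2$ together with $\alpha\le\sqrt\alpha\le 1$. Your route avoids any calculus and is a bit shorter; the paper's Taylor argument is more systematic and makes the origin of the factor $(1-v^2)^{-3/2}=\alpha^{-3/2}$ transparent. Both land on precisely the stated constants $1$ and $\tfrac32 k_N$, after the same deliberate slack (you via $c_0^3+c_1^3\le k_N$ and $\tfrac1{2\sqrt\alpha}\le\tfrac{3}{2\alpha^{3/2}}$; the paper via $r\sqrt{1-r}+s\sqrt{1-s}\le k_N$ inside the $f''$ estimate).
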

\begin{proof}
Using the bijection $\alpha\mapsto v= \sqrt{1-\alpha}$ we define
\begin{align*}
f(v) &:= u_0(\alpha) +u_1(\alpha) \\
&= v\sqrt{1-r}\sqrt{1-rv^2}+v\sqrt{1-s}\sqrt{1-sv^2},
\end{align*}
where $r=1-2^{-N}$ and $s=1-N2^{-N}$. Now
$\beta_{01}(\alpha) = f(v) + v^2(1-\tfrac{1}{2^N}(1+N))\leq f(v) +v^2$. Combined with Lemma \ref{lemG}, this gives $f(v) \leq \alpha_N^{-1}(\alpha)\leq f(v) + v^2$, that is, $|\alpha_N^{-1}(\alpha)-f(v)| \leq v^2$ for all $v\in [0,1]$. Since $f(0)=0$, Taylor's theorem gives
$f(v) = k_N v +z(v)$, where $|z(v)| \leq  \frac 12v^2\max_{0\leq \tilde v \leq v} |\tfrac{d^2f}{dv^2}(\tilde v)|$. Now 
\begin{align*}
\left|\frac{d^2f}{dv^2}(\tilde v)\right| &= \frac{r\sqrt{1-r}(3-2r\tilde v^2)\tilde v}{(1-r\tilde v^2)^{\frac 32}}+\frac{s\sqrt{1-s} (3-2s\tilde v^2)\tilde v}{(1-s\tilde v^2)^{\frac 32}}\\
&\leq \frac{3vr\sqrt{1-r} }{(1-rv^2)^{\frac 32}}+\frac{3vs\sqrt{1-s} }{(1-sv^2)^{\frac 32}} \leq \frac{3(1+\sqrt N)v}{2^{N/2}(1-v^2)^{\frac 32}},
\end{align*}
and hence $|z(v)|\leq \tfrac 32 k_N v^3(1-v^2)^{-\frac 32}$. Therefore,
\begin{align*}
|\alpha_N^{-1}(\alpha)-k_N v| &=|\alpha_N^{-1}(\alpha)- f(v)+z(v)|\\
&\leq |\alpha_N^{-1}(\alpha)- f(v)|+|z(v)|\\
&\leq v^2 +\tfrac 32 k_N v^3(1-v^2)^{-\frac 32}.
\end{align*}
Substituting $v=\sqrt{1-\alpha}$ yields the claim.
\end{proof}
Since $\alpha_N^{-1}(1)=0$, the validity of the approximation can be quantified by the relative error $\epsilon = |\alpha_N^{-1}(\alpha)-k_N v|/\alpha_N^{-1}(\alpha)$. Assuming that the second term in the error bound of the proposition is negligible for this consideration, we get the maximum relative error $\epsilon_{\rm max} \approx (k_N/\sqrt{1-\alpha}-1)^{-1}$. This suggests that the asymptotic form becomes valid around $\alpha \approx 1-c k_N^2$ where $c$ is a constant. For instance, $\alpha \approx 1-(k_N/24)^2$ corresponds to a maximum error of 4--5\%.

We now prove Lemma \ref{lemG} by constructing explicitly a GII satisfying Eqs. \eqref{ADcovC} and \eqref{GII0}, for $\lambda = u_0(\alpha)+u_1(\alpha)$ and all $\alpha\in [0,1]$. (This then shows that $\P^{C}_\alpha\in \mathcal C_C$ whenever $\alpha$ satisfies $\lambda \leq u_0(\alpha)+u_1(\alpha)$, and therefore the transition point $\alpha=\alpha_N(\lambda)$ must satisfy $\lambda\geq u_0(\alpha)+u_1(\alpha)$, giving the claimed inequality.)
For simplicity, we show the construction for $N\geq 5$ (for smaller $N$ it needs a few modifications). 

First define
$$
C(0) = \left(\begin{array}{c|c|c}
\begin{matrix} \alpha+q_0 & u_0 & 0\\
u_0 & q_0 & 0\\
0 & 0 & q_0
\end{matrix}
& \mathbf{0} & \mathbf{0}\\
\hline
{\bf 0} & \begin{matrix} q_0 &  & \\
 & \ddots & \\
 &  & q_0
\end{matrix} & {\bf 0}\\
\hline
\mathbf{0} & \mathbf{0} & \begin{matrix} q_0 & 0 & 0\\
0 & q_0 & 0\\
0 & 0 & q_0
\end{matrix}
\end{array}\right)
$$
and
$$
C(1) = \left(\begin{array}{c|c|c}
\begin{matrix} q_1 & u_1 & q_1\\
u_1 & \alpha +q_1 & u_1\\
q_1 & u_1 & q_1
\end{matrix}
& \mathbf{0} & \mathbf{0}\\
\hline
{\bf 0} & \begin{matrix} q_1 &  & \\
 & \ddots & \\
 &  & q_1
\end{matrix} & {\bf 0}\\
\hline
\mathbf{0} & \mathbf{0} & \begin{matrix} q_1 & 0 & -q_1\\
0 & q_1 & 0\\
-q_1 & 0 & q_1
\end{matrix}
\end{array}\right).
$$
These matrices are clearly positive semidefinite. (If $N=5$ the central block is empty.) We then set $C(N) := U_{\pi_0}^\dagger C(0)U_{\pi_0}$, $C(N-1):=U_{\pi_0}^\dagger C(1)U_{\pi_0}$, where $U_{\pi_0}$ is the permutation matrix for the order-reversal $\pi_{0}$. Recalling that $A\mapsto U_{\pi_0}^\dagger AU_{\pi_0}$ transposes the matrix along the counter-diagonal, we can easily check that $C(0)+C(1)+C(N-1)+C(N)$ reads
$$
\left(\begin{array}{c|c|c}
\begin{matrix} \alpha+r_1 &\lambda & 0\\
\lambda & \alpha +r_1 & u_1\\
0 & u_1 & r_1
\end{matrix}
& \mathbf{0} & \mathbf{0}\\
\hline
{\bf 0} & \begin{matrix} r_1 &  & \\
 & \ddots & \\
 &  & r_1
\end{matrix} & {\bf 0}\\
\hline
\mathbf{0} & \mathbf{0} & \begin{matrix} r_1 & u_1 & 0\\
u_1 & \alpha+r_1 & \lambda\\
0 & \lambda & \alpha+r_1
\end{matrix}
\end{array}\right)
$$
where $r_k = 2\sum_{j=0}^k q_j$. Note that $-q_1$ on the lower right block of $C(1)$ cancels out the $q_1$ on the upper left block, and the overlap between $C(0)$ and $C(1)$ produces $\lambda$ on both ends of the diagonal $(n,n+1)$.

Next, let $j_0=N/2$ if $N$ is even, and $j_0=(N-1)/2$ if $N$ is odd, and define
\begin{align*}
M_{j}(w) &= \begin{pmatrix} q_j & w & 0\\
w & \alpha +q_j & w\\
0 & w & q_j
\end{pmatrix}
\end{align*}
for $j=2,\ldots, j_0$ and $w\in [0,1]$. Now $M_j(w)\geq 0$ if and only if $2w^2\leq q_j(q_j+\alpha)$, which is clearly the case for $w=u_0$ when $j\geq 2$, and for $w=u_1$ when $j\geq 3$, because $q_j$ is increasing in $j$ (up to $j_0$), and $q_2\geq 2q_0$, $q_3\geq 2q_1$. (Note that the same is not true for $j=0,1$, which is why we defined $C(0)$ and $C(1)$ differently above.) Then let
$$
V_j(w) = \begin{pmatrix} q_k\id_{j-1} & {\bf 0} & {\bf 0}\\ {\bf 0} & M_j(w) & {\bf 0}\\ {\bf 0} & {\bf 0} & q_k\id_{N-j-1}\end{pmatrix}
$$
for each $j=2,\ldots, j_0$.

We now consider even and odd $N$ separately.

If $N$ is even, $j_0$ is the ``middle'' point of $\{0,\ldots, N\}$, and we can set up $C(2), \ldots, C(j_0-1)$ as follows: $C(j):= V_j(u_1)$ if $j$ is odd, and $C(j) :=V_j(u_0)$ if $j$ is even. Notice that the block $M_j(w)$ moves down the diagonal as $j$ increases, with $w$ alternating between $u_1$ and $u_0$. These blocks only overlap between neighbouring matrices $C(j)$, $C(j+1)$, and hence the sum $\sum_j C(j)$ has $\lambda=u_0+u_1$ on each $(k,k-1)$-entry for $k=0,\ldots, j_0-1$. We then set $C(j_0+j) := U_{\pi_0}^\dagger C(j_0-j)U_{\pi_0}$ for $j=1,\ldots,j_0-2$, to satisfy the symmetry. Finally, we define the middle element by $C(j_0):=D + G$ where $D$ follows the above pattern, that is, $D:=V_{j_0}(u_1)$ ($V_{j_0}(u_0)$) if $j_0$ is odd (even), $G:= C[\lambda]-\tilde C$, and $\tilde C$ denotes the truncation of $C[\lambda]$ to second order in $\lambda$, so that $\tilde C$ is a tridiagonal matrix. Now $\sum_{j\neq j_0} C(j) +D = \tilde C$, and the remainder $G$ is included in $C(j_0)$ so that $\sum_j C(j) =C[\lambda]$. Note also that $C(j_0)$ satisfies $U_{\pi_0}^\dagger C(j_0)U_{\pi_0} = C(j_0)$ as $D$ has the block $M_{j_0}(w)$ exactly at the centre. We are left to prove that $C(j_0)\geq 0$. To do this we write
$$C(j_0) = D+ G = (D-\epsilon \id) + (\epsilon \id + G),$$ where we pick $\epsilon>0$ small enough so that $D-\epsilon \id\geq 0$, but large enough to make $\epsilon \id + G\geq 0$. We can take, for instance, $\epsilon = \tfrac 23 q_1$. In fact, first note that $\lambda=u_0+u_1\leq k_N\sqrt{1-\alpha}$, and $\lambda\leq 0.305$ for all $\alpha\in [0,1]$, $N\geq 5$. Also, $k_N^2\leq 2.1 N/2^N$ for $N\geq 5$ so
\begin{align*}
\frac 12\sum_{j\neq i} G_{ij} &\leq \sum_{k=2}^\infty \lambda^{k^2}\leq \lambda^4\sum_{k=0}^\infty \lambda^k= \frac{\lambda^4}{1-\lambda}\leq 0.14 \lambda^2\\
&\leq 0.14k_N^2 (1-\alpha) \leq 0.14*2.1 q_1\leq \frac 13 q_1,
\end{align*}
and hence $\epsilon \id + G\geq 0$ when $N\geq 5$ by diagonal dominance. Moreover,  $M_{j_0}(w)-\epsilon\id\geq 0$ since for $N\geq 6$,%any $\epsilon =O((1-\alpha)^2)$, we obtain
\begin{align*}
&(q_{j_0}-\epsilon)(\alpha+q_{j_0}-\epsilon) -2w^2\\
&\geq (q_{3}-\epsilon)(\alpha+q_{3}-\epsilon) -2q_1(\alpha +q_1)\\
& \geq (q_3-2q_1-\epsilon)(\alpha+q_3)+2q_1(q_3-q_1)-q_3\epsilon\\
& = (q_3-\tfrac 83q_1)(\alpha+q_3)+\tfrac 43q_1(q_3-\tfrac 32q_1)\geq 0,
\end{align*}
where $q_3-\tfrac83 q_1 = \tfrac N6((N-1)(N-2)-16)\tfrac{1-\alpha}{2^N}\geq 0$ as $N\geq 6$. (Note that the $2\times2$ principal minors of $M_{j_0}(w)$ are then automatically positive.) Hence $D-\epsilon\id\geq 0$. This completes the construction for even $N$.

If $N$ is odd, we have two middle points $j_0$ and $j_0+1$. We now define $C(2), \ldots, C(j_0-1)$ as above, and again set $C(j_0+1+j) := U_{\pi_0}^\dagger C(j_0-j)U_{\pi_0}$, for $j=1,\ldots,j_0-2$. The sum of these matrices coincides with $\tilde C$ everywhere except in the $4\times 4$ block at the centre of the matrix, and on the main diagonal. The two remaining matrices have to be set up separately. Let $w=u_1$ if $j_0$ is odd, and $w=u_0$ if it is even. We first define $D$ as the $(N+1)\times (N+1)$ matrix having  $q_{j_0}$ on the main diagonal outside the central $4\times 4$ block,
\begin{equation}\label{eqB}
\begin{pmatrix}
q_{j_0} & w & 0 & 0\\
w & \alpha +q_{j_0} & \frac \lambda 2 & 0\\
0 & \frac \lambda 2 & q_{j_0} & 0\\
0 & 0& 0& q_{j_0}
\end{pmatrix},
\end{equation}
and the remaining elements zero. Now $D\geq 0$ due to $q_{j_0}\geq q_{2} \geq 2q_1$, which holds as $N\geq 5$. Clearly, the sum $D + U_{\pi_0}^\dagger DU_{\pi_0}$ has $2q_{j_0}=q_{j_0}+q_{j_0+1}$ on the main diagonal outside the central block, which reads
$$
\begin{pmatrix}
2q_{j_0} & w & 0 & 0\\
w & \alpha +2q_{j_0} & \lambda & 0\\
0 & \lambda & \alpha + 2q_{j_0} & w\\
0 & 0& w& 2q_{j_0}
\end{pmatrix}.
$$
Added to the previously constructed $C(j)$, this produces $\tilde C$. Now define $C(j_0):= D+\frac 12G$ and $C(j_0+1)=U_{\pi_0}^\dagger DU_{\pi_0}+\frac 12 G$, where $G$ is as before. Then $C(j_0+1) = U_{\pi_0}^\dagger C(j_0)U_{\pi_0}$ and $\sum_{j=0}^N C(j)=C[\lambda]$. As in the even case, we establish that $C(j_0) \geq 0$; we write $$C(j_0) = (D-\tfrac \epsilon2 \id) + \tfrac 12(\epsilon \id + G),$$ with the same $\epsilon$ as before, so the second term is positive for $N\geq 5$. Using $\tfrac 14\lambda^2 \leq \frac 12(u_0^2+u_1^2)$ we get
\begin{align*}
&(q_{j_0}-\tfrac \epsilon2)(\alpha+q_{j_0}-\tfrac \epsilon2) - (\lambda/2)^2-w^2\\
& \geq (q_{2}-\tfrac \epsilon2)(\alpha+q_{2}-\tfrac \epsilon2) - \tfrac 12(u_0^2+u_1^2)-q_1(\alpha +q_1)\\
& = (q_{2}-\tfrac 13q_1)(\alpha+q_{2}-\tfrac 13q_1) - \tfrac 12q_0(\alpha+q_0)-\tfrac 32q_1(\alpha +q_1)\\
& = (q_2-\tfrac{11}{6} q_1-\tfrac 12 q_0)\alpha +(q_2-\tfrac 13 q_1)^2-\tfrac 12 q_0^2-\tfrac 32 q_1^2\\
&= (\tfrac {N^2-1}{2}-\tfrac {11N}{6})\tfrac{\alpha(1-\alpha)}{2^N} + ((\tfrac{N^2}{2}-\tfrac{5N}{6})^2-\tfrac{1+3N^2}{2})\tfrac{(1-\alpha)^2}{2^{2N}}\\
&\geq 0 \quad \text{ for } N\geq 5,
\end{align*}
which implies that $D-\tfrac \epsilon2\id\geq 0$. This completes the proof for the odd case, and the proof of Lemma \ref{lemG} is complete.

\end{document}